\newtheorem{theorem}{Theorem}
\newtheorem{corollary}[theorem]{Corollary}
\newtheorem{lemma}{Lemma}
\newtheorem{proposition}[theorem]{Proposition}
\numberwithin{equation}{section}
\numberwithin{lemma}{section}
\numberwithin{theorem}{section}
\newenvironment{proof}[1][Proof]{\noindent \textbf{#1.} }{\  \rule{0.5em}{0.5em}}
\begin{document}

\title{{\huge Diffusion Copulas: Identification and Estimation}\\
\bigskip \bigskip }
\author{{\Large Ruijun Bu}\thanks{%
Department of Economics, Management School, University of Liverpool,
Liverpool, UK. Email: ruijunbu@liv.ac.uk.} \and {\Large Kaddour Hadri}%
\thanks{%
Queen's University Management School, Queen's University Belfast, Belfast,
Northern Ireland, UK. Email: k.hadri@qub.ac.uk.} \and {\Large Dennis
Kristensen}\thanks{%
Department of Economics, University College London, London, UK. Email:
d.kristensen@ucl.ac.uk. } \and \bigskip \bigskip \bigskip \bigskip }
\date{April 2020}
\maketitle

\begin{abstract}
We propose a new semiparametric approach for modelling nonlinear univariate
diffusions, where the observed process is a nonparametric transformation of
an underlying parametric diffusion (UPD). This modelling strategy yields a
general class of semiparametric Markov diffusion models with parametric
dynamic copulas and nonparametric marginal distributions. We provide
primitive conditions for the identification of the UPD parameters together
with the unknown transformations from discrete samples. Likelihood-based
estimators of both parametric and nonparametric components are developed and
we analyze the asymptotic properties of these. Kernel-based drift and
diffusion estimators are also proposed and shown to be normally distributed
in large samples. A simulation study investigates the finite sample
performance of our estimators in the context of modelling US short-term
interest rates. We also present a simple application of the proposed method
for modelling the CBOE volatility index data.

\bigskip

\noindent {\small JEL Classification: C14, C22, C32, C58, G12}

\noindent {\small Keywords: Continuous-time model; diffusion process;
copula; transformation model; identification; nonparametric; semiparametric;
maximum likelihood; sieve; kernel smoothing.}

\bigskip \bigskip
\end{abstract}

\pagebreak

\section{Introduction}

Most financial time series have fat tails that standard parametric models
are not able to generate. One forceful argument for this in the context of
diffusion models was provided by A\"{\i}t-Sahalia (1996b) who tested a range
of parametric models against a nonparametric alternative and found that most
standard models were inconsistent with observed features in data.

One popular semiparametric approach that allows for more flexibility in
terms of marginal distributions, and so allowing for fat tails, is to use
the so-called copula models, where the copula is parametric and the marginal
distribution is left unspecified (nonparametric). Joe (1997) showed how
bivariate parametric copulas could be used to model discrete-time stationary
Markov chains with flexible, nonparametric marginal distributions. The
resulting class of semiparametric models are relatively easy to estimate;
see, e.g. Chen and Fan (2006). However, most parametric copulas known in the
literature have been derived in a cross-sectional setting where they have
been used to describe the joint dependence between two random variables with
known joint distribution, e.g. a bivariate $t$-distribution. As such,
existing parametric copulas may be difficult to interpret in terms of the
dynamics they imply when used to model Markov processes. This in turn means
that applied researchers may find it difficult to choose an appropriate
copula for a given time series.

One could have hoped that copulas with a clearer dynamic interpretation
could be developed by starting with an underlying parametric Markov model
and then deriving its implied copula. This approach is unfortunately
hindered by the fact that the stationary distributions of general Markov
chains are not available on closed-form and so their implied dynamic copulas
are not available on closed form either. This complicates both the
theoretical analysis (such as establishing identification) and the practical
implementation of such models.

An alternative approach to modelling fat tails using Markov diffusions is to
specify flexible forms\ for the so-called drift and diffusion term. Such
non-linear features tend to generate fat tails in the marginal distribution
of the process. This approach has been widely used to, for example, model
short-term interest rates; see, e.g., A\"{\i}t-Sahalia (1996a,b), Conley et
al. (1997), Stanton (1997), Ahn and Gao (1999) and Bandi (2002). These
models tend to either be heavily parameterized or involve nonparametric
estimators that suffer from low precision in small and moderate samples.

We here propose a novel class of dynamic copulas that resolves the
above-mentioned issues: We show how copulas can easily be generated from
parametric diffusion processes. The copulas have a clear interpretation in
terms of dynamics since they are constructed from an underlying dynamic
continuous-time process. At the same time, a given copula-based diffusion
can exhibit strong non-linearities in its drift and diffusion term even if
the underlying copula\ is derived from, for example, a linear model.
Furthermore, primitive conditions for identification of the parameters are
derived; and this despite the fact that the copulas are implicit. Finally,
the models can easily be implemented in practice using existing numerical
methods for parametric diffusion processes. This in turn implies that
estimators are easy to compute and do not involve any smoothing parameters;
this is in contrast to existing semi- and nonparametric estimators of
diffusion models.

The starting point of our analysis is to show that there is a one-to-one
correspondence between any given semiparametric Markov copula model and a
model where we observe a nonparametric transformation of an underlying
parametric Markov process. We then restrict attention to parametric Markov
diffusion processes which we refer to as underlying parametric diffusions
(UPD's). Copulas generated from a given UPD has a clear interpretation in
terms of dynamic properties. In particular, standard results from the
literature on diffusion models can be employed to establish mixing
properties and existence of moments for a given model; see, e.g. Chen et al.
(2010). Moreover, we are able to derive primitive conditions for the
parameters of the copula to be identified together with the unknown
transformation.

Once identification has been established, estimation of our copula diffusion
models based on a discretely sampled process proceeds as in the
discrete-time case. One can either estimate the model using a one-step or
two-step procedure: In the one-step procedure, the marginal distribution and
the parameters of the UPD are estimated jointly by sieve-maximum likelihood
methods as advocated by Chen, Wu and Yi (2009). In the two-step approach,
the marginal distribution is first estimated by the empirical cdf, which in
turn is plugged into the likelihood function of the model. This is then
maximized with respect to the parameters of the UPD. We provide an
asymptotic theory for both cases by importing results from Chen, Wu and Yi
(2009) and Chen and Fan (2006), respectively. In particular, we provide
primitive conditions for their high-level assumptions to hold in our
diffusion setting. The resulting asymptotic theory shows $\sqrt{n}$%
-asymptotic normality of the parametric components. Given the estimates of
parametric component, one can obtain semiparametric estimates of the drift
and diffusion functions and we also provide an asymptotic theory for these.

Our modelling strategy has parametric ascendants: Bu et al. (2011), Eraker
and Wang (2015) and Forman and S\o rensen (2014) considered parametric
transformations of UPDs for modelling short-term interest rates, variance
risk premia and molecular dynamics, respectively. We here provide a more
flexible class of models relative to theirs since we leave the
transformation unspecified. At the same time, all the attractive properties
of their models remain valid: The transition density of the observed process
is induced by the UPD and so the estimation of copula-based diffusion models
is computationally simple. Moreover, copula diffusion models can furthermore
be easily employed in asset pricing applications since (conditional) moments
are easily computed using the specification of the UPD. Finally, none of
these papers fully addresses the identification issue and so our
identification results are also helpful in their setting.

There are also similarities between our approach and the one pursued in A%
\"{\i}t-Sahalia (1996a) and Kristensen (2010). They developed two classes of
semiparametric diffusion models where either the drift or the diffusion term
is specified parametrically and the remaining term is left unspecified. The
remaining term is then recovered by using the triangular link between the
marginal distribution, the drift and the diffusion terms that exist for
stationary diffusions. In this way, the marginal distribution implicitly
ties down the dynamics of the observed diffusion process. Unfortunately, it
is very difficult to interpret the dynamic properties of the resulting
semiparametric diffusion model. In contrast, in our setting, the UPD alone
ties down the dynamics of the observed diffusion and so these are much
better understood. The estimation of copula diffusions are also less
computationally burdensome compared to the Pseudo Maximum Likelihood
Estimator (PMLE) proposed in Kristensen (2010).

The remainder of this paper is organized as follows. Section \ref{Sec_Model}
outlines our semiparametric modelling strategy. Section \ref%
{Sec_Identification} investigates the identification issue of our model. In
Section \ref{Sec_Inference}, we discuss the estimators of our model while
Section \ref{sec: asymp theory} investigates their asymptotic properties.
Section \ref{Sec_Simulation} presents a simulation study to examine the
finite sample performance of our estimators. In Section \ref{Sec_Application}%
, we consider a simple empirical application. Some concluding remarks are
given in Section \ref{Sec_Conclusion}. All proofs and lemmas are collected
in Appendices.

\section{Copula-Based Diffusion Models\label{Sec_Model}}

\subsection{Framework\label{SubSec_Model}}

Consider a continuous-time process $Y=\left\{ Y_{t}:t\geq 0\right\} $ with
domain $\mathcal{Y}=\left( y_{l},y_{r}\right) $, where $-\infty \leq
y_{l}<y_{r}\leq +\infty $. We assume that $Y$ satisfies%
\begin{equation}
Y_{t}=V\left( X_{t}\right) ,  \label{eq: Y def}
\end{equation}%
where $V:\mathcal{X\mapsto }\mathcal{Y}$ is a smooth monotonic univariate
function and $X=\left\{ X_{t}:t\geq 0\right\} $ solves the following
parametric SDE:%
\begin{equation}
dX_{t}=\mu _{X}\left( X_{t};\theta \right) dt+\sigma _{X}\left( X_{t};\theta
\right) dW_{t}.  \label{UPD_x}
\end{equation}%
Here, $\mu _{X}\left( x;\theta \right) $ and $\sigma _{X}^{2}\left( x;\theta
\right) \ $are scalar functions that are known up to some unknown parameter
vector $\theta \in \Theta $, where $\Theta $ is the parameter space, while $%
W $ is a standard Brownian motion. We call $X$ the underlying parametric
diffusion (UPD) and let $\mathcal{X}=\left( x_{l},x_{r}\right) $, $-\infty
\leq x_{l}<x_{r}\leq +\infty $, denote its domain.

We call $Y$ a \emph{copula-based diffusion} since its dynamics are
determined by the implied (dynamic) copula of the UPD $X$, as we will
explain below. Given a discrete sample of $Y$, $Y_{i\Delta }$, $i=0,1,\ldots
,n$, where $\Delta >0$ denotes the time distance between observations, we
are then interested in drawing inference regarding the parameter $\theta $
and the function $V$. Note here that we only observe $Y$ while $X$ remains
unobserved since we leave $V$ unspecified (unknown to us). For convenience,
we collect the unknown component in the \textit{structure} $\mathcal{S}%
\equiv \left( \theta ,V\right) $.

The above class of models allows for added flexibility through the
transformation $V$ which we treat as a nonparametric object that we wish to
estimate together with $\theta $. By allowing for a broad nonparametric
class of transformations $V$, our model is richer and more flexible compared
to the fully parametric case with known or parametric specifications of $V$.
In particular, as we shall see, any given member of the above class of
models is able to completely match the marginal distribution of any given
time series.

We will require that the underlying Markov process $X$ sampled at $i\Delta $%
, $i=1,2,...$, possesses a transition density $p_{X}\left( x|x_{0};\theta
\right) $,%
\begin{equation}
\Pr \left( X_{\Delta }\in \mathcal{A}|X_{0}=x_{0}\right) =\int_{\mathcal{A}%
}p_{X}\left( x|x_{0};\theta \right) dx,\text{ \ \ }\mathcal{A}\subseteq 
\mathcal{X}.  \label{eq: p_X def}
\end{equation}%
Moreover, some of our results require $X$ to be recurrent, a property which
can be stated in terms of the so-called scale density and scale measure.
These are defined as 
\begin{equation}
s\left( x;\theta \right) :=\exp \left\{ -\int_{x^{\ast }}^{x}\frac{2\mu
_{X}\left( z;\theta \right) }{\sigma _{X}^{2}\left( z;\theta \right) }%
dz\right\} \text{ and }S\left( x;\theta \right) :=\int_{x^{\ast
}}^{x}s\left( z;\theta \right) dz  \label{eq: scale density}
\end{equation}%
for some $x^{\ast }\in \mathcal{X}$. We then impose the following:

\begin{description}
\item[Assumption 2.1.] (i) $\mu _{X}\left( \cdot ;\theta \right) $ and $%
\sigma _{X}^{2}\left( \cdot ;\theta \right) >0$ are twice continuously
differentiable; (ii) the scale measure satisfies $S\left( x;\theta \right)
\rightarrow -\infty \ $($+\infty $) as$\ x\rightarrow x_{l}$ ($x_{r}$);
(iii) $\xi \left( \theta \right) =\int_{\mathcal{X}}\left\{ \sigma
_{X}^{2}\left( x;\theta \right) s\left( x;\theta \right) \right\}
^{-1}dx<\infty $.

\item[Assumption 2.2.] The transformation $V$ is strictly increasing with
inverse $U=V^{-1}$, i.e., $y=V\left( x\right) \Leftrightarrow x=U\left(
y\right) $, and is twice continuously differentiable.
\end{description}

\medskip

Assumption 2.1(i) provides primitive conditions for a solution to eq. (\ref%
{UPD_x}) to exist and for the transition density $p_{X}\left( x|x_{0};\theta
\right) $ to be well-defined, while Assumption 2.1(ii) implies that this
solution is positive recurrent; see Bandi and Phillips (2003), Karatzas and
Shreve (1991, Section 5.5) and McKean (1969, Section 5) for more details.
Assumption 2.1(iii) strengthens the recurrence property to stationarity and
ergodicity in which case the stationary marginal density of $X$\ takes the
form%
\begin{equation}
f_{X}\left( x;\theta \right) =\frac{\xi \left( \theta \right) }{\sigma
_{X}^{2}\left( x;\theta \right) s\left( x;\theta \right) },  \label{mpdf_x}
\end{equation}%
where $\xi \left( \theta \right) $ was defined in Assumption 2.1(iii).
However, stationarity will not be required for all our results to hold; in
particular, some of our identification results and proposed estimators do
not rely on stationarity. This is in contrast to the existing literature on
dynamic copula models where stationarity is a maintained assumption.

Assumption 2.2 requires $V$ to be strictly increasing; this is a testable
restriction under the remaining assumptions introduced below which ensures
identification: Suppose that indeed $V$ is strictly decreasing; we then have 
$Y_{t}=\bar{V}\left( \bar{X}_{t}\right) $, where $\bar{V}\left( x\right)
=V\left( -x\right) $ is increasing and $\bar{X}_{t}=-X_{t}$ has dynamics $%
p_{X}\left( -x|-x_{0};\theta \right) $. Assuming that the chosen UPD
satisfies $p_{X}\left( -x|-x_{0};\theta \right) \neq p_{X}(x|x_{0};\tilde{%
\theta})$ for $\theta \neq \tilde{\theta}$, we can test whether $V$ indeed
is decreasing or increasing.

The smoothness condition on $V$ is imposed so that we can employ Ito's Lemma
on the transformation to obtain that the continuous-time dynamics of $Y$ can
be written in terms of $\mathcal{S}$ as%
\begin{equation*}
dY_{t}=\mu _{Y}\left( Y_{t};\mathcal{S}\right) dt+\sigma _{Y}\left( Y_{t};%
\mathcal{S}\right) dW_{t},
\end{equation*}%
with%
\begin{eqnarray}
\mu _{Y}\left( y;\mathcal{S}\right) &=&\frac{\mu _{X}\left( U\left( y\right)
;\theta \right) }{U^{\prime }\left( y\right) }-\frac{1}{2}\sigma
_{X}^{2}\left( U\left( y\right) ;\theta \right) \frac{U^{\prime \prime
}\left( y\right) }{U^{\prime }\left( y\right) ^{3}},  \label{RDdrift} \\
\sigma _{Y}\left( y;\mathcal{S}\right) &=&\frac{\sigma _{X}\left( U\left(
y\right) ;\theta \right) }{U^{\prime }\left( y\right) },  \label{RDdiffusion}
\end{eqnarray}%
where we have used that, with $U^{\prime }\left( y\right) $ and $U^{\prime
\prime }\left( y\right) $ denoting the first two derivatives of $U\left(
y\right) $, $V^{\prime }\left( U\left( y\right) \right) =1/U^{\prime }\left(
y\right) $ and $V^{\prime \prime }\left( U\left( y\right) \right)
=-U^{\prime \prime }\left( y\right) /U^{\prime }\left( y\right) ^{3}$. In
particular, $Y$ is a Markov diffusion process. As can be seen from the above
expressions, the dynamics of $Y$, as characterized by $\mu _{Y}$ and $\sigma
_{Y}^{2}$, may appear quite complex with $U$ potentially generating
nonlinearities in both the drift and diffusion terms even if $\mu _{X}$ and $%
\sigma _{X}^{2}$ are linear. We demonstrate this feature in the subsequent
subsection where we present examples of simple UPD's are able to generate
non-linear shapes of $\mu _{Y}$\ and $\sigma _{Y}^{2}$\ via the non-linear
transformation $V$.\textbf{\ }At the same time, if we transform $Y$ by $U$
we recover the dynamics of the UPD. As a consequence, the transition density
of the discretely sampled process $Y_{i\Delta }$, $i=0,1,2,...$, can be
expressed in terms of the one of $X$ as%
\begin{equation}
p_{Y}\left( y|y_{0};\mathcal{S}\right) =U^{\prime }\left( y\right)
p_{X}\left( U\left( y\right) |U\left( y_{0}\right) ;\theta \right) ,
\label{tpdf_y(x)}
\end{equation}%
using standard results for densities of invertible transformations. By
similar arguments, the stationary density of $Y$ satisfies 
\begin{equation}
f_{Y}\left( y;\mathcal{S}\right) =U^{\prime }\left( y\right) f_{X}\left(
U\left( y\right) ;\theta \right) ,  \label{mpdf_y(x)}
\end{equation}%
which shows that any choice for UPD is able to fully adapt to any given
marginal density of $Y$ due to the nonparametric nature of $U$.

The above expressions also highlights the following additional theoretical
and practical advantages of our modelling strategy: First, for a given
choice of $U$, we can easily compute $p_{Y}\left( y|y_{0};\mathcal{S}\right) 
$ and $f_{Y}\left( y;\mathcal{S}\right) $ since computation of parametric
transition densities and stationary densities of diffusion models is in
general straightforward, even if they are not available on closed form.
Second, $Y$ inherits all its dynamic properties from $X$; and in the
modelling of $X$, we can rely on a large literature on parametric modelling
of diffusion models. Formally, we have the following straightforward results
adopted from Forman and S\o rensen (2014).

\begin{proposition}
\label{Th: Y prop}Suppose that Assumptions 2.1(i)--(ii) and 2.2 hold. Then
the following results hold for the model (\ref{eq: Y def})-(\ref{UPD_x}):

\begin{enumerate}
\item If Assumption 2.1(iii) hold, then $X$ is stationary and ergodic and so
is $Y$.

\item The mixing coefficients of $X$ and $Y$ coincide.

\item If $E\left[ \left\vert X_{t}\right\vert ^{q_{1}}\right] <\infty $ and $%
\left\vert V\left( x\right) \right\vert \leq B\left( 1+\left\vert
x\right\vert ^{q_{2}}\right) $ for some $B<\infty $ and $q_{1},q_{2}\geq 0$,
then $E[\left\vert Y_{t}\right\vert ^{q_{1}/q_{2}}]<\infty $.

\item If $\varphi $ is an eigenfunction of $X$ with corresponding eigenvalue 
$\rho $ in the sense that $E\left[ \varphi \left( X_{1}\right) |X_{0}\right]
=\rho \varphi \left( X_{0}\right) $ then $\varphi \circ U$ is an
eigenfunction of $Y$ with corresponding eigenvalue $\rho $.
\end{enumerate}
\end{proposition}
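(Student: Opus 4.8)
The plan is to exploit the single structural fact underlying all four claims: because $V$ is a strictly monotone bijection with measurable inverse $U$ (Assumption 2.2), the transformation $Y_t = V(X_t)$ preserves information, so that $\sigma(X_s : s \in I) = \sigma(Y_s : s \in I)$ for every index set $I$, since $Y_s = V(X_s)$ and $X_s = U(Y_s)$. Granting this, most of the work reduces to unwinding definitions. For part 1, I would first invoke Assumption 2.1(i)--(iii) together with the standard scale/speed-measure theory (Karatzas and Shreve 1991; Bandi and Phillips 2003) to conclude that $X$ is positive recurrent with integrable speed measure, hence admits the stationary density (\ref{mpdf_x}) and is stationary and ergodic. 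To transfer this to $Y$, note that the finite-dimensional laws of $Y$ are pushforwards of those of $X$ under the fixed map $V$, so shift-invariance of the law of $X$ passes to $Y$ (stationarity); and since $V$ is invertible, the shift-invariant $\sigma$-field of $Y$ coincides with that of $X$, so ergodicity is preserved as well. Part 2 is then immediate: the $\alpha$- (or $\beta$-) mixing coefficients are defined through the past and future $\sigma$-algebras $\sigma(Y_s : s \le t)$ and $\sigma(Y_s : s \ge t+\tau)$, which are literally equal to $\sigma(X_s : s \le t)$ and $\sigma(X_s : s \ge t+\tau)$, so the suprema defining the coefficients range over identical collections of events and the coefficients coincide exactly (not merely up to a bound).

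For part 4, I would use the pointwise identity $(\varphi \circ U)(Y_t) = \varphi(U(V(X_t))) = \varphi(X_t)$. Then $E[(\varphi \circ U)(Y_1)\mid Y_0] = E[\varphi(X_1)\mid Y_0] = E[\varphi(X_1)\mid X_0]$, where the last equality again uses that conditioning on $Y_0$ is the same as conditioning on $X_0$. The eigenfunction hypothesis gives $E[\varphi(X_1)\mid X_0] = \rho\varphi(X_0) = \rho(\varphi\circ U)(Y_0)$, which is exactly the eigenfunction relation for $Y$ with the same eigenvalue $\rho$.

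Part 3 is the only claim requiring an actual estimate rather than a measure-theoretic identity. Starting from the growth bound $|Y_t| = |V(X_t)| \le B(1 + |X_t|^{q_2})$, I would raise both sides to the power $q_1/q_2$ and apply the elementary inequality $(1+a)^p \le C_p(1 + a^p)$ for $a \ge 0$, valid with a constant $C_p$ depending only on $p = q_1/q_2$ (by convexity when $p \ge 1$, by subadditivity of $x \mapsto x^p$ when $0 < p \le 1$). This yields $|Y_t|^{q_1/q_2} \le B^{q_1/q_2}C_p(1 + |X_t|^{q_1})$; taking expectations and using $E[|X_t|^{q_1}] < \infty$ completes the argument.

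I do not anticipate a genuine obstacle: the proposition is essentially a dictionary translating properties of $X$ into properties of $Y$, with the monotone-bijection assumption doing all the heavy lifting. The one point requiring mild care is the ergodicity claim in part 1, where one must argue at the level of invariant events rather than merely matching one-dimensional marginals; but the invertibility of $V$ renders even this routine.
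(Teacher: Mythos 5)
Your proposal is correct, and all four parts go through as written. One thing worth noting about the comparison: the paper itself contains \emph{no} proof of this proposition --- it is stated as a collection of ``straightforward results adopted from Forman and S{\o}rensen (2014)'', and the appendix only proves Theorems \ref{Th: Master ID}, \ref{theorem_AD_PMLE} and \ref{AD_DD}. So your write-up is in effect supplying the argument that the citation stands in for, and it is the canonical one: the single structural observation that $V$ is a strictly monotone, continuous (hence bimeasurable) bijection, so that $\sigma(Y_s : s \in I) = \sigma(X_s : s \in I)$ for every index set $I$, immediately yields exact equality of mixing coefficients (part 2), the transfer of the eigenfunction relation via $\varphi(U(Y_t)) = \varphi(X_t)$ and $E[\varphi(X_1) \mid Y_0] = E[\varphi(X_1) \mid X_0]$ (part 4), and the transfer of ergodicity through shift-invariant events (part 1), while part 3 is the elementary bound $(1+a)^p \le C_p(1+a^p)$ applied with $p = q_1/q_2$. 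You are also right to flag that ergodicity must be argued at the level of the invariant $\sigma$-field rather than marginals; that and the bimeasurability of $U$ (guaranteed here since $V$ is strictly increasing and twice continuously differentiable, so $U$ is continuous) are the only points where care is needed, and you handle both. Two cosmetic remarks: part 1 implicitly requires initializing $X_0$ at the stationary density $f_X$ in (\ref{mpdf_x}) --- the paper makes the same implicit assumption --- and part 3 implicitly needs $q_2 > 0$ for the exponent $q_1/q_2$ to be meaningful, which is a defect of the statement, not of your proof.
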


The above theorem shows that, given knowledge (or estimates) of $\mathcal{S}$%
, the properties of $Y$ in terms of mixing coefficients, moments, and
eigenfunctions are well-understood since they are inherited from the
specification of $X$. In addition, computations of conditional moments of $Y$
can be done straightforwardly utilizing knowledge of the UPD. For example,
for a given function $G$, the corresponding conditional moment can be
computed as%
\begin{equation*}
E\left[ G\left( Y_{t+s}\right) |Y_{t}=y\right] =E\left[ G_{X}\left(
X_{t+s}\right) |X_{t}=U\left( y\right) \right] ,\text{ where }G_{X}\left(
x\right) :=G\left( V\left( x\right) \right) .
\end{equation*}%
The right-hand side moment only involves $X$ and so standard methods for
computing moments of parametric diffusion models (e.g., Monte Carlo methods,
solving partial differential equations, Fourier transforms) can be employed.
This facilitates the use of our diffusion models in asset pricing where the
price often takes the form of a conditional moment. We refer to Eraker and
Wang (2015) for more details on asset pricing applications for our class of
models; they take a fully parametric approach but all their arguments carry
over to our setting.

The last result of the above theorem will prove useful for our
identification arguments since these will rely on the fundamental
nonparametric identification results derived in Hansen et al. (1998). Their
results involve the spectrum of the observed diffusion process, and the last
result of the theorem implies that the spectrum of $Y$ is fully
characterized by the spectrum of $X$ together with the transformation. The
eigenfunctions and their eigenvalues are also useful for evaluating long-run
properties of $Y$. In our semiparametric approach, the eigenfunctions and
corresponding eigenvalues of $Y$ are easily computed from $X$ and so we
circumvent the problem of estimating these nonparametrically as done in, for
example, Chen, Hansen and Scheinkman (2009) and Gobet et al.\ (2004).

\subsection{Examples of UPDs}

Our framework is quite flexible and in principle allows for any
specification of the UPD for $X$. Many parametric models are available for
that purpose, and we here present three specific examples from the
literature on continuous-time interest rate modelling.\medskip

\noindent \textbf{Example 1: Ornstein-Uhlenbeck (OU) model. }The OU model
(c.f. Vasicek, 1977) is given by%
\begin{equation}
dX_{t}=\kappa \left( \alpha -X_{t}\right) dt+\sigma dW_{t},  \label{OU}
\end{equation}%
defined on the domain $\mathcal{X}=\left( -\infty ,+\infty \right) $. The
process is stationary if and only if $\kappa >0$, in which case $X$
mean-reverts to its unconditional mean $\alpha $. The scale of $X$ is
controlled by $\sigma $. Its stationary and transition distributions are
both normal, and the corresponding copula of the discretely sampled process
is a Gaussian copula with correlation parameter $e^{-\kappa \Delta }$. For
this particular model, the resulting drift and diffusion term of the
observed process takes the form%
\begin{equation}
\mu _{Y}\left( y;\mathcal{S}\right) =\frac{\kappa \left( \alpha -U\left(
y\right) \right) }{U^{\prime }\left( y\right) }-\frac{1}{2}\sigma ^{2}\frac{%
U^{\prime \prime }\left( y\right) }{U^{\prime }\left( y\right) ^{3}},\text{
\ \ }\sigma _{Y}^{2}\left( y;\mathcal{S}\right) =\frac{\sigma ^{2}}{%
U^{\prime }\left( y\right) ^{2}}.
\end{equation}%
In Figure 2 (found in Section \ref{Sec_Simulation}), we plot these two
functions with $U$\ and $\theta $ fitted to the 7-day Eurodollar interest
rate time series\ used in A\"{\i}t-Sahalia (1996b). Observe that $U$\
generates non-linear behavior in $\mu _{Y}$ and $\sigma _{Y}^{2}$ despite
the UPD being a linear Gaussian process.

\medskip

\noindent \textbf{Example 2: Cox-Ingersoll-Ross (CIR) model. }The CIR
process (c.f. Cox et al., 1985) is given by%
\begin{equation}
dX_{t}=\kappa \left( \alpha -X_{t}\right) dt+\sigma \sqrt{X_{t}}dW_{t}.
\label{CIR}
\end{equation}%
The process has domain $\mathcal{X}=\left( 0,+\infty \right) $ and is
stationary if and only if $\kappa >0$, $\alpha >0$ and $2\kappa \alpha
/\sigma ^{2}\geq 1$. Conditional on $X_{i\Delta }$, $X_{\left( i+1\right)
\Delta }$ admits a non-central $\chi ^{2}$ distribution with fractional
degrees of freedom while its stationary distribution is a Gamma
distribution. To our best knowledge, the corresponding dynamic copula has
not been analyzed before or used in empirical work. Figure 4 (in Section \ref%
{Sec_Simulation}) displays $\mu _{Y}$\ and $\sigma _{Y}^{2},$ with $U$\ and $%
\theta $ chosen in the same way as in Exampe 1. Compared to this example,
the resulting drift and diffusion term of $Y$\ exhibit even stronger
non-linearities.

\medskip

\noindent \textbf{Example 3: Nonlinear Drift Constant Elasticity Variance
(NLDCEV) model. }The NLDCEV specification (c.f. Conley et al., 1997) is
given by%
\begin{equation}
dX_{t}=\left( \sum_{i=-k}^{l}\alpha _{i}X_{t}^{i}\right) dt+\sigma
X_{t}^{\beta }dW_{t}  \label{NLDCEV}
\end{equation}%
with domain $\mathcal{X}=\left( 0,+\infty \right) $. It is easily seen that
when $\alpha _{-k}>0$ and $\alpha _{l}<0$ the drift term of the diffusion in
(\ref{NLDCEV}) exhibits mean-reversions for large and small values of $X$. A
popular choice for various studies in finance assumes that $k=1$ and $l=2$
or $3$ (c.f. A\"{\i}t-Sahalia, 1996b; Choi, 2009; Kristensen, 2010; Bu,
Cheng and Hadri, 2017), in which case the drift has linear or zero
mean-reversion in the middle part and much stronger mean-reversion for large
and small values of $X$. Meanwhile, the CEV diffusion term is also
consistent with most empirical findings of the shape of the diffusion term.
It follows that since (\ref{NLDCEV}) is one of the most flexible parametric
diffusions, diffusion processes that are unspecified transformations of (\ref%
{NLDCEV}) should represent a very flexible class of diffusion models.
Similar to (\ref{CIR}), the implied copula of the NLDCEV is new to the
copula literature.\medskip

Examples 1-2 are attractive from a computational standpoint since the
corresponding transition densities are available on closed-form thereby
facilitating their implementation. But this comes at the cost of the
dynamics being somewhat simple. The NLDCEV\ model implies more complex and
richer dynamics but on the other hand its transition density is not
available on closed form. However, the marginal pdf of the NLDCEV process,
as well as more general specifications,\ can be evaluated in closed form by (%
\ref{mpdf_x}). Moreover, closed-from approximations of the transition
density of the NLDCEV model developed by, for example, A\"{\i}t-Sahalia
(2002) and Li (2013) can be employed. Alternatively, simulated versions of
the transition density can be computed using the techniques developed in,
for example, Kristensen and Shin (2012) and Bladt and S\o rensen (2014). In
either case, an approximate version of the exact likelihood can be easily
computed, thereby allowing for simple estimation of even quite complex
underlying UPDs.

\subsection{Related Literature}

As already noted in the introduction, copula-based diffusions are related to
the class of so-called \emph{discrete-time} copula-based Markov models; see,
for example, Chen and Fan (2006) and references therein. To map the notation
and ideas of this literature into our continuous-time setting, we set the
sampling time distance $\Delta =1$ in the remaining part of this section.

Let us first introduce copula-based Markov models where a given
discrete-time, stationary scalar Markov process $Y=\left\{
Y_{i}:i=0,1,\ldots ,n\right\} $ is modelled through a bivariate parametric
copula density\footnote{%
The copula $C_{X}\left( u_{0},u_{1};\theta \right) $ for a given Markov
process is defined as%
\begin{equation*}
C_{X}\left( u_{0},u_{1};\theta \right) =\Pr \left( X_{0}\leq
F_{X}^{-1}\left( u_{0};\theta \right) ,X_{1}\leq F_{X}^{-1}\left(
u_{1};\theta \right) \right) .
\end{equation*}%
The corresponding copula density is then given by $c_{X}\left(
u_{0},u_{1};\theta \right) =\partial ^{2}C_{X}\left( u_{0},u_{1};\theta
\right) /\left( \partial u_{0}\partial u_{1}\right) $.}, say, $c_{X}\left(
u_{0},u;\theta \right) $, together with its stationary marginal cdf $F_{Y}$,
i.e., so that $Y$'s transition density satisfies%
\begin{equation}
p_{Y}\left( y|y_{0};\theta ,F_{Y}\right) =f_{Y}\left( y\right) c_{X}\left(
F_{Y}\left( y_{0}\right) ,F_{Y}\left( y\right) ;\theta \right) ,
\label{eq: copula MC}
\end{equation}%
where $f_{Y}\left( y\right) =F_{Y}^{\prime }\left( y\right) $. An
alternative representation of this model is%
\begin{equation}
Y_{i}=F_{Y}^{-1}\left( \bar{X}_{i}\right) ,\text{ \ \ }\bar{X}_{i+1}|\bar{X}%
_{i}=x_{0}\sim c_{X}\left( x_{0},\cdot ;\theta \right) ,
\label{eq: copula MC 2}
\end{equation}%
so that $Y_{i}$ is a transformation of an underlying Markov process $\bar{X}%
_{i}\in \left[ 0,1\right] $; the latter having a uniform marginal
distribution and transition density $c_{X}\left( x_{0},x;\theta \right) $.
Thus, if $c_{X}\left( x_{0},x;\theta \right) $ is induced by an underlying
Markov diffusion transition density, the corresponding copula-based Markov
model falls within our framework.

Reversely, consider a copula-based diffusion and suppose that the UPD $X$ is
stationary with marginal cdf $F_{X}\left( x;\theta \right) $. By definition
of $Y$, its marginal cdf satisfies 
\begin{equation}
F_{Y}\left( y\right) =F_{X}\left( U\left( y\right) ;\theta \right)
\Leftrightarrow U\left( y\right) =F_{X}^{-1}\left( F_{Y}\left( y\right)
;\theta \right) .  \label{U}
\end{equation}%
Substituting the last expression for $U$ into (\ref{tpdf_y(x)}), we see that 
$p_{Y}$ can be expressed in the form of (\ref{eq: copula MC}) where $%
c_{X}\left( u_{0},u;\theta \right) $ is the density function of the
(dynamic) copula implied by the discretely sampled UPD $X$,%
\begin{equation}
c_{X}\left( u_{0},u;\theta \right) =\frac{p_{X}\left( F_{X}^{-1}\left(
u;\theta \right) |F_{X}^{-1}\left( u_{0};\theta \right) ;\theta \right) }{%
f_{X}\left( F_{X}^{-1}\left( u;\theta \right) ;\theta \right) }.  \label{dic}
\end{equation}%
Thus, any discretely sampled stationary copula-based diffusion satisfies (%
\ref{eq: copula MC 2}) with $\bar{X}_{i}=F_{X}\left( X_{i}\right) $.

However, the literature on copula-based Markov models focus on discrete-time
models with standard copula specifications derived from bivariate
distributions in an i.i.d. setting. Using copulas that are originally
derived in an i.i.d. setting complicates the interpretation of the dynamics
of the resulting Markov model, and conditions for the model to be mixing,
for example, can be quite complicated to derive; see, e.g., Beare (2010) and
Chen, Wu and Yi (2009). This also implies that very few standard copulas can
be interpreted as diffusion processes; to our knowledge, the only one is the
Gaussian copula which corresponds to the OU process in Example 1.

The reader may now wonder why we do not simply generate dynamic copulas by
first deriving the transition density $p_{X}\left( x|x_{0};\theta \right) $
for a given discrete-time Markov model and then obtain the corresponding
Markov copula through eq. (\ref{dic})? The reason is that for most
discrete-time Markov models the stationary distribution $F_{X}\left(
x;\theta \right) $ is not known on closed form. Thus, first of all, $%
F_{X}^{-1}\left( u;\theta \right) $ and thereby also $c_{X}$ have be
approximated numerically. Second, since $c_{X}$ is now not available on
closed form, the analysis of which parameters one can identify from the
resulting copula model becomes very challenging. And identification in
copula-based Markov models is a non-trivial problem: Generally, for a given
parametric Markov model, not all parameters are identified from the
corresponding copula as given in (\ref{dic}) and some of them have to be
normalized.

We here directly generate copulas through an underlying continuous-time
diffusion model for $X$. This resolves the aforementioned drawbacks of
existing copula-based Markov models: First, we are able to generate highly
flexible copulas so far not considered in the literature. Second, given that
our copulas are induced by specifying the drift and diffusion functions of $%
X $, the time series properties are much more easily inferred from our
model, c.f. Theorem \ref{Th: Y prop}. Third, by Ito's Lemma, eqs. (\ref%
{RDdrift})-(\ref{RDdiffusion}) provide us with explicit expressions linking
the drift and diffusion terms of the observed diffusion process $Y$ to the
UPD through the transformation $V$; this will allow us to derive necessary
and sufficient conditions for identification in the following. Fourth, in
terms of estimation, the stationary distribution of a given diffusion model
has an explicit form, c.f. eq. (\ref{mpdf_x}), which allows us to develop
computationally simple estimators of copula diffusion models. Finally, some
of our identification results will not require stationarity and so expands
the scope for using copula-type models in time series analysis.

Our modelling strategy is also related to the ideas of A\"{\i}t-Sahalia
(1996a) and Kristensen (2010, 2011) where $F_{Y}$ is left unspecified while
either the drift, $\mu _{Y}$, or the diffusion term, $\sigma _{Y}^{2}$, is
specified parametrically. As an example, consider the former case where $%
\sigma _{Y}^{2}\left( y;\theta \right) $ is known up to the parameter $%
\theta $. Given knowledge of the marginal density $f_{Y}$ (or a
nonparametric estimator of it), the diffusion term can then be recovered as
a functional of $f_{Y}$ and $\mu _{Y}$ as%
\begin{equation*}
\mu _{Y}\left( y;f_{Y},\theta \right) =\frac{1}{2f_{Y}\left( y\right) }\frac{%
\partial }{\partial y}\left[ \sigma _{Y}^{2}\left( y;\theta \right)
f_{Y}\left( y\right) \right] .
\end{equation*}%
So in their setting $f_{Y}$ pins down the resulting dynamics of $Y$ in a
rather opaque manner.

\section{Identification\label{Sec_Identification}}

Suppose that a particular specification of the UPD as given in (\ref{UPD_x})
has been chosen. Given the discrete sample of $Y$, the goal is to obtain
consistent estimates of $\theta $ together with $V$. To this end, we first
have to show that these are actually identified from data. In order to do
so, we need to be precise about which primitives we can identify from data.
Given the primitives, we then wish to recover $\left( \theta ,V\right) $. In
the cross-sectional literature, one normally take as given the distribution
of data and then establish a mapping between this and the structural
parameters. In our setting, we are able to learn about the transition
density of our data, $p_{Y}$, from the population and so it would be natural
to use this as primitive from which we wish to recover $\left( \theta
,V\right) $. However, the mapping from $p_{Y}$ to $\left( \theta ,V\right) $
is not available on closed form in general in our setting and so this
identification strategy appears highly complicated. Instead we will take as
primitives the drift, $\mu _{Y}$, and diffusion term, $\sigma _{Y}^{2}$, of $%
Y$ and then show identification of $\left( \theta ,V\right) $ from these.
This identification argument relies on us being able to identify $\mu _{Y}$
and $\sigma _{Y}^{2}$ in the first place, which we formally assume here:

\begin{description}
\item[Assumption 3.1] The drift, $\mu _{Y}$, and the diffusion, $\sigma
_{Y}^{2}$, are nonparametrically identified from the discretely sampled
process $Y$.
\end{description}

The above assumption is not completely innocuous and does impose some
additional regularity conditions on the Data Generating Process (DGP). We
therefore first provide sufficient conditions under which Assumption 3.1
holds. The first set of conditions are due to Hansen et al. (1998) who
showed that Assumption 3.1 is satisfied if $Y$ is stationary and its
infinitesimal operator has a discrete spectrum. Theorem \ref{Th: Y prop}(4)
is helpful in this regard since it informs us that the spectrum of $Y$ can
be recovered from the one of $X$. In particular, if $X$ is stationary with a
discrete spectrum, then $Y$ will have the same properties. Since the
dynamics of $X$ is known to us, the properties of its spectrum are in
principle known to us and so this condition can be verified a priori. The
second set of primitive conditions come from Bandi and Phillips (2003): They
show that as $\Delta \rightarrow 0$ and $n\Delta \rightarrow \infty $, the
drift and diffusion functions of a recurrent Markov diffusion process are
identified. This last result holds without stationarity, but on the other
hand requires high-frequency observations.

In order to formally state the above two results, we need some additional
notation. Recall that the infinitesimal operator, denoted $L_{X}$, of a
given UPD $X$ is defined as%
\begin{equation*}
L_{X,\theta }g\left( x\right) :=\mu _{X}\left( x;\theta \right) g^{\prime
}\left( x\right) +\frac{1}{2}\sigma _{X}^{2}\left( x;\theta \right)
g^{\prime \prime }\left( x\right) ,
\end{equation*}%
for any twice differentiable function $g\left( x\right) $. We follow Hansen
et al. (1998) and restrict the domain of $L_{X}$ to the following set of
functions:%
\begin{equation*}
\mathcal{D}\left( L_{X,\theta }\right) =\left\{ g\in L_{2}\left(
f_{X}\right) :g^{\prime }\text{ is a.c., }L_{X,\theta }g\in L_{2}\left(
f_{X}\right) \text{ and }\lim_{x\downarrow x_{l}}\frac{g^{\prime }\left(
x\right) }{s\left( x\right) }=\lim_{x\uparrow x_{u}}\frac{g^{\prime }\left(
x\right) }{s\left( x\right) }=0\right\} .
\end{equation*}%
where a.c. stands for absolutely continuous. The spectrum of $L_{X,\theta }$
is then the set of solution pairs $\left( \varphi ,\rho \right) $, with $%
\varphi \in \mathcal{D}\left( L_{X,\theta }\right) $ and $\rho \geq 0$, to
the following eigenvalue problem, $L_{X,\theta }\varphi =-\rho \varphi $. We
refer to Hansen et al. (1998) and Kessler and S\o rensen (1999) for a
further discussion\ and results regarding the spectrum of $L_{X}$. The
following result then holds:

\begin{proposition}
Suppose that Assumption 2.1(i)-(ii) is satisfied. Then Assumption 3.1 holds
under either of the following two sets of conditions:

\begin{enumerate}
\item Assumption 2.1(iii) holds and $L_{X,\theta }$ has a discrete spectrum
where $\theta $ is the data-generating parameter value.

\item $\Delta \rightarrow 0$ and $n\Delta \rightarrow \infty $.
\end{enumerate}
\end{proposition}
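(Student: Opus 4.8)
The plan is to reduce each case to an external identification result---Hansen et al. (1998) for case 1 and Bandi and Phillips (2003) for case 2---by verifying that the observed process $Y$ inherits from the UPD $X$ precisely the structural property that the cited result requires. In both cases the engine is the same: the transformation $V$ sets up a tight correspondence between the generator, spectrum, and scale measure of $X$ and those of $Y$, so that the relevant qualitative properties transfer verbatim. The starting point is the intertwining identity $L_Y g=\left(L_{X,\theta}(g\circ V)\right)\circ U$, valid for every twice-differentiable $g$ on $\mathcal Y$, which follows by a direct chain-rule computation from (\ref{RDdrift})--(\ref{RDdiffusion}) using $V'(U(y))=1/U'(y)$ and $V''(U(y))=-U''(y)/U'(y)^3$.

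For case 1, Hansen et al. (1998) guarantee that $\mu_Y$ and $\sigma_Y^2$ are nonparametrically identified once $Y$ is stationary and $L_Y$ has a discrete spectrum. Stationarity of $Y$ is immediate from Assumption 2.1(iii) combined with Proposition \ref{Th: Y prop}(1). For the spectrum, I would introduce the composition operator $\Phi g:=g\circ V$, so that the intertwining identity reads $L_Y=\Phi^{-1}L_{X,\theta}\Phi$. I would then show $\Phi$ is a unitary map from $L_2(f_Y)$ onto $L_2(f_X)$: the change of variables $x=U(y)$ together with (\ref{mpdf_y(x)}) gives $\int g(y)^2 f_Y(y)\,dy=\int (g\circ V)(x)^2 f_X(x)\,dx$, so $\Phi$ is an isometry, and it is onto because $V$ is a bijection. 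Unitary equivalence renders $L_Y$ and $L_{X,\theta}$ isospectral, so discreteness of the spectrum of $L_{X,\theta}$ transfers to $L_Y$ and Hansen et al. (1998) applies. Equivalently, one can read this off Proposition \ref{Th: Y prop}(4) at the level of the transition semigroup, since $\varphi\mapsto\varphi\circ U$ is a bijection on eigenfunctions that preserves eigenvalues.

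For case 2, Bandi and Phillips (2003) show that the drift and diffusion of a recurrent scalar diffusion are identified as $\Delta\to0$ and $n\Delta\to\infty$, with no stationarity needed. By Ito's Lemma under Assumption 2.2, $Y$ is a diffusion with coefficients (\ref{RDdrift})--(\ref{RDdiffusion}), and these satisfy the requisite local regularity because $\mu_X,\sigma_X^2$ are twice continuously differentiable (Assumption 2.1(i)) and $U$ is twice continuously differentiable with $U'>0$ (Assumption 2.2). It then remains to verify that $Y$ is recurrent, which I would do through the scale measures. Since a scale function is characterized as a nonconstant element of the kernel of the generator, determined up to an affine map, the intertwining identity shows that $S_Y\circ V$ lies in the kernel of $L_{X,\theta}$; hence $S_Y(V(x);\mathcal S)=a\,S_X(x;\theta)+b$ with $a>0$. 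Because $V$ carries the endpoints $x_l,x_r$ of $\mathcal X$ to the endpoints $y_l,y_r$ of $\mathcal Y$, Assumption 2.1(ii) yields $S_Y\to-\infty$ ($+\infty$) as $y\to y_l$ ($y_r$), which is exactly recurrence of $Y$, and Bandi and Phillips (2003) delivers Assumption 3.1.

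I expect the delicate step to be the spectral argument in case 1, specifically confirming that $\Phi$ maps the operator domain $\mathcal D(L_{X,\theta})$ onto $\mathcal D(L_Y)$---that is, that $L_2$ membership of $L_{X,\theta}g$ and, more subtly, the boundary conditions $\lim g'(x)/s(x)=0$ are preserved under composition with $V$. Verifying the boundary conditions requires translating $g'/s$ through the transformation and invoking the same scale-measure correspondence established in case 2; once the domains are shown to coincide, unitary equivalence yields isospectrality cleanly. By contrast, the recurrence transfer in case 2 is comparatively routine, reducing entirely to the boundary behavior of $S_Y$.
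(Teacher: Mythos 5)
Your proposal is correct and takes essentially the same route as the paper: case 1 is reduced to Hansen et al.\ (1998) by transferring stationarity and discreteness of the spectrum from $X$ to $Y$ (Proposition \ref{Th: Y prop}, parts 1 and 4), and case 2 is reduced to Bandi and Phillips (2003) by transferring recurrence through the monotone transformation. The paper gives no formal appendix proof---its justification is the discussion surrounding the statement---so your unitary-equivalence argument for isospectrality and the scale-function correspondence $S_Y(V(x);\mathcal{S})=a\,S_X(x;\theta)+b$ for recurrence simply make rigorous the transfer steps the paper leaves implicit.
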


Importantly, the above result shows that Assumption 3.1 can be verified
without imposing stationarity. Unfortunately, this requires high-frequency
information ($\Delta \rightarrow 0$). To our knowledge, there exists no
results for low-frequency ($\Delta >0$ fixed)\ identification of the drift
and diffusion terms of scalar diffusion processes under non-stationarity.
But by inspection of the arguments of Hansen et al. (1998) one can verify
that at least the diffusion component is nonparametrically identified from
low-frequency information without stationarity.

We are now ready to analyze the identification problem. Recall that $%
\mathcal{S}=\left( \theta ,V\right) $ contains the objects of interest and
let our model consist of all the structures that satisfy, as a minimum,
Assumptions 2.1(i)--(ii) and 2.2. According to (\ref{RDdrift})-(\ref%
{RDdiffusion}), each structure implies a drift and diffusion term of the
observed process. We shall say that two structures $\mathcal{S}=\left(
\theta ,V\right) $ and $\mathcal{\tilde{S}}=(\tilde{\theta},\tilde{V})$ are 
\emph{observationally equivalent}, a property which we denote by $\mathcal{S}%
\sim \mathcal{\tilde{S}}$, if they imply the same drift and diffusion of $Y$%
, i.e. 
\begin{equation}
\forall y\in \mathcal{Y}:\mu _{Y}\left( y;\mathcal{S}\right) =\mu _{Y}\left(
y;\mathcal{\tilde{S}}\right) \text{ and }\sigma _{Y}\left( y;\mathcal{S}%
\right) =\sigma _{Y}\left( y;\mathcal{\tilde{S}}\right) .
\end{equation}%
The structure $\mathcal{S}$ is then said to be identified within the model
if $\mathcal{S}\sim \mathcal{\tilde{S}}$ implies $\mathcal{S}=\mathcal{%
\tilde{S}}$. In our setting, without suitable \emph{normalizations} on the
parameters of the UPD, identification will generally fail. To see this,
observe that any given structure $\mathcal{S}$ is observationally equivalent
to the following process: Choose any one-to-one transformation $T:\mathcal{X}%
\mapsto \mathcal{X}$, and rewrite the DGP implied by $\mathcal{S}$ as%
\begin{equation}
Y_{t}=\tilde{V}\left( \tilde{X}_{t}\right) ,\text{ \ \ }\tilde{V}\left(
x\right) =V\left( T\left( x\right) \right) ,  \label{eq: Y = V(tilde X)}
\end{equation}%
where $\tilde{X}_{t}=T^{-1}\left( X_{t}\right) $ solves%
\begin{equation}
d\tilde{X}_{t}=\mu _{T^{-1}\left( X\right) }\left( \tilde{X}_{t};\theta
\right) dt+\sigma _{T^{-1}\left( X\right) }\left( \tilde{X}_{t};\theta
\right) dW_{t},  \label{eq: tilde X def}
\end{equation}%
with%
\begin{eqnarray}
\mu _{T^{-1}\left( X\right) }\left( x;\theta \right) &=&\frac{\mu _{X}\left(
T\left( x\right) ;\theta \right) }{\partial T\left( x\right) /\left(
\partial x\right) }-\frac{1}{2}\sigma _{X}^{2}\left( T\left( x\right)
;\theta \right) \frac{\partial ^{2}T\left( x\right) /\left( \partial
x^{2}\right) }{\partial T\left( x\right) /\left( \partial x\right) ^{3}},
\label{eq: mu tilde X} \\
\sigma _{T^{-1}\left( X\right) }\left( x;\theta \right) &=&\frac{\sigma
_{X}\left( T\left( x\right) ;\theta \right) }{\partial T\left( x\right)
/\left( \partial x\right) }.  \label{eq: sig tilde X}
\end{eqnarray}%
Suppose now that there exists $\tilde{\theta}$ so that $\mu _{T^{-1}\left(
X\right) }\left( x;\theta \right) =\mu _{X}\left( x;\tilde{\theta}\right) $
and $\sigma _{T^{-1}\left( X\right) }\left( x;\theta \right) =\sigma
_{X}\left( x;\tilde{\theta}\right) $. Then the alternative representation (%
\ref{eq: Y = V(tilde X)})-(\ref{eq: tilde X def}) is a member of our model
with structure $\mathcal{\tilde{S}=(}\tilde{\theta},\tilde{V})$ which is
observationally equivalent to $\mathcal{S}=\left( V,\theta \right) $. The
following result provides a complete characterizations of the class of
observationally equivalent structures for a given model:

\begin{theorem}
\label{Th: Master ID}Suppose that Assumptions 3.1 is satisfied. For any two
structures $\mathcal{S}=\left( V,\theta \right) $ and $\mathcal{\tilde{S}}=(%
\tilde{V},\tilde{\theta})$ satisfying Assumptions 2.1(i) and 2.2, the
following hold: $\mathcal{S}\sim \mathcal{\tilde{S}}$ if and only if there
exists one-to-one transformation $T:\mathcal{X}\mapsto \mathcal{X}$ so that%
\begin{equation}
\tilde{V}\left( x\right) =V\left( T\left( x\right) \right)
\label{eq: equiv 1}
\end{equation}%
and, with $\mu _{T^{-1}\left( X\right) }\left( x;\tilde{\theta}\right) $ and 
$\sigma _{T^{-1}\left( X\right) }\left( x;\tilde{\theta}\right) $ given in
eqs. (\ref{eq: mu tilde X})-(\ref{eq: sig tilde X}), 
\begin{equation}
\text{(i) }\mu _{T^{-1}\left( X\right) }(x;\tilde{\theta})=\mu _{X}\left(
x;\theta \right) \text{ and (ii) }\sigma _{T^{-1}\left( X\right) }(x;\tilde{%
\theta})=\sigma _{X}\left( x;\theta \right) .  \label{eq: equiv 2}
\end{equation}

In particular, the data-generating structure is identified if and only if
there exists no one-to-one transformation $T$ such that (\ref{eq: equiv 2})
holds for $\theta \neq \tilde{\theta}$.
\end{theorem}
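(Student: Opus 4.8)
The plan is to take the pair $\left( \mu _{Y}\left( \cdot ;\mathcal{S}\right) ,\sigma _{Y}\left( \cdot ;\mathcal{S}\right) \right) $ as the identifying primitives — which is legitimate precisely because Assumption 3.1 guarantees they are recoverable from the observed law of $Y$ — and to reduce the statement $\mathcal{S}\sim \mathcal{\tilde{S}}$ to a statement about how the two UPDs are related through a change of variables. The organizing observation is that applying the inverse transformation to the observed process recovers the UPD: $U\left( Y\right) =X$ is governed by $\left( \mu _{X}\left( \cdot ;\theta \right) ,\sigma _{X}\left( \cdot ;\theta \right) \right) $, and likewise $\tilde{U}(\tilde{Y})$ is the UPD of $\mathcal{\tilde{S}}$, governed by $( \mu _{X}( \cdot ;\tilde{\theta}) ,\sigma _{X}( \cdot ;\tilde{\theta}) ) $. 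Since the coefficients obtained by pushing a diffusion through a fixed $C^{2}$ monotone map $\phi $ are explicit functionals of the original coefficients and of $\phi $ (It\^{o}'s Lemma, exactly as in \eqref{RDdrift}--\eqref{RDdiffusion}), two diffusions with identical $\left( \mu _{Y},\sigma _{Y}\right) $ are carried by any such $\phi $ into two diffusions that again have identical drift and diffusion — this is the lever that turns observational equivalence into the coefficient identities \eqref{eq: equiv 2}.

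First I would prove the ``only if'' direction. Assume $\mathcal{S}\sim \mathcal{\tilde{S}}$ and define $T:=U\circ \tilde{V}:\mathcal{X}\mapsto \mathcal{X}$. By Assumption 2.2 both $U=V^{-1}$ and $\tilde{V}$ are strictly increasing and $C^{2}$, so $T$ is a one-to-one $C^{2}$ self-map of $\mathcal{X}$, and $V\circ T=V\circ U\circ \tilde{V}=\tilde{V}$ gives \eqref{eq: equiv 1}; equivalently $\tilde{U}=T^{-1}\circ U$. Now consider the transformed process $\tilde{U}\left( Y\right) $. On one hand $\tilde{U}\left( Y\right) =T^{-1}\left( U\left( Y\right) \right) =T^{-1}\left( X\right) $, whose drift and diffusion are by definition $\mu _{T^{-1}\left( X\right) }\left( \cdot ;\theta \right) $ and $\sigma _{T^{-1}\left( X\right) }\left( \cdot ;\theta \right) $ of \eqref{eq: mu tilde X}--\eqref{eq: sig tilde X}. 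On the other hand, because $\mathcal{S}\sim \mathcal{\tilde{S}}$ the processes $Y$ and $\tilde{Y}$ share the same $\left( \mu _{Y},\sigma _{Y}\right) $, so pushing both through the fixed map $\tilde{U}$ yields the same transformed coefficients; but $\tilde{U}(\tilde{Y})$ is the UPD of $\mathcal{\tilde{S}}$, with coefficients $( \mu _{X}( \cdot ;\tilde{\theta}) ,\sigma _{X}( \cdot ;\tilde{\theta}) ) $. Equating the two descriptions of $\tilde{U}\left( Y\right) $, and using that $U$ is onto $\mathcal{X}$ so the identities hold at every $x\in \mathcal{X}$, delivers $\mu _{T^{-1}\left( X\right) }(x;\theta )=\mu _{X}(x;\tilde{\theta})$ and $\sigma _{T^{-1}\left( X\right) }(x;\theta )=\sigma _{X}(x;\tilde{\theta})$ — these are exactly the coefficient identities of \eqref{eq: equiv 2} in the orientation produced by the construction \eqref{eq: mu tilde X}--\eqref{eq: sig tilde X} preceding the theorem (the two structures enter symmetrically). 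The identical conclusion may instead be obtained by substituting $\tilde{U}=T^{-1}\circ U$ directly into \eqref{RDdrift}--\eqref{RDdiffusion} and cancelling the common factor involving $U^{\prime }$ via the chain rule; the probabilistic route above merely bookkeeps that algebra.

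For ``if,'' suppose $T$ satisfies \eqref{eq: equiv 1} together with these coefficient identities. The latter say that $T^{-1}\left( X\right) $ has the same drift and diffusion as the UPD with parameter $\tilde{\theta}$, i.e. $T^{-1}\left( X\right) \overset{d}{=}X^{\tilde{\theta}}$ in distribution. Hence, in law, $\tilde{V}( X^{\tilde{\theta}}) =\tilde{V}\left( T^{-1}\left( X\right) \right) =V( T( T^{-1}\left( X\right) ) ) =V\left( X\right) =Y$, using \eqref{eq: equiv 1}. Therefore the process generated by $\mathcal{\tilde{S}}$ has the same drift and diffusion as $Y$, that is $\mathcal{S}\sim \mathcal{\tilde{S}}$. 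This closes the equivalence; note both directions use only Itô's Lemma, the invertibility supplied by Assumption 2.2, and the fact that a diffusion's coefficients are preserved under a fixed smooth monotone reparametrization.

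The identification characterization is then the contrapositive of the equivalence: $\mathcal{S}$ fails to be identified exactly when \eqref{eq: equiv 1}--\eqref{eq: equiv 2} admit a solution with $\left( T,\tilde{\theta}\right) \neq \left( \mathrm{id},\theta \right) $, and since \eqref{eq: equiv 1} only defines $\tilde{V}$ from $T$, the binding restriction is \eqref{eq: equiv 2} on the pair $\left( T,\tilde{\theta}\right) $. The main obstacle — and the step I would treat most carefully — is reducing this to the stated form ``no one-to-one $T$ solves \eqref{eq: equiv 2} with $\theta \neq \tilde{\theta}$,'' which requires ruling out a nontrivial self-symmetry: when $\tilde{\theta}=\theta$ one must show that $\mu _{T^{-1}\left( X\right) }\left( \cdot ;\theta \right) =\mu _{X}\left( \cdot ;\theta \right) $ and $\sigma _{T^{-1}\left( X\right) }\left( \cdot ;\theta \right) =\sigma _{X}\left( \cdot ;\theta \right) $ force $T=\mathrm{id}$ (whence $\tilde{V}=V$). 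The diffusion identity alone pins $T$ down to a one-parameter family through the first-order ODE $T^{\prime }\left( x\right) =\sigma _{X}\left( T\left( x\right) ;\theta \right) /\sigma _{X}\left( x;\theta \right) $, and the drift identity then selects $T=\mathrm{id}$ unless the UPD carries a genuine scale/translation invariance; such invariances are incompatible with the positive recurrence implied by Assumption 2.1(ii), and under stationarity one sees directly from \eqref{U} that a common $\theta $ forces a common $U=F_{X}^{-1}\left( F_{Y};\theta \right) $ and hence $T=\mathrm{id}$. Establishing this uniqueness of the transformation given the UPD is the delicate point where the regularity of the diffusion must genuinely be used; the rest is the clean change-of-variables argument above.
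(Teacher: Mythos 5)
Your proof of the equivalence part is correct and is, in substance, the paper's own argument: the paper also constructs $T$ by composing the two transformations and then verifies (\ref{eq: equiv 2}) by the chain rule; it merely carries out the It\^{o}/chain-rule algebra explicitly where you invoke the fact that pushing two diffusions with identical coefficients through the same fixed $C^{2}$ monotone map yields again identical coefficients (and you correctly note the two routes are the same computation). One bookkeeping remark: you take $T=U\circ \tilde{V}$, which delivers (\ref{eq: equiv 1}) exactly as stated but (\ref{eq: equiv 2}) with $\theta $ and $\tilde{\theta}$ interchanged; the paper's proof instead takes $T=\tilde{U}\circ V$, which delivers (\ref{eq: equiv 2}) exactly as stated but then (\ref{eq: equiv 1}) should read $V\left( x\right) =\tilde{V}\left( T\left( x\right) \right) $. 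The theorem's statement mixes these two orientations, so your explicit appeal to the symmetry of the two structures is a fair, arguably cleaner, way of handling a defect that is in the paper itself.

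Where you genuinely depart from the paper is the ``in particular'' clause, and your instinct there is right: the paper offers no proof of it (the text calls it ``a natural consequence of the first part''), but the reduction is not automatic. By the equivalence, failure of identification means some pair $(T,\tilde{\theta})\neq (\mathrm{id},\theta )$ solves (\ref{eq: equiv 2}), while the stated criterion only excludes solutions with $\tilde{\theta}\neq \theta $; so one must separately rule out nontrivial self-symmetries, i.e. $T\neq \mathrm{id}$ with $\tilde{\theta}=\theta $. This gap is real: for a driftless UPD $dX_{t}=\sigma dW_{t}$ with $\sigma $ fixed, which satisfies Assumptions 2.1(i)--(ii) and (under high-frequency sampling) 3.1, every translation $T\left( x\right) =x+c$ solves (\ref{eq: equiv 2}) with $\tilde{\theta}=\theta $, so the structure is unidentified even though no $T$ with $\tilde{\theta}\neq \theta $ exists. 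Your stationarity argument closes the gap correctly: under Assumption 2.1(iii), uniqueness of the invariant probability law gives $f_{X}\left( T\left( x\right) ;\theta \right) T^{\prime }\left( x\right) =f_{X}\left( x;\theta \right) $, hence $F_{X}\left( T\left( x\right) ;\theta \right) =F_{X}\left( x;\theta \right) $ and $T=\mathrm{id}$. However, your attribution of this exclusion to ``the positive recurrence implied by Assumption 2.1(ii)'' is inaccurate (it echoes an imprecise remark in the paper): 2.1(ii) only yields recurrence, not positive recurrence --- the Brownian example above satisfies 2.1(ii) and carries translation invariance --- so the self-symmetry exclusion genuinely requires 2.1(iii) or an explicit extra hypothesis, neither of which the theorem assumes. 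In short: your main argument matches the paper's; your treatment of the second claim is more careful than the paper's and exposes a true imprecision in the statement, up to that one misattributed assumption.
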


Note that the above theorem does not require stationarity since it is only
concerned with the mapping $\mathcal{S\mapsto }\left( \mu _{Y}\left( \cdot ;%
\mathcal{S}\right) ,\sigma _{Y}\left( \cdot ;\mathcal{S}\right) \right) $
which is well-defined irrespectively of whether data is stationary. The
first part of the theorem provides a exact characterization of when any two
structures are equivalent, namely if there exists a transformation $T$ so
that (\ref{eq: equiv 1})-(\ref{eq: equiv 2}) hold. The second part comes as
a natural consequence of the first part: If there exists no such
transformation, then the data-generating structure must be identified.

Unfortunately, the above result may not always be useful in practice since
it requires us to search over all possible one-to-one transformations $T$
and for each of these verify that there exists no $\theta \neq \tilde{\theta}
$ for which eq. (\ref{eq: equiv 2}) holds. In some cases, it proves useful
to first normalize the UPD suitably and then verify eq. (\ref{eq: equiv 2})
in the normalized version. First note that for any one-to-one transformation 
$\bar{T}\left( \cdot ;\theta \right) :\mathcal{X}\mapsto \mathcal{\bar{X}}$,
an equivalent representation of the model is%
\begin{equation*}
Y_{t}=V\left( \bar{X}_{t}\right) ,
\end{equation*}%
where the "normalised" UPD $\bar{X}_{t}:=\bar{T}^{-1}\left( X_{t};\theta
\right) \in \mathcal{\bar{X}}$ solves%
\begin{equation*}
d\bar{X}_{t}=\mu _{\bar{X}}\left( \bar{X}_{t};\theta \right) dt+\sigma _{%
\bar{X}}\left( \bar{X}_{t};\theta \right) dW_{t},
\end{equation*}%
with%
\begin{eqnarray}
\mu _{\bar{X}}\left( \bar{x};\theta \right) &=&\frac{\mu _{X}\left( \bar{T}%
\left( \bar{x};\theta \right) ;\theta \right) }{\partial \bar{T}\left( \bar{x%
};\theta \right) /\left( \partial \bar{x}\right) }-\frac{1}{2}\sigma
_{X}^{2}\left( \bar{T}\left( \bar{x};\theta \right) ;\theta \right) \frac{%
\partial ^{2}\bar{T}\left( \bar{x};\theta \right) /\left( \partial \bar{x}%
^{2}\right) }{\partial \bar{T}\left( \bar{x};\theta \right) /\left( \partial 
\bar{x}\right) ^{3}},  \label{eq: norm X drift} \\
\sigma _{\bar{X}}\left( \bar{x};\theta \right) &=&\frac{\sigma _{X}\left( 
\bar{T}\left( \bar{x};\theta \right) ;\theta \right) }{\partial \bar{T}%
\left( \bar{x};\theta \right) /\left( \partial \bar{x}\right) }.
\label{eq: norm X diff}
\end{eqnarray}%
Given that the above representation is observationally equivalent to the
original model, we can still employ Theorem \ref{Th: Master ID} but with $%
\mu _{\bar{X}}$ and $\sigma _{\bar{X}}$ replacing $\mu _{X}$ and $\sigma
_{X} $. Verifying the identification conditions stated in the second part of
the theorem for the normalised versions will in some situations be easier by
judicious choice of $\bar{T}$.

Below, we present three particular normalising transformations that we have
found useful in this regard. The chosen transformations allow us to provide
easy-to-check conditions for a given UPD to be identified. For a given UPD,
the researcher is free to apply either of the three identification schemes
depending on which is the easier one to implement. The three schemes lead to
different normalizations/parametrizations, but they all lead to models that
are exactly identified (no over-identifying restrictions are imposed) and so
are observationally equivalent: The resulting form of $\mu _{Y}$\ and $%
\sigma _{Y}$\ will be identical irrespectively of which scheme is employed.

The three transformations that we consider also highlights three alternative
modelling approaches: Instead of starting with a parametric UPD as found in
the existing literature, such as Examples 1-3, one can alternatively build a
UPD with unit diffusion ($\sigma _{X}=1$), zero drift ($\mu _{X}=0$) or
known marginal distribution. As we shall see, either of these three
modelling approaches are in principle as flexible as the standard approach
where the researcher jointly specifies the drift and diffusion term.

\subsection{First Scheme}

In our first identification scheme, we choose to normalize $X_{t}$ by the
so-called Lamperti transform,%
\begin{equation*}
\bar{X}_{t}=\bar{T}^{-1}\left( X_{t};\theta \right) :=\gamma \left(
X_{t};\theta \right) ,\text{ \ \ }\gamma \left( x;\theta \right)
=\int_{x^{\ast }}^{x}\frac{1}{\sigma _{X}\left( z;\theta \right) }dz,
\end{equation*}%
for some $x^{\ast }\in \mathcal{X}$. The resulting process is a unit
diffusion process,%
\begin{equation*}
d\bar{X}_{t}=\mu _{\bar{X}}\left( \bar{X}_{t};\theta \right) dt+dW_{t},
\end{equation*}%
with domain $\mathcal{\bar{X}}=\left( \bar{x}_{l},\bar{x}_{r}\right) $,
where $\bar{x}_{r}=\lim_{x\rightarrow x_{r}^{+}}\gamma \left( x;\theta
\right) $ and $\bar{x}_{l}=\lim_{x\rightarrow x_{l}^{-}}\gamma \left(
x;\theta \right) $, and drift function%
\begin{equation}
\mu _{\bar{X}}\left( \bar{x};\theta \right) =\frac{\mu _{X}\left( \gamma
^{-1}\left( \bar{x};\theta \right) ;\theta \right) }{\sigma _{X}\left(
\gamma ^{-1}\left( \bar{x};\theta \right) ;\theta \right) }-\frac{1}{2}\frac{%
\partial \sigma _{X}}{\partial x}\left( \gamma ^{-1}\left( \bar{x};\theta
\right) ;\theta \right) .  \label{eq: Xbar drift}
\end{equation}%
For the unit diffusion version of the UPD, the equivalence condition (\ref%
{eq: equiv 2})(ii) becomes%
\begin{equation*}
1=\sigma _{\bar{X}}\left( \bar{x};\theta \right) =\sigma _{T^{-1}\left( \bar{%
X}\right) }\left( \bar{x};\tilde{\theta}\right) =\frac{1}{\partial T\left( 
\bar{x}\right) /\left( \partial x\right) },
\end{equation*}%
which can only hold if $T\left( \bar{x}\right) =\bar{x}+\eta $ for some
constant $\eta \in \mathbb{R}$. Thus, we can restrict attention to this
class of transformations and (\ref{eq: equiv 2})(i) becomes:

\begin{description}
\item[Assumption 3.2.] With $\mu _{\bar{X}}$ given in (\ref{eq: Xbar drift}%
): There exists no $\eta \neq 0$ and $\tilde{\theta}\neq \theta $ such that $%
\mu _{\bar{X}}(\bar{x};\tilde{\theta})=\mu _{\bar{X}}\left( \bar{x}+\eta
;\theta \right) $ for all $\bar{x}\in \mathcal{\bar{X}}$.
\end{description}

\bigskip

Assumption 3.2 imposes a normalization condition on the transformed drift
function to ensure identification. When verifying Assumption 3.2 for the
transformed unit diffusion $\bar{X}$ defined above, we will generally need
to fix some of the parameters that enter $\mu _{X}\left( x;\theta \right) $
and $\sigma _{X}^{2}\left( x;\theta \right) $ of the original process $X$,
see below.

\begin{corollary}
\label{Cor: scheme 1}Under Assumptions 2.1(i), 2.2 and 3.1, $\mathcal{S}$ is
identified if and only if Assumption 3.2 is satisfied.
\end{corollary}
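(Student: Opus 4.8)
The plan is to derive Corollary~\ref{Cor: scheme 1} as a direct specialization of Theorem~\ref{Th: Master ID} to the case where the UPD has been put into its unit-diffusion (Lamperti) normalization. The key observation is that the master identification theorem characterizes observational equivalence through the existence of a one-to-one transformation $T$ satisfying conditions (i) and (ii) of \eqref{eq: equiv 2}, and the Lamperti normalization drastically shrinks the set of admissible $T$'s that I must search over.

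First I would invoke the fact, established in the discussion preceding the corollary, that working with the normalized UPD $\bar{X}$ is observationally equivalent to working with the original UPD $X$; hence Theorem~\ref{Th: Master ID} applies verbatim with $\mu_{\bar{X}}$ and $\sigma_{\bar{X}}$ replacing $\mu_X$ and $\sigma_X$. Since $\sigma_{\bar{X}}(\bar{x};\theta)=1$ identically, the diffusion-matching condition \eqref{eq: equiv 2}(ii) reduces to
\begin{equation*}
1=\sigma_{T^{-1}(\bar{X})}(\bar{x};\tilde{\theta})=\frac{1}{\partial T(\bar{x})/(\partial x)},
\end{equation*}
forcing $\partial T(\bar{x})/\partial\bar{x}=1$, i.e. $T(\bar{x})=\bar{x}+\eta$ for some constant $\eta\in\mathbb{R}$. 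This is the crucial reduction: among all one-to-one transformations, only the pure translations survive the constraint that both processes have unit diffusion. With $T$ an affine shift, the second derivative $\partial^2 T/\partial\bar{x}^2$ vanishes, so the drift-matching condition \eqref{eq: equiv 2}(i) collapses from the general expression \eqref{eq: mu tilde X} down to the clean statement $\mu_{\bar{X}}(\bar{x};\tilde{\theta})=\mu_{\bar{X}}(\bar{x}+\eta;\theta)$.

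Next I would translate the identification criterion of Theorem~\ref{Th: Master ID} into the present notation. By the theorem, $\mathcal{S}$ fails to be identified precisely when there exists a nontrivial transformation (here, a translation with $\eta\neq 0$) and a $\tilde{\theta}\neq\theta$ making the drift condition hold. Therefore $\mathcal{S}$ is identified if and only if no such pair $(\eta,\tilde{\theta})$ with $\eta\neq 0$, $\tilde{\theta}\neq\theta$ exists satisfying $\mu_{\bar{X}}(\bar{x};\tilde{\theta})=\mu_{\bar{X}}(\bar{x}+\eta;\theta)$ for all $\bar{x}$---which is exactly Assumption~3.2. One subtlety to handle carefully is the case $\eta\neq 0$ but $\tilde\theta=\theta$ (and vice versa): I would note that $\tilde V(x)=V(x+\eta)$ together with $\tilde\theta=\theta$ would already make the structures distinct as elements $(\tilde\theta,\tilde V)\neq(\theta,V)$ unless $\eta=0$, so Assumption~3.2 is stated to rule out exactly the equivalences that violate identification; aligning the quantifiers in Assumption~3.2 with the ``$\theta\neq\tilde\theta$'' clause of the theorem is what makes the biconditional exact.

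The main obstacle I anticipate is not analytical depth but bookkeeping precision: I must verify that the reduction $T(\bar{x})=\bar{x}+\eta$ is genuinely forced (the equation $\partial T/\partial\bar{x}=1$ has only affine solutions, using that $T$ is one-to-one and smooth, so no sign ambiguity arises because both domains are $\mathcal{\bar{X}}$ and $T$ maps $\mathcal{\bar{X}}$ onto itself), and that the boundary/domain constraints on $T$ are compatible with a translation. I would also confirm that Assumptions~2.1(i) and 2.2 suffice for the Lamperti transform $\gamma$ to be well-defined and twice continuously differentiable (guaranteeing $\bar X$ is a legitimate member of the model class), so that invoking Theorem~\ref{Th: Master ID} on the normalized process is justified. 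Once these routine verifications are in place, the corollary follows immediately by matching Assumption~3.2 to the identification criterion of the master theorem.
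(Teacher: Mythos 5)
Your proposal is correct and follows essentially the same route as the paper: the paper's own argument (given in the text preceding the corollary rather than in a displayed proof) likewise applies Theorem \ref{Th: Master ID} to the Lamperti-normalized UPD, uses the unit-diffusion constraint in (\ref{eq: equiv 2})(ii) to force $T\left( \bar{x}\right) =\bar{x}+\eta $, and identifies the surviving drift condition (\ref{eq: equiv 2})(i) with Assumption 3.2. Your additional bookkeeping points (domain compatibility of translations, quantifier alignment in the ``if and only if'') are sensible refinements of the same argument, not a different method.
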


\bigskip

The above transformation result can be applied to standard parametric
specifications when $\gamma \left( x;\theta \right) $ is available on
closed-form. But it also highlights that in terms of modelling copula
diffusions, we can without loss of generality build a model where we from
the outset restrict $\sigma _{X}=1$ and only model the drift term $\mu _{X}$%
. For example, we could choose the following flexible polynomial drift model
where we have already normalized the diffusion term:%
\begin{equation}
dX_{t}=\left( \sum_{i=1}^{l}\alpha _{i}X_{t}^{i}\right) dt+dW_{t},
\end{equation}%
where $\theta =\left( \alpha _{1},...,\alpha _{l}\right) $. Corollary \ref%
{Cor: scheme 1} shows that this particular copula diffusion specification is
identified without further restrictions on $\theta $. Below we apply
Corollary \ref{Cor: scheme 1} to some of the standard parametric diffusions
introduced earlier:\bigskip

\noindent \textbf{Example 1 (continued). }The Lamperti transform of the OU\
process in (\ref{OU}) is given by 
\begin{equation*}
d\bar{X}_{t}=\kappa \left( \alpha /\sigma -\bar{X}_{t}\right) dt+dW_{t}.
\end{equation*}%
Since $\alpha /\sigma $ is a location shift of $\bar{X}$, we need to
normalize $\alpha /\sigma $ in order for the identification condition 3.3 to
be satisfied; one such is $\alpha /\sigma =0$ leading to the following
identified model,%
\begin{equation}
d\bar{X}_{t}=-\kappa \bar{X}_{t}dt+dW_{t}.  \label{OU_LT}
\end{equation}

\bigskip

\noindent \textbf{Example 2 (continued). }The Lamperti transform of the CIR\
diffusion in (\ref{CIR}) is given by%
\begin{equation}
d\bar{X}_{t}=\left[ \kappa \left( \frac{2}{\bar{X}_{t}}\frac{\alpha }{\sigma
^{2}}-\frac{\bar{X}_{t}}{2}\right) -\frac{1}{2\bar{X}_{t}}\right] dt+dW_{t},
\label{CIR_LT}
\end{equation}%
which only depends on $\theta =\left( \kappa ,\alpha ^{\ast }\right) $ where 
$\alpha ^{\ast }=\alpha /\sigma ^{2}$. Note that the dimension of the
parameter vector reduced from $3$ to $2$. Crucially, it also suggests that
we can only identify $\alpha $ and $\sigma ^{2}$ up to a ratio. Hence,
normalization requires fixing either $\alpha $, $\sigma ^{2}$, or their
ratio.

\bigskip

\noindent \textbf{Example 3 (continued). }It can be easily verified that the
Lamperti transform of the NLDCEV diffusion in (\ref{NLDCEV}) takes the form%
\begin{equation}
d\bar{X}_{t}=\left[ \sum_{i=-k}^{l}\alpha _{i}^{\ast }\bar{X}_{t}^{\frac{%
i-\beta }{1-\beta }}-\frac{\beta }{2\left( 1-\beta \right) }\bar{X}_{t}^{-1}%
\right] dt+dW_{t},  \label{NLDCEV_LT}
\end{equation}%
where $\alpha _{i}^{\ast }:=\alpha _{i}\sigma ^{\frac{i-1}{1-\beta }}\left(
1-\beta \right) ^{\frac{i-\beta }{1-\beta }}$, $i=-k,...,l$. Hence, the
parameters $\theta =\left( \beta ,\alpha _{-k}^{\ast },...,\alpha
_{-l}^{\ast }\right) $ are identified and the number of parameters is
reduced from $l+k+3$ to $l+k+2$. Note that just as (\ref{OU}) and (\ref{CIR}%
) are special cases of (\ref{NLDCEV}), both (\ref{OU_LT}) and (\ref{CIR_LT})
are special cases of (\ref{NLDCEV_LT}).

\subsection{Second Scheme}

Our second identification strategy transforms $X$ by its scale measure
defined in eq. (\ref{eq: scale density}),%
\begin{equation*}
\bar{X}_{t}:=S\left( X_{t};\theta \right) ,
\end{equation*}
which brings the diffusion process onto its natural scale,%
\begin{equation*}
d\bar{X}_{t}=\sigma _{\bar{X}}\left( \bar{X}_{t};\theta \right) dW_{t},
\end{equation*}%
where the drift is zero (and so known) while%
\begin{equation}
\sigma _{\bar{X}}^{2}\left( \bar{x};\theta \right) =s^{2}\left( S^{-1}\left( 
\bar{x};\theta \right) ;\theta \right) \sigma ^{2}\left( S^{-1}\left( \bar{x}%
;\theta \right) ;\theta \right) .  \label{eq: scale diffusion term}
\end{equation}

Since the drift term is zero, the identification condition (\ref{eq: equiv 2}%
)(i) becomes%
\begin{equation}
0=-\frac{1}{2}\sigma _{\bar{X}}^{2}\left( T\left( \bar{x}\right) ;\tilde{%
\theta}\right) \frac{\partial ^{2}T\left( \bar{x}\right) /\left( \partial 
\bar{x}^{2}\right) }{\partial T\left( \bar{x}\right) /\left( \partial \bar{x}%
\right) ^{3}},
\end{equation}%
which can only hold if $\partial ^{2}T\left( \bar{x}\right) /\left( \partial 
\bar{x}^{2}\right) =0$. We can therefore restrict attention to linear
transformations $T\left( \bar{x}\right) =\eta _{1}\bar{x}+\eta _{2}$, for
some constants $\eta _{1},\eta _{2}\in \mathbb{R}$, in which case (\ref{eq:
equiv 2})(ii) becomes:\bigskip

\begin{description}
\item[Assumption 3.3.] With $\sigma _{\bar{X}}^{2}$ given in (\ref{eq: scale
diffusion term}): There exists no $\eta _{1}\neq 1$, $\eta _{2}\neq 0$ and $%
\tilde{\theta}\neq \theta $ such that $\sigma _{\bar{X}}^{2}(\bar{x};\tilde{%
\theta})=\sigma _{\bar{X}}^{2}\left( \eta _{1}\bar{x}+\eta _{2};\theta
\right) /\eta _{1}^{2}$ for all $\bar{x}\in \mathcal{\bar{X}}$.
\end{description}

\bigskip

In comparison to Assumption 3.2, we here have to impose two normalizations
to ensure identification. The intuition for this is that setting the drift
to zero does not act as a complete normalization of the process:\ Any
additional scale transformation of $\bar{X}$ still leads to a zero-drift
process. Therefore, for the third scheme to work we need both a scale and
location normalization.\bigskip

\begin{theorem}
\label{Cor: scheme 2}Under Assumptions 2.1(i)--(ii), 2.2 and 3.1, $\mathcal{S%
}$ is identified if and only if Assumption 3.3 is satisfied.
\end{theorem}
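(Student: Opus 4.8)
The plan is to apply the general identification result in Theorem \ref{Th: Master ID} to the natural-scale normalization of the UPD, exactly as the preceding discussion sets up the framework. By the equivalent-representation argument, identification of $\mathcal{S}$ within the model is unchanged if we replace the original UPD $(\mu_X,\sigma_X)$ by its natural-scale version $\bar{X}_t = S(X_t;\theta)$, whose drift is identically zero and whose diffusion term is $\sigma_{\bar{X}}^2$ given in (\ref{eq: scale diffusion term}). So the task reduces to characterizing, via Theorem \ref{Th: Master ID}, when two natural-scale structures are observationally equivalent.

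First I would invoke Theorem \ref{Th: Master ID} in the normalized coordinates: two structures $\mathcal{S}\sim\mathcal{\tilde{S}}$ if and only if there is a one-to-one $T:\bar{\mathcal{X}}\mapsto\bar{\mathcal{X}}$ satisfying (\ref{eq: equiv 2})(i)--(ii) with $\mu_{\bar{X}},\sigma_{\bar{X}}$ in place of $\mu_X,\sigma_X$. Since the natural-scale drift is zero, condition (\ref{eq: equiv 2})(i) reduces — as displayed just before Assumption 3.3 — to
\begin{equation*}
-\frac{1}{2}\sigma_{\bar{X}}^2\left(T(\bar{x});\tilde{\theta}\right)\frac{\partial^2 T(\bar{x})/\left(\partial\bar{x}^2\right)}{\partial T(\bar{x})/\left(\partial\bar{x}\right)^3}=0.
\end{equation*}
Because $\sigma_{\bar{X}}^2>0$ (inherited from $\sigma_X^2>0$ under Assumption 2.1(i) together with the positivity of the scale density) and $\partial T/\partial\bar{x}$ is finite and nonzero for a one-to-one $C^2$ map, this forces $\partial^2 T(\bar{x})/\partial\bar{x}^2=0$ for all $\bar{x}$, i.e. $T$ must be affine, $T(\bar{x})=\eta_1\bar{x}+\eta_2$ with $\eta_1\neq 0$. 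Substituting this affine $T$ into (\ref{eq: equiv 2})(ii), where $\sigma_{T^{-1}(\bar{X})}(\bar{x};\tilde{\theta})=\sigma_{\bar{X}}(T(\bar{x});\tilde{\theta})/\left(\partial T(\bar{x})/\partial\bar{x}\right)=\sigma_{\bar{X}}(\eta_1\bar{x}+\eta_2;\tilde{\theta})/\eta_1$, yields the equivalence condition $\sigma_{\bar{X}}^2(\bar{x};\theta)=\sigma_{\bar{X}}^2(\eta_1\bar{x}+\eta_2;\tilde{\theta})/\eta_1^2$. Reading this contrapositively gives precisely Assumption 3.3: the structure is identified if and only if no such affine pair $(\eta_1,\eta_2)$ with $\eta_1\neq 1$, $\eta_2\neq 0$, and $\tilde{\theta}\neq\theta$ produces a match.

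The two directions then follow mechanically. For sufficiency, if Assumption 3.3 holds then no affine $T$ can satisfy both parts of (\ref{eq: equiv 2}) with $\tilde{\theta}\neq\theta$; since the analysis above shows only affine $T$ can satisfy part (i), there is no admissible $T$ at all, and Theorem \ref{Th: Master ID}'s identification criterion delivers $\mathcal{S}=\mathcal{\tilde{S}}$. For necessity, if Assumption 3.3 fails, the violating $(\eta_1,\eta_2,\tilde{\theta})$ furnishes an affine $T$ satisfying (\ref{eq: equiv 2})(i)--(ii) for $\tilde{\theta}\neq\theta$, and then (\ref{eq: equiv 1}) defines a companion transformation $\tilde{V}=V\circ T$, exhibiting a second observationally equivalent structure and hence non-identification.

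The main obstacle is the reduction of condition (\ref{eq: equiv 2})(i) to affine $T$: one must argue rigorously that the vanishing of the product forces $\partial^2 T/\partial\bar{x}^2\equiv 0$, which requires knowing $\sigma_{\bar{X}}^2>0$ everywhere on $\bar{\mathcal{X}}$ and that the denominator $\partial T(\bar{x})/\partial\bar{x}$ never vanishes nor blows up. The positivity of $\sigma_{\bar{X}}^2$ is where Assumption 2.1(ii) enters — it guarantees the natural scale $S$ is a well-defined diffeomorphism onto $\bar{\mathcal{X}}$ with strictly positive scale density, so that $\sigma_{\bar{X}}^2=s^2\sigma_X^2$ stays strictly positive. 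I would flag that this is the reason the statement requires Assumptions 2.1(i)--(ii) (the full recurrence/scale conditions), whereas the first scheme needed only 2.1(i); the second normalization relies on the scale measure being a genuine change of variables, not merely on the local existence of a solution.
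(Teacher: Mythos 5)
Your proposal is correct and takes essentially the same route as the paper: the paper's justification for Theorem \ref{Cor: scheme 2} is exactly the in-text derivation that applies Theorem \ref{Th: Master ID} to the natural-scale normalization $\bar{X}_{t}=S\left( X_{t};\theta \right) $, observes that the zero-drift condition (\ref{eq: equiv 2})(i) forces $\partial ^{2}T\left( \bar{x}\right) /\left( \partial \bar{x}^{2}\right) =0$ so that $T$ must be affine, and then reads (\ref{eq: equiv 2})(ii) for affine $T$ as Assumption 3.3. Your closing remark on why Assumption 2.1(ii) is needed (to make $S$ a genuine diffeomorphism onto $\mathcal{\bar{X}}$ with $\sigma _{\bar{X}}^{2}>0$, which is what licenses the reduction to affine $T$) is a sound elaboration of a step the paper leaves implicit.
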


\bigskip

Compared to the first identification scheme, it\ is noticeably harder to
apply this one to existing parametric diffusion models since the inverse of
the scale transform is usually not available in closed form. But, similar to
the first identification scheme, the result shows that without loss of
flexibility, we can focus on UPDs with zero drift and then model the
diffusion term in a flexible manner, e.g.,%
\begin{equation}
dX_{t}=\exp \left( \sum_{i=1}^{l-1}\beta _{i}X_{t}^{i}+\beta _{l}\left\vert
X_{t}\right\vert ^{l}\right) dW_{t}.
\end{equation}%
Corollary \ref{Cor: scheme 2} shows that this UPD is identified together
with $V$ without any further parameter restrictions on $\theta =\left( \beta
_{1},...,\beta _{l}\right) $.

\subsection{Third scheme}

Our third identification strategy transforms a given stationary UPD by its
marginal cdf,%
\begin{equation}
\bar{X}_{t}=F_{X}\left( X_{t};\theta \right) .  \label{eq: cdf transform}
\end{equation}%
In this case, there is generally no simplification in terms of the drift and
diffusion term, which take the form 
\begin{eqnarray}
\mu _{\bar{X}}\left( \bar{x};\theta \right)  &=&\mu _{X}\left(
F_{X}^{-1}\left( \bar{x};\theta \right) ;\theta \right) f_{X}\left(
F_{X}^{-1}\left( \bar{x};\theta \right) ;\theta \right)   \label{eq: drift 3}
\\
&&+\frac{1}{2}\sigma _{X}^{2}\left( F_{X}^{-1}\left( \bar{x};\theta \right)
;\theta \right) f_{X}^{\prime }\left( F_{X}^{-1}\left( \bar{x};\theta
\right) ;\theta \right)   \notag
\end{eqnarray}%
and%
\begin{equation}
\sigma _{\bar{X}}\left( \bar{x};\theta \right) =\sigma _{X}\left(
F_{X}^{-1}\left( \bar{x};\theta \right) ;\theta \right) f_{X}\left(
F_{X}^{-1}\left( \bar{x};\theta \right) ;\theta \right) .  \label{eq: diff 3}
\end{equation}%
for $\bar{x}\in \mathcal{\bar{X}}=\left( 0,1\right) $. But the marginal
distribution is now known with $\bar{X}_{t}\sim U\left( 0,1\right) $ and we
can directly identify the transformation function by $U\left( y\right)
=F_{Y}\left( y\right) $, c.f. eq. (\ref{U}). The identification condition
then takes the form:\bigskip 

\begin{description}
\item[Assumption 3.4.] With $\mu _{\bar{X}}\left( \bar{x};\theta \right) $
and $\sigma _{\bar{X}}\left( \bar{x};\theta \right) $ given in eqs. (\ref%
{eq: drift 3})-(\ref{eq: diff 3}), the following hold:%
\begin{equation*}
\forall \bar{x}\in \left( 0,1\right) :\mu _{\bar{X}}\left( \bar{x};\theta
\right) =\mu _{\bar{X}}\left( \bar{x};\tilde{\theta}\right) \text{ and }%
\sigma _{\bar{X}}\left( \bar{x};\theta \right) =\sigma _{\bar{X}}\left( \bar{%
x};\tilde{\theta}\right) \Leftrightarrow \theta =\tilde{\theta}.
\end{equation*}%
\bigskip
\end{description}

\begin{corollary}
\label{Cor: Scheme 3}Under Assumptions 2.1-2.2 and 3.1, $\mathcal{S}$ is
identified if and only if Assumption 3.4 is satisfied.
\end{corollary}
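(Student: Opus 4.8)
The plan is to apply the master identification result, Theorem~\ref{Th: Master ID}, in the normalized coordinates induced by the cdf transform $\bar{X}_t=F_X(X_t;\theta)$, exactly as described in the paragraph preceding the statement. In these coordinates the normalized UPD has drift $\mu_{\bar{X}}$ and diffusion $\sigma_{\bar{X}}$ given in (\ref{eq: drift 3})--(\ref{eq: diff 3}), lives on $\bar{\mathcal{X}}=(0,1)$, and---crucially---has the uniform stationary marginal for \emph{every} value of $\theta$. Since this representation is observationally equivalent to the original model, Theorem~\ref{Th: Master ID} applies verbatim with $\mu_{\bar{X}},\sigma_{\bar{X}}$ replacing $\mu_X,\sigma_X$: two structures are observationally equivalent if and only if there is a one-to-one (hence, since both $V$ and $\tilde V$ are increasing under Assumption~2.2, strictly increasing) transformation $T:(0,1)\to(0,1)$ satisfying the analogues of (\ref{eq: equiv 1})--(\ref{eq: equiv 2}).

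The key step---and the one I expect to carry the whole argument---is to show that this $T$ must be the identity. Here I would exploit stationarity (Assumption~2.1(iii)): if the equivalence conditions (\ref{eq: equiv 2})(i)--(ii) hold, then the process $T^{-1}(\bar{X})$ built under $\theta$ has the same drift and diffusion as the normalized UPD under $\tilde\theta$, and hence, by (\ref{mpdf_x}), the same stationary marginal, namely the uniform law on $(0,1)$. On the other hand, $T^{-1}$ pushes forward the law of $\bar{X}$, so the CDF of $T^{-1}(\bar X)$ at any $u$ equals $\Pr(\bar X\le T(u))=T(u)$; matching this to $u$ forces $T(u)=u$, i.e.\ $T=\mathrm{id}$. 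This is precisely where the third scheme differs from the first two: rather than leaving a residual location family (Scheme~1) or affine family (Scheme~2) of admissible $T$, fixing the marginal to be uniform eliminates all freedom in $T$ at once, which is also why full Assumption~2.1 (including (iii)) is needed here whereas Schemes~1 and~2 do not require stationarity.

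With $T=\mathrm{id}$ the equivalence conditions collapse. Condition (\ref{eq: equiv 1}) forces the two normalized transformations to coincide, i.e.\ $\tilde V\circ F_X^{-1}(\cdot;\tilde\theta)=V\circ F_X^{-1}(\cdot;\theta)$ on $(0,1)$; in particular the transformation itself is pinned down through $U(y)=F_Y(y)$ as in (\ref{U}). Simultaneously, (\ref{eq: equiv 2})(i)--(ii) reduce to $\mu_{\bar X}(\cdot;\tilde\theta)=\mu_{\bar X}(\cdot;\theta)$ and $\sigma_{\bar X}(\cdot;\tilde\theta)=\sigma_{\bar X}(\cdot;\theta)$ on $(0,1)$. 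Consequently $\mathcal S\sim\tilde{\mathcal S}$ with $\theta\ne\tilde\theta$ holds precisely when these two functional equalities admit a solution $\tilde\theta\ne\theta$---that is, precisely when Assumption~3.4 fails. For the forward implication I would simply read off that identification (no such $\tilde\theta$) is equivalent to Assumption~3.4; for the converse, given any $\tilde\theta\ne\theta$ solving the equalities I would exhibit the companion transformation $\tilde V=V\circ F_X^{-1}(\cdot;\theta)\circ F_X(\cdot;\tilde\theta)$, which together with $T=\mathrm{id}$ verifies both conditions of Theorem~\ref{Th: Master ID} and thus produces an observationally equivalent structure with $\theta\ne\tilde\theta$, contradicting identification. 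The only delicate point is the stationary-marginal argument of the second paragraph, and once that is secured the two directions follow immediately.
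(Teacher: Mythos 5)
Your proof is correct and takes essentially the same route as the paper (which states this corollary without a separate appendix proof, relying on the discussion preceding Assumption 3.4): apply Theorem \ref{Th: Master ID} to the cdf-normalized UPD, use the fact that the normalized process has uniform marginal on $\left( 0,1\right) $ for \emph{every} parameter value to force $T=\mathrm{id}$, and then read off that the surviving equivalence conditions are exactly the functional equalities whose injectivity is Assumption 3.4. Your pushforward computation $\Pr \left( \bar{X}\leq T\left( u\right) \right) =T\left( u\right) =u$ simply makes explicit the step the paper compresses into ``the marginal distribution is now known\ldots and we can directly identify the transformation function by $U\left( y\right) =F_{Y}\left( y\right) $,'' and your observation that full Assumption 2.1 (including stationarity) is needed here, unlike in Schemes 1--2, matches the corollary's hypotheses.
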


The above result is only useful for showing identification of a given UPD if 
$F^{-1}\left( \bar{x};\theta \right) $ is available on closed form. But
similar to the previous identification schemes, it demonstrates we can
restrict attention to diffusions with known marginal distributions in the
model building phase. Specifically, one can choose a known density $%
f_{X}\left( x\right) $ that describes the stationary distribution of $X$
together with a parametric specification for, say, the drift function. We
can then rearrange eq. (\ref{mpdf_x}) to back out the diffusion term of the
UPD:%
\begin{equation}
\sigma _{X}^{2}\left( x;\theta \right) =\frac{2}{f_{X}\left( x\right) }%
\int_{x_{l}}^{x}\mu _{X}\left( z;\theta \right) f_{X}\left( z\right) dz.
\label{eq: Model 1B}
\end{equation}%
If the drift is specified so that $\mu _{X}\left( \cdot ;\theta \right) \neq
\mu _{X}(\cdot ;\tilde{\theta})$ for $\theta \neq \tilde{\theta}$, then
Assumption 3.4 will be satisfied for this model. Alternatively, one could
choose a parametric specification of the diffusion term and then derive the
corresponding drift term of the UPD satisfying%
\begin{equation*}
\mu _{X}\left( x;\theta \right) =\frac{1}{2f_{X}\left( x\right) }\frac{%
\partial }{\partial x}\left[ \sigma _{X}^{2}\left( x;\theta \right)
f_{X}\left( x\right) \right] .
\end{equation*}%
The resulting copula diffusion model is identified as long as the chosen
diffusion term satisfies $\sigma _{X}\left( \cdot ;\theta \right) \neq
\sigma _{X}(\cdot ;\tilde{\theta})$ for $\theta \neq \tilde{\theta}$, then
Assumption 3.4 will be satisfied for this model.

Below, we apply the third identification scheme to the OU\ and CIR\
model:\bigskip

\noindent \textbf{Example 1 (continued). }The stationary distribution of (%
\ref{OU}) is $N\left( \alpha ,v^{2}\right) $ with $v^{2}=\sigma ^{2}/2\kappa 
$ and so the marginal density and cdf takes the form $f_{X}\left( x;\theta
\right) =\frac{1}{v}\phi \left( \frac{x-\alpha }{v}\right) $ and $%
F_{X}\left( x;\theta \right) =\Phi \left( \frac{x-\alpha }{v}\right) $,
where $\phi $ and $\Phi $ denote the density and cdf of the $N\left(
0,1\right) $ distribution. Applying the transformation (\ref{eq: cdf
transform}) yields, after some tedious calculations,%
\begin{equation*}
d\bar{X}_{t}=-2\kappa \Phi ^{-1}\left( \bar{X}_{t}\right) \phi \left( \Phi
^{-1}\left( \bar{X}_{t}\right) \right) dt+\sqrt{2\kappa }\phi \left( \Phi
^{-1}\left( \bar{X}_{t}\right) \right) dW_{t},
\end{equation*}%
which is independent of $\alpha $ and $\sigma ^{2}$ and these therefore have
to be fixed, leaving $\kappa $ as the only free parameter. This is the same
finding as with the first identification strategy.\bigskip

\noindent \textbf{Example 2 (continued). }The stationary distribution of the
CIR\ process is a $\Gamma $-distribution with scale parameter $\omega
=2\kappa /\sigma ^{2}$ and shape parameter $\nu =2\kappa \alpha /\sigma ^{2}$%
. Thus, the marginal density and cdf can be written as%
\begin{eqnarray*}
f_{X}\left( x;\theta \right) &=&f_{X}\left( x;\omega ,\nu \right) =\frac{%
\omega ^{\nu }}{\Gamma \left( \nu \right) }x^{\nu -1}e^{-\omega x} \\
F_{X}\left( x;\theta \right) &=&F_{X}\left( x;\omega ,\nu \right) =\frac{1}{%
\Gamma \left( \nu \right) }\gamma \left( \nu ,\omega x\right)
\end{eqnarray*}%
where $\Gamma \left( \nu \right) $ is the gamma function and $\gamma \left(
\nu ,\omega x\right) $ is the lower incomplete gamma function. Applying the
transformation (\ref{eq: cdf transform}) yields\textbf{\ }%
\begin{equation*}
\mu _{\bar{X}}\left( \bar{x};\theta \right) =\left[ \kappa \left( \frac{\nu 
}{2\kappa }-\frac{\gamma ^{-1}\left( \nu ,\bar{x}\Gamma \left( \nu \right)
\right) }{2\kappa }\right) +\left( \frac{\nu -1}{2}-\frac{\gamma ^{-1}\left(
\nu ,\bar{x}\Gamma \left( \nu \right) \right) }{2}\right) \right] \frac{%
2\kappa }{\Gamma \left( \nu \right) }\gamma ^{-1}\left( \nu ,\bar{x}\Gamma
\left( \nu \right) \right) ^{\nu -1}e^{-\gamma ^{-1}\left( \nu ,\bar{x}%
\Gamma \left( \nu \right) \right) }
\end{equation*}%
and%
\begin{equation*}
\sigma _{\bar{X}}^{2}\left( \bar{x};\theta \right) =2\kappa \gamma
^{-1}\left( \nu ,\bar{x}\Gamma \left( \nu \right) \right) \left[ \frac{1}{%
\Gamma \left( \nu \right) }\gamma ^{-1}\left( \nu ,\bar{x}\Gamma \left( \nu
\right) \right) ^{\nu -1}e^{-\gamma ^{-1}\left( \nu ,\bar{x}\Gamma \left(
\nu \right) \right) }\right] ^{2}.
\end{equation*}%
Note that $\mu _{\bar{X}}\left( \bar{x};\theta \right) $ and $\sigma _{\bar{X%
}}^{2}\left( \bar{x};\theta \right) $ only depend on $\kappa $ and $\nu $,
which means we can only identify $\alpha $ and $\sigma ^{2}$ up to a ratio
say $\alpha ^{\ast }=\alpha /\sigma ^{2}$. Hence, either $\alpha $ or $%
\sigma ^{2}$ must be fixed, which is in accordance with what we found when
applying the first identification strategy to the CIR. We could, for
example, set $\sigma ^{2}=2\kappa $ which leads to the following normalized
CIR%
\begin{equation*}
dX_{t}=\kappa \left( \alpha -X_{t}\right) dt+\sqrt{2\kappa X_{t}}dW_{t}.
\end{equation*}

\section{Estimation\label{Sec_Inference}}

In this section we develop two alternative semiparametric estimators of $%
\theta $ and $V$ for a given specification of the UPD. The first takes the
form of a two-step Pseudo Maximum Likelihood Estimator (PMLE). The second is
a semiparametric sieve-based ML estimator (SMLE). We consider two different
scenarios when developing estimators: In the first one (see Section \ref%
{sec: est low-freq}), $Y$ is observed at low frequency which we formally
define as the case when $\Delta >0$ is fixed as $n\rightarrow \infty $. In
the second one (see Section \ref{sec: est high-freq}), high-frequency data
is available so that $\Delta \rightarrow 0$ as $n\rightarrow \infty $.

\subsection{Low-frequency estimators\label{sec: est low-freq}}

To motivate the two estimators, suppose that $U$ is known, in which case the
MLE of $\theta $ is given by%
\begin{equation*}
\hat{\theta}_{\text{MLE}}=\arg \max_{\theta \in \Theta }L_{n}\left( \theta
,U\right) ,
\end{equation*}%
where $L_{n}\left( \theta ,U\right) $ is the log-likelihood of $\left\{
Y_{i\Delta }:i=0,1,...,n\right\} $,%
\begin{equation}
L_{n}\left( \theta ,U\right) =\frac{1}{n}\sum_{i=1}^{n}\left\{ \log
p_{X}\left( U\left( Y_{i\Delta }\right) |U\left( Y_{\left( i-1\right) \Delta
}\right) ;\theta \right) +\log U^{\prime }\left( Y_{i\Delta }\right)
\right\} ,  \label{eq: L(f) def}
\end{equation}%
where $p_{X}$ was is defined in eq. (\ref{eq: p_X def}). If $U$ is unknown,
the above estimator is not feasible and we instead have to estimate it
together with $\theta $.

Our PMLE assumes $Y$ is stationary in which case $U$ satisfies eq. (\ref{U}%
), where $F_{X}$ is known up to $\theta $ while $F_{Y}$ is unknown. The
latter can be estimated by the empirical cdf defined as%
\begin{equation*}
\tilde{F}_{Y}\left( y\right) =\frac{1}{n+1}\sum_{i=0}^{n}\mathbb{I}\left\{
Y_{i\Delta }\leq y\right\} ,
\end{equation*}%
where $\mathbb{I}\left\{ \cdot \right\} $ denotes the indicator function, or
alternatively by the following kernel smoothed empirical cdf, 
\begin{equation}
\hat{F}_{Y}\left( y\right) =\frac{1}{n+1}\sum_{i=0}^{n}\mathcal{K}_{h}\left(
Y_{i\Delta }-y\right) ,  \label{eq: kernel cdf}
\end{equation}%
where $\mathcal{K}_{h}\left( y\right) =\mathcal{K}\left( y/h\right) $ with $%
\mathcal{K}\left( y\right) =\int_{-\infty }^{y}K\left( z\right) dz$, $K$
being a kernel (e.g., the standard normal density), and $h>0$ a bandwidth.
Replacing $F_{Y}$ in eq. (\ref{U}) with either $\tilde{F}_{Y}$ or $\hat{F}%
_{Y}$, we obtain the following two alternative estimators of $U$,%
\begin{equation}
\tilde{U}\left( y;\theta \right) =F_{X}^{-1}(\tilde{F}_{Y}\left( y\right)
;\theta );\text{ \ \ }\hat{U}\left( y;\theta \right) =F_{X}^{-1}(\hat{F}%
_{Y}\left( y\right) ;\theta ).  \label{eq: U-tilde}
\end{equation}%
Since $\hat{F}_{Y}\left( y\right) =\tilde{F}_{Y}\left( y\right) +O\left(
h^{2}\right) $, the above two estimators of $U$ will be first-order
asymptotically equivalent under appropriate bandwidth conditions. A natural
way to estimate $\theta $ in our semiparametric framework would then be to
substitute either $\hat{U}\left( y;\theta \right) $ or $\tilde{U}\left(
y;\theta \right) $ into $L_{n}\left( \theta ,U\right) $. However, in the
latter case, this is not possible since $L_{n}\left( \theta ,U\right) $
depends on $U^{\prime }$ and $\tilde{U}$ is not differentiable. However,
note that%
\begin{equation}
U^{\prime }\left( y\right) =\frac{f_{Y}\left( y\right) }{f_{X}\left( U\left(
y\right) ;\theta \right) },  \label{U1}
\end{equation}%
so that $\log U^{\prime }\left( y\right) =\log f_{Y}\left( y\right) -\log
f_{X}\left( U\left( y\right) ;\theta \right) $. Since the first term is
parameter independent, it can be ignored and so we arrive at the following
semiparametric PMLE,%
\begin{equation*}
\hat{\theta}_{\text{PMLE}}=\arg \max_{\theta \in \Theta }\bar{L}_{n}(\theta ,%
\tilde{U}\left( \cdot ;\theta \right) ),
\end{equation*}%
where $\Theta $ is the parameter space and 
\begin{equation*}
\bar{L}_{n}\left( \theta ,U\right) =\frac{1}{n}\sum_{i=1}^{n}\left\{ \log
p_{X}\left( U\left( Y_{i\Delta }\right) |U\left( Y_{\left( i-1\right) \Delta
}\right) ;\theta \right) -\log f_{X}\left( U\left( Y_{i\Delta }\right)
;\theta \right) \right\}
\end{equation*}%
is $L_{n}\left( \theta ,U\right) -\sum_{i=1}^{n}\log f_{Y}\left( Y_{i\Delta
}\right) /n$. One can easily check that, by rewriting the above in terms of
the implied copula of $X$, this estimator is equivalent to the one analyzed
in Chen and Fan (2006).

Our second proposal, the SMLE, replaces the unknown density function $%
f_{Y}\left( y\right) $ by a sieve approximation $f_{Y,m}\left( y\right) \in 
\mathcal{F}_{m}$ where $\mathcal{F}_{m}$ is a finite-dimensional function
space reflecting the properties of $f_{Y}$, $m=1,2,...$. For a given
candidate density, we then compute 
\begin{equation*}
U\left( y;f_{Y,m},\theta \right) =F_{X}^{-1}(F_{Y,m}\left( y\right) ;\theta )
\end{equation*}%
where $F_{Y,m}\left( y\right) =\int_{y_{l}}^{y}f_{Y,m}\left( z\right) dz$.
Substituting this into the likelihood function yields the following
semiparametric sieve maximum-likelihood estimator,%
\begin{equation}
(\hat{\theta}_{\text{SMLE}},\hat{f}_{Y,m})=\arg \max_{\theta \in \Theta
,f_{Y,m}\in \mathcal{F}_{m}}L_{n}\left( \theta ,U\left( \cdot
;f_{Y,m},\theta \right) \right) .  \label{eq: SMLE 1}
\end{equation}%
The above SMLE is identical to the one proposed by Chen, Wu and Yi (2009)
for the estimation of copula-based Markov models, except that while they
estimate the parameters of a copula function, we estimate those of the drift
and diffusion functions of the UPD. In comparison with the PMLE, the
numerical implementation of the SMLE involves joint maximization over both $%
\theta $ and $\mathcal{F}_{m}$, which is a harder numerical problem and
potentially more time-consuming. In terms of statistical efficiency, $\hat{%
\theta}_{\text{SMLE}}$ will in general reach the semiparametric efficiency
bound under stationarity, while the PMLE is inefficient.

Both of the above estimators require us to evaluate $F_{X}^{-1}\left(
x;\theta \right) $ which in general is not available on closed form and so
has to be computed using numerical methods, e.g., numerical integration or
Monte Carlo methods combined with a equation solver. For the SMLE, one can
circumvent this issue by directly approximating $U$ instead of $f_{Y}$: For
a given finite-dimensional function space of one-to-one transformations $%
\mathcal{U}_{m}$, an alternative to the SMLE in (\ref{eq: SMLE 1}) is $(%
\tilde{\theta}_{\text{SMLE}},\tilde{U}_{m})=\arg \max_{\theta \in \Theta
,U_{m}\in \mathcal{U}_{m}}L_{n}\left( \theta ,U_{m}\right) $. We expect this
to be computationally more efficient compared to the density version above;
the theoretical analysis of this alternative SMLE is left for future
research.

Once an estimator for $\theta $ has been obtained, we can estimate the drift
and diffusion terms of $Y$ using the expressions given in (\ref{RDdrift})
and (\ref{RDdiffusion}) by replacing $\theta $ and $U$ with their
estimators. However, this involves estimating the first and second
derivative of $U$. For the SMLE this is not an issue assuming that $\mathcal{%
F}_{m}$ is a differentiable function space. For the PMLE, since $\tilde{U}%
\left( y;\theta \right) $ is not differentiable, we instead use the kernel
smoothed version $\hat{U}\left( y;\theta \right) $, leading to the following
three-step estimators of the drift and diffusion functions%
\begin{eqnarray}
\hat{\mu}_{Y}\left( y\right) &=&\frac{\mu _{X}(\hat{U}\left( y\right) ;\hat{%
\theta}_{\text{PMLE}})}{\hat{U}^{\prime }\left( y\right) }-\frac{1}{2}\sigma
_{X}^{2}(\hat{U}\left( y\right) ;\hat{\theta}_{\text{PMLE}})\frac{\hat{U}%
^{\prime \prime }\left( y\right) }{\hat{U}^{\prime }\left( y\right) ^{3}},
\label{eq: mu_Y est} \\
\hat{\sigma}_{Y}^{2}\left( y\right) &=&\frac{\sigma _{X}^{2}(\hat{U}\left(
y\right) ;\hat{\theta}_{\text{PMLE}})}{\hat{U}^{\prime }\left( y\right) ^{2}}%
,  \label{eq: sig_Y est}
\end{eqnarray}%
where $\hat{U}\left( y\right) =F_{X}^{-1}(\hat{F}_{Y}\left( y\right) ;\hat{%
\theta}_{\text{PMLE}})$.

\subsection{High-frequency estimators\label{sec: est high-freq}}

We now turn to the case where high-frequency data is available; this
scenario is formally modelled as $\Delta \rightarrow 0$ as $n\rightarrow
\infty $. The proposed estimators described in the previous section remains
valid, but an alternative estimation method is available in this case since
the exact density of the underlying UPD, $p_{X}$, is well-approximated by%
\begin{equation}
\hat{p}_{X}\left( x|x_{0};\theta \right) =\frac{1}{\sqrt{2\pi \Delta }}%
\sigma _{X}\left( x_{0};\theta \right) \exp \left[ -\frac{\left( x-x_{0}-\mu
_{X}\left( x_{0};\theta \right) \Delta \right) ^{2}}{2\sigma _{X}^{2}\left(
x_{0};\theta \right) \Delta }\right]  \label{eq: p_X approx}
\end{equation}%
as $\Delta \rightarrow 0$, c.f. Kessler (1997). We then propose to estimate $%
\theta $ using either the two-step or sieve approach described in the
previous section, except that we here replace $p_{X}\left( x|x_{0};\theta
\right) $ with its high-frequency approximation, $\hat{p}_{X}\left(
x|x_{0};\theta \right) $, in the definition of $L_{n}\left( \theta ,U\right) 
$ and $\bar{L}_{n}\left( \theta ,U\right) $. The advantage of doing so is
computational in that $\hat{p}_{X}\left( x|x_{0};\theta \right) $ is on
closed form for any given UPD while $p_{X}\left( x|x_{0};\theta \right) $
generally can only be evaluated using numerical methods as pointed out
earlier.

For most standard UPD's, the parameters can be decomposed into $\theta
=\left( \theta _{1},\theta _{2}\right) $ so that $\mu _{X}\left(
x_{0};\theta _{1}\right) =\mu _{X}\left( x_{0};\theta _{1}\right) $ and $%
\sigma _{X}\left( x_{0};\theta \right) =\sigma _{X}\left( x_{0};\theta
_{2}\right) $ only depends on the first and second component, respectively.
One could hope to be able to estimate $\theta _{1}$ and $\theta _{2}$
separately in this case. For known $U$, this is indeed possible. We could,
for example, use least-squares methods similar to Kanaya and Kristensen
(2018) where $\theta _{1}$ and $\theta _{2}$, respectively, are estimated by
the minimizers of the following two least-squares objectives,%
\begin{eqnarray}
L_{n,\Delta }^{\left( \mu \right) }\left( \theta _{1};U\right)
&=&\sum_{i=1}^{n}w_{i}^{\left( \mu \right) }\left( U\left( Y_{i\Delta
}\right) -U\left( Y_{\left( i-1\right) \Delta }\right) -\mu _{X}\left(
U\left( Y_{\left( i-1\right) \Delta }\right) ;\theta _{1}\right) \Delta
\right) ^{2},  \label{eq: LS-drift} \\
L_{n,\Delta }^{\left( \sigma \right) }\left( \theta _{2};U\right)
&=&\sum_{i=1}^{n}w_{i}^{\left( \sigma \right) }\left( \left\{ U\left(
Y_{i\Delta }\right) -U\left( Y_{\left( i-1\right) \Delta }\right) \right\}
^{2}-\sigma _{X}^{2}\left( U\left( Y_{\left( i-1\right) \Delta }\right)
;\theta _{2}\right) \Delta \right) ^{2},  \label{eq: LS-diff}
\end{eqnarray}%
where $w_{i}^{\left( \mu \right) }=w^{\left( \mu \right) }\left( Y_{\left(
i-1\right) \Delta },Y_{i\Delta }\right) $ and $w_{i}^{\left( \sigma \right)
}=w^{\left( \sigma \right) }\left( Y_{\left( i-1\right) \Delta },Y_{i\Delta
}\right) $ are weighting functions.

This approach, however, faces two complications in our setting: First, after
applying any of the three normalizations presented in Section \ref%
{Sec_Identification} in order to achieve identification, the resulting drift
and diffusion of the UPD tend to share parameters. Second, $U$ is unknown
and has to be estimated together with $\theta $. In the case of PMLE, $%
\tilde{U}\left( y;\theta \right) $ in eq. (\ref{eq: U-tilde}) generally
depends on both $\theta _{1}$ and $\theta _{2}$ since $f_{X}\left( x;\theta
\right) $ does. Thus, if we replace $U$ by $\tilde{U}\left( y;\theta \right) 
$ in the above objectives, we cannot separately estimate $\theta _{1}$ and $%
\theta _{2}$. Similarly, the SMLE requires joint estimation of $U$ together
with $\theta $ in which case it would have to be re-estimated for each of
the two objectives. In conclusion, these least-squares estimators are rarely
useful in practice.

Another alternative approach, inspired by Bandi and Phillips (2007), see
also Kristensen (2011), would be to first obtain non-parametric estimates of 
$\mu _{Y}$ and $\sigma _{Y}^{2}$ and then match these with the ones implied
by the copula model,%
\begin{equation*}
Q_{n,\Delta }^{\left( \mu \right) }\left( \mathcal{S}\right)
=\sum_{i=1}^{n}w_{i}^{\left( \mu \right) }\left( \hat{\mu}_{Y}\left(
Y_{i\Delta }\right) -\mu _{Y}\left( Y_{i\Delta };\mathcal{S}\right) \right)
^{2},\text{ \ \ }Q_{n,\Delta }^{\left( \sigma \right) }\left( \mathcal{S}%
\right) =\sum_{i=1}^{n}w_{i}^{\left( \sigma \right) }\left( \hat{\sigma}%
_{Y}^{2}\left( Y_{i\Delta }\right) -\sigma _{Y}^{2}\left( Y_{i\Delta };%
\mathcal{S}\right) \right) ^{2},
\end{equation*}%
where $\hat{\mu}_{Y}\left( \cdot \right) $ and $\hat{\sigma}_{Y}^{2}\left(
\cdot \right) $ are the first-step nonparametric estimators; see Bandi and
Phillips (2007) for their precise forms. This procedure suffers from the
same issue as the least-squares one described in the previous paragraph. An
additional complication is that it involves multiple smoothing parameters:
First, $\hat{\mu}_{Y}\left( \cdot \right) $ and $\hat{\sigma}_{Y}^{2}\left(
\cdot \right) $ depend on two bandwidths and converge with slow rates and,
second, $\mu _{Y}\left( \cdot ;\mathcal{S}\right) $ and $\sigma
_{Y}^{2}\left( \cdot ;\mathcal{S}\right) $ involve derivatives of $U$ and so
if we replace $U$ by its kernel-smoothed estimator, $\hat{U}$, the two
objective funtions will depend on the first and second order derivatives of
the kernel density estimator of $f_{Y}$, which in turn depends on additional
bandwidth. All together, these estimators will be complicated to implement
due to the multiple bandwidths that the econometrician have to choose.
Moreover, their asymptotic analysis and behaviour will be non-standard.

\section{Asymptotic Theory\label{sec: asymp theory}}

\subsection{Low-frequency Estimation of Parametric Component}

We here establish an asymptotic theory for the proposed estimators in the
case of low-frequency data ($\Delta >0$ fixed). In the theoretical analysis
we shall work under the following high-level identification condition:

\begin{description}
\item[Assumption 4.1] $\mathcal{S}_{0}$ is identified.\bigskip
\end{description}

The previous section provided three different sets of primitive conditions
for Assumption 4.1 to hold in terms of $\left( \mu _{Y}\left( \cdot ;%
\mathcal{S}\right) ,\sigma _{Y}\left( \cdot ;\mathcal{S}\right) \right) $.
This combined with Assumption 3.1 then implies that the mapping $\left( \mu
_{Y}\left( \cdot ;\mathcal{S}\right) ,\sigma _{Y}\left( \cdot ;\mathcal{S}%
\right) \right) \mapsto p_{Y}\left( y|y_{0};\mathcal{S}\right) $ is
injective so that different drift and diffusion terms lead to different
transition densities. One implication of Assumptions 3.1 and 4.1 is $E\left[
\log p_{Y}\left( Y_{\Delta }|Y_{0};\mathcal{S}\right) \right] <E\left[ \log
p_{Y}\left( Y_{\Delta }|Y_{0};\mathcal{S}_{0}\right) \right] $ for any $%
\mathcal{S}\neq \mathcal{S}_{0}$, c.f. Newey and McFadden (1994, Lemma 2.2).
This ensures that the SMLE identifies $\mathcal{S}_{0}$ in the limit.
Regarding the PMLE, we note that it replaces $U$ by $\hat{U}\left( y;\theta
\right) =F_{X}^{-1}(\hat{F}_{Y}\left( y;\theta \right) )$. By the LLN of
stationary and ergodic sequences, $\hat{U}\left( y;\theta \right)
\rightarrow ^{P}U\left( y;\theta \right) =F_{X}^{-1}\left( F_{Y}\left(
y;\theta \right) \right) $, where, by the same arguments as before, $E\left[
\log p_{Y}\left( Y_{\Delta }|Y_{0};\theta ,U\left( \cdot ;\theta \right)
\right) \right] <E\left[ \log p_{Y}\left( Y_{\Delta }|Y_{0};\theta
_{0},U\left( \cdot ;\theta _{0}\right) \right) \right] $. Thus, the PMLE
will also in the limit identify $\theta _{0}$.

Next, we import conditions from Chen et al. (2010) guaranteeing, in
conjunction with our own Assumptions 2.1-2.2, that the UPD\ $X$, and thereby 
$Y$, is stationary and $\beta $-mixing with mixing coefficients decaying at
either polynomial rate (c.f. Corollary 5.5 in Chen et al., 2010) or
geometric rate (c.f. Corollary 4.2 in Chen et al., 2010):

\begin{description}
\item[Assumption 4.2.] (i) $\mu _{X}$ and $\sigma _{X}^{2}$ satisfies%
\begin{equation*}
\lim_{x\rightarrow x_{r}}\left\{ \frac{\mu _{X}\left( x;\theta _{0}\right) }{%
\sigma _{X}\left( x;\theta _{0}\right) }-\frac{1}{2}\frac{\partial \sigma
_{X}\left( x;\theta _{0}\right) }{\partial x}\right\} \leq 0,\text{ \ \ }%
\lim_{x\rightarrow x_{u}}\left\{ \frac{\mu _{X}\left( x;\theta _{0}\right) }{%
\sigma _{X}\left( x;\theta _{0}\right) }-\frac{1}{2}\frac{\partial \sigma
_{X}\left( x;\theta _{0}\right) }{\partial x}\right\} \geq 0;
\end{equation*}%
(ii) With $s\left( x;\theta \right) $ and $S\left( x;\theta \right) $
defined in (\ref{eq: scale density}), 
\begin{equation*}
\lim_{x\rightarrow x_{r}}\left\{ \frac{s\left( x;\theta _{0}\right) \sigma
_{X}\left( x;\theta _{0}\right) }{S\left( x;\theta _{0}\right) }\right\} >0,%
\text{ \ \ }\lim_{x\rightarrow x_{u}}\left\{ \frac{s\left( x;\theta
_{0}\right) \sigma _{X}\left( x;\theta _{0}\right) }{S\left( x;\theta
_{0}\right) }\right\} <0;
\end{equation*}%
\bigskip
\end{description}

Assumption 4.2(ii) is a strengthening of Assumption 4.2(i). For the analysis
of the PMLE, Assumption 4.2(i) suffices while we need the stronger
Assumption 4.2(ii) to establish an asymptotic theory for the SMLE. As we
mentioned before, it is not always straightforward to verify the required
mixing conditions for copula-based (discrete-time) Markov models such as
Chen and Fan (2006) and Chen, Wu and Yi (2009). In contrast, either sets of
conditions stated in Assumption 4.2 can be easily verified by directly
examining the drift and diffusion functions of the UPD $X$.

Finally, we impose the same conditions as used in the asymptotic analysis of
the PMLE in Chen and Fan (2006) and Chen, Wu and Yi (2009), respectively, on
the copula implied by the chosen UPD and the sieve density in the case of
SMLE:

\begin{description}
\item[Assumption 4.3.] (i) $c_{X}\left( u_{0},u;\theta \right) $ defined in (%
\ref{dic}) satisfies the regularity conditions set out in Chen and Fan
(2006, A1-A3, A4 or A4', A5-A6); (ii) $c_{X}\left( u_{0},u;\theta \right) $
and the sieve space $\mathcal{F}_{m}$ satisfy Assumptions 3.1-3.4 and
4.1--4.7, respectively, in Chen, Wu and Yi (2009).\bigskip
\end{description}

We here abstain from stating the precise, mostly technical, conditions and
refer the interested reader to Chen and Fan (2006) and Chen, Wu and Yi
(2009); broadly speaking their conditions translate into moment bounds and
smoothness conditions on the log-transition density of the UPD. These
conditions depend on the precise choice of the UPD and so will have to be
verified on a case-by-case basis. In Appendix \ref{Sec: verification}, we
verify the conditions for models in Examples 1--2.

The following result now follows from the general theory of Chen and Fan
(2006) and Chen, Wu and Yi (2009), respectively:

\begin{theorem}
\label{theorem_AD_PMLE}Under Assumptions 2.1-2.2, 4.1, 4.2(i) and 4.3(i),%
\begin{equation*}
\sqrt{n}(\hat{\theta}_{\mathrm{PMLE}}-\theta _{0})\rightarrow ^{d}N\left(
0,B^{-1}\Sigma B^{-1}\right) ,
\end{equation*}%
where $B$ and $\Sigma $ are defined in Chen and Fan (2006, A1 and $%
A_{n}^{\ast }$).

Under Assumptions 2.1-2.2, 4.1, 4.2(ii) and 4.3(ii),%
\begin{equation*}
\sqrt{n}(\hat{\theta}_{\mathrm{SMLE}}-\theta _{0})\rightarrow ^{d}N\left( 0,%
\mathcal{I}_{\ast }^{-1}\left( \theta \right) \right) ,
\end{equation*}%
where $\mathcal{I}_{\ast }$ is defined in Chen, Wu and Yi (2009).
\end{theorem}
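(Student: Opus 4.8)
The plan is to recognize that Theorem \ref{theorem_AD_PMLE} is essentially a verification-and-invocation result: the authors have deliberately set up their semiparametric copula-diffusion model so that the PMLE coincides with the estimator of Chen and Fan (2006) and the SMLE coincides with the estimator of Chen, Wu and Yi (2009), as noted in the discussion around equations (\ref{eq: SMLE 1}). Consequently, the entire proof reduces to checking that the high-level assumptions invoked in those two papers are implied by the assumptions stated here, and then quoting their asymptotic-normality theorems verbatim. So the proof is not a fresh limit-theory argument but a translation exercise.

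**The steps** I would carry out are as follows. First, I would establish that under Assumptions 2.1--2.2 together with Assumption 4.2, the UPD $X$ — and hence $Y=V(X)$ by Proposition \ref{Th: Y prop}(1)--(2) — is strictly stationary and $\beta$-mixing at the requisite rate; this is exactly what Chen et al.\ (2010, Corollaries 4.2 and 5.5) deliver, and it supplies the dependence structure that both Chen and Fan (2006) and Chen, Wu and Yi (2009) require. Second, for the PMLE I would show that the implied copula density $c_X(u_0,u;\theta)$ of equation (\ref{dic}) inherits the smoothness and moment conditions of Chen and Fan (2006, A1--A6) from the assumed regularity on $\mu_X$, $\sigma_X^2$ and the transition density $p_X$; this is precisely the content of Assumption 4.3(i), so that step is assumption-by-fiat. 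Third, I would verify that consistency holds, using the identification Assumption 4.1 to get $\mathbb{E}[\log p_Y(Y_\Delta\mid Y_0;\mathcal{S})]<\mathbb{E}[\log p_Y(Y_\Delta\mid Y_0;\mathcal{S}_0)]$ for $\mathcal{S}\neq\mathcal{S}_0$ (the Kullback--Leibler / Newey--McFadden argument already sketched in the text), and the convergence $\hat U(y;\theta)\to^P U(y;\theta)$ from the Glivenko--Cantelli behavior of $\hat F_Y$. With consistency and the mixing/regularity conditions in place, I would simply invoke Chen and Fan (2006) to obtain
\[
\sqrt{n}(\hat\theta_{\mathrm{PMLE}}-\theta_0)\to^d N(0,B^{-1}\Sigma B^{-1}),
\]
where $B$ and $\Sigma$ are their sandwich matrices. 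Fourth, the SMLE case runs in parallel: Assumption 4.2(ii) gives the stronger mixing needed for sieve estimation, Assumption 4.3(ii) is the direct import of Chen, Wu and Yi (2009, Assumptions 3.1--3.4 and 4.1--4.7), and the identification Assumption 4.1 again ensures the population criterion is uniquely maximized at $\mathcal{S}_0$; their Theorem then yields the $\sqrt{n}$-asymptotic normality with variance $\mathcal{I}_\ast^{-1}(\theta)$ and semiparametric efficiency.

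**The main obstacle** is the translation step rather than any probabilistic subtlety: the conditions in Chen and Fan (2006) and Chen, Wu and Yi (2009) are stated as abstract smoothness and integrability requirements on a \emph{copula} density $c_X$, whereas here $c_X$ is only implicitly defined through $p_X$, $f_X$ and $F_X^{-1}$ via equation (\ref{dic}). The delicate part is confirming that the required derivatives of $\log c_X(u_0,u;\theta)$ in $(u_0,u,\theta)$, together with their moment bounds, actually hold — this involves the behavior of $p_X$ and $f_X$ near the boundaries $x_l,x_r$ where $F_X^{-1}$ may blow up, and it is genuinely model-dependent. For this reason I would relegate the case-by-case verification (for the OU and CIR examples) to the appendix, exactly as the authors indicate in the reference to Appendix \ref{Sec: verification}, and in the main proof simply assert that Assumption 4.3 encodes precisely those verified conditions, so that the two cited theorems apply directly.
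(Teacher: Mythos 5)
Your proposal is correct and follows essentially the same route as the paper: both reduce the theorem to verifying the high-level conditions of Chen and Fan (2006) and Chen, Wu and Yi (2009) — stationarity and $\beta$-mixing of $Y$ from Assumptions 2.1--2.2 and 4.2 via Proposition \ref{Th: Y prop}, the remaining copula-level regularity delegated to Assumption 4.3 — and then invoke their asymptotic-normality theorems. The only cosmetic difference is that the paper additionally spells out, from Assumption 2.1 alone, that the implied copula in eq. (\ref{dic}) is absolutely continuous on $\left[ 0,1\right] ^{2}$ and is neither Fr\'{e}chet--Hoeffding bound (since $\sigma _{X}^{2}>0$), whereas you fold this into the Assumption 4.3 "by fiat" step; the consistency/Kullback--Leibler discussion you include appears in the paper's Section \ref{sec: asymp theory} text rather than in its proof.
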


Consistent estimators of the asymptotic variances, $B^{-1}\Sigma B^{-1}$ and 
$\mathcal{I}_{\ast }^{-1}\left( \theta \right) $, can be found in Chen and
Fan (2006) and Chen, Wu and Yi (2009), respectively.

\subsection{High-frequency Estimation of Parametric Component}

Next, we discuss the asymptotic properties of the PMLE based on the
high-frequency log-likelihood that takes as input $\hat{p}_{X}\left(
x|x_{0};\theta \right) $ defined in eq. (\ref{eq: p_X approx}); a complete
analysis of the PMLE and SMLE\ in a high-frequency setting is left for
future research. In the following, we let $T:=n\Delta $ denote the sampling
range, which will be assumed to diverge as $\Delta \rightarrow 0$.

The high-frequency PMLE is given by $\hat{\theta}_{\text{PMLE}}=\arg
\max_{\theta \in \Theta }\hat{L}_{n}\left( \theta ,\tilde{U}\left( \cdot
;\theta \right) \right) $ where%
\begin{equation*}
\hat{L}_{n}\left( \theta ,U\right) =\frac{1}{n}\sum_{i=1}^{n}\left\{ \log 
\hat{p}_{X}\left( U\left( Y_{i\Delta }\right) |U\left( Y_{\left( i-1\right)
\Delta }\right) ;\theta \right) -\log f_{X}\left( U\left( Y_{i\Delta
}\right) ;\theta \right) \right\} ,
\end{equation*}%
and $\tilde{U}\left( Y_{i\Delta };\theta \right) $ defined in (\ref{eq:
U-tilde}). We first specialize the general result of Kanaya (2018, Theorem
2) by choosing $B=\psi =1$ and $K_{h}\left( y\right) =I\left\{ y\leq
0\right\} $ in his notation to obtain that under our Assumption 4.2,%
\begin{equation}
\sup_{y\in \mathcal{Y}}\left\vert \tilde{F}_{Y}\left( y\right) -F_{Y}\left(
y\right) \right\vert =O_{P}\left( \sqrt{\Delta }/\log \Delta \right)
+O_{P}\left( \log T/\sqrt{T}\right) ,  \label{eq: cdf conv}
\end{equation}%
where the two terms on the right-hand side correspond to discretization bias
and sampling variance, respectively. By letting $T$ grow sufficiently fast
as $\Delta \rightarrow 0$, the first term can be ignored. Under regularity
conditions on $\mu _{X}$ and $\sigma _{X}$ so that $\left( y,y_{0}\right)
\mapsto \hat{p}_{X}\left( F_{X}^{-1}\left( y;\theta \right)
|F_{Y}^{-1}\left( y_{0}\right) ;\theta \right) /f_{Y}\left( y_{0}\right) $
satisfies Lipschitz conditions similar to the ones in Chen and Fan (2006),
we then obtain%
\begin{equation*}
\sup_{\theta \in \theta }\left\vert \hat{L}_{n}\left( \theta ,\tilde{U}%
\left( \cdot ;\theta \right) \right) -\hat{L}_{n}\left( \theta ,U\left(
\cdot ;\theta \right) \right) \right\vert =O_{P}\left( \sqrt{\Delta }/\log
\Delta \right) +O_{P}\left( \log T/\sqrt{T}\right) ,
\end{equation*}%
where $U\left( y;\theta \right) =F_{Y}\left( F_{X}^{-1}\left( y;\theta
\right) \right) $. Consistency of the PMLE now follows by extending the
arguments of Kessler (1997) to allow for the presence of the
parameter-dependent transformation $U\left( y;\theta \right) $. Next, to
simplify our discussion of the asymptotic distribution of the PMLE, we
consider two special cases:

First, suppose that suppose that, after suitable normalizations, $\sigma
_{X}\left( x\right) $ is known and only $\mu _{X}\left( x;\theta \right) $
is parameter dependent. In this case, we expect that Kessler's results
generalize so that $\hat{\theta}_{\text{PMLE}}$ will converge with $\sqrt{T}$%
-rate towards a Normal distribution, where the asymptotic variance will have
to be adjusted to take into account the first-step estimation of $\hat{F}%
_{Y} $.

Next, consider the opposite scenario, $\mu _{X}\left( x_{0}\right) $ is
known and only $\sigma _{X}\left( x_{0};\theta \right) $ is parameter
dependent. With $U$ known, Kessler (1997) shows that $\hat{\theta}_{\text{%
PMLE}}$ converges with $\sqrt{n}$-rate towards a Normal distribution in this
case. Note the faster convergence rate compared to the drift estimator.
However, in our setting $U\left( y;\theta \right) $ is parameter dependent,
and as a consequence this result appears to no longer apply: $U\left(
y;\theta \right) $ enters $\hat{L}_{n}\left( \theta ,U\right) $ in the same
way that $\mu _{X}$ does and so the score of $\hat{L}_{n}\left( \theta
,U\left( \cdot ;\theta \right) \right) $ will have a component on the same
form as in the first case and so will converge with $\sqrt{T}$-rate instead
of $\sqrt{n}$-rate. Moreover, the presence of the first-step estimator $%
\tilde{F}_{Y}\left( y\right) $, which also converge with $\sqrt{T}$-rate,
will generate an additional variance term. In total, estimators of diffusion
parameters appear not to enjoy "super" consistency in our setting due to the
way that the unknown transformation $U$ enters the likelihood.

\subsection{Estimation of Drift and Diffusion Functions\label{SecSemiFuns}}

We here analyze the asymptotic properties of the kernel-based estimators of $%
\mu _{Y}$ and $\sigma _{Y}^{2}$ given in eqs. (\ref{eq: mu_Y est})-(\ref{eq:
sig_Y est}). We only do so for the low-frequency case; the analysis of the
high-frequency case should proceed in a similar fashion. Our analysis takes
as starting point the following regularity conditions on the estimator of
the parametric component and the kernel function:\bigskip

\begin{description}
\item[Assumption 4.4.] The transformation function $V$ is four times
continuously differentiable.

\item[Assumption 4.5.] The estimator $\hat{\theta}$ of the parameter of the
UPD $X$ is $\sqrt{n}$-consistent.

\item[Assumption 4.6.] The kernel $K$ is differentiable, and there exists
constants $D,\omega >0$ such that 
\begin{equation*}
\left\vert K^{\left( i\right) }\left( z\right) \right\vert \leq D\left\vert
z\right\vert ^{-\omega },\text{ \ \ }\left\vert K^{\left( i\right) }\left(
z\right) -K^{\left( i\right) }\left( \tilde{z}\right) \right\vert \leq
D\left\vert z-\tilde{z}\right\vert ,\text{ \ \ }i=0,1,
\end{equation*}%
where $K^{\left( i\right) }\left( z\right) $ denotes the $i$th derivative of 
$K\left( z\right) $. Moreover, $\int_{\mathbb{R}}K\left( z\right) dz=1$, $%
\int_{\mathbb{R}}zK\left( z\right) dz=0$ and $\kappa _{2}=\int_{\mathbb{R}%
}z^{2}K\left( z\right) dz<\infty $.\bigskip
\end{description}

Assumption 4.4 ensures the existence of the $3$rd and $4$th derivatives of $%
U\left( y\right) $, which in turn ensure that relevant quantities entering
the asymptotic distributions of $\hat{\mu}_{Y}$ and $\hat{\sigma}_{Y}^{2}$
are well defined. Assumption 4.5 implies that the asymptotic properties of $%
\hat{\mu}_{Y}$ and $\hat{\sigma}_{Y}^{2}$ are determined by the properties
of the kernel density estimator alone. The proposed PMLE and SMLE satisfy
this condition under our Assumptions 4.1-4.3, but other $\sqrt{n}$%
-consistent\ estimators are allowed for.\ Assumption 4.6 regulates the
kernel functions and allow for most standard kernels such as the Gaussian
and the Uniform kernels. Using the functional delta-method together with
standard results for kernel density estimators, as found in Robinson (1983),
we obtain:

\bigskip

\begin{theorem}
\label{AD_DD}Under Assumptions 2.1-2.2, 4.2(i), and 4.4-4.6, we have as $%
n\rightarrow \infty $, $h\rightarrow 0$ and $nh^{3}\rightarrow \infty $,%
\begin{equation*}
\sqrt{nh^{3}}\left\{ \hat{\mu}_{Y}\left( y\right) -\mu _{Y}\left( y\right)
-h^{2}B_{\mu _{Y}}\left( y\right) \right\} \rightarrow ^{d}N\left( 0,V_{\mu
_{Y}}\left( y\right) \right) ,
\end{equation*}%
where%
\begin{equation*}
B_{\mu _{Y}}\left( y\right) =-\frac{\kappa _{2}\sigma _{Y}^{2}\left(
y\right) f_{Y}^{\prime \prime \prime }\left( y\right) }{4f_{Y}\left(
y\right) },\text{ \ \ }V_{\mu _{Y}}\left( y\right) =\frac{\sigma
_{Y}^{4}\left( y\right) }{4f_{Y}\left( y\right) }\int_{\mathbb{R}}K^{\prime
}\left( z\right) ^{2}dz.
\end{equation*}%
Also, as $n\rightarrow \infty $, $h\rightarrow 0$ and $nh\rightarrow \infty $%
, we have%
\begin{equation*}
\sqrt{nh}\{\hat{\sigma}_{Y}^{2}\left( y\right) -\sigma _{Y}^{2}\left(
y\right) -h^{2}B_{\sigma _{Y}^{2}}\left( y\right) \}\rightarrow ^{d}N\left(
0,V_{\sigma ^{2}}\left( y\right) \right) ,
\end{equation*}%
where%
\begin{equation*}
B_{\sigma _{Y}^{2}}\left( y\right) =-\frac{\kappa _{2}\sigma _{Y}^{2}\left(
y\right) f_{Y}^{\prime \prime }\left( y\right) }{f_{Y}\left( y\right) },%
\text{ \ \ }V_{\sigma _{Y}^{2}}\left( y\right) =\frac{4\sigma _{Y}^{4}\left(
y\right) }{f_{Y}\left( y\right) }\int_{\mathbb{R}}K\left( z\right) ^{2}dz.
\end{equation*}
\end{theorem}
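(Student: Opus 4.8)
The plan is to recognize both $\hat\mu_Y(y)$ and $\hat\sigma_Y^2(y)$ as smooth, nonrandom functionals of the finite-dimensional vector of estimated objects $(\hat\theta,\hat F_Y(y),\hat f_Y(y),\hat f_Y'(y))$, where $\hat f_Y(y):=\hat F_Y'(y)$ is the kernel density estimator of $f_Y$ implied by (\ref{eq: kernel cdf}) and $\hat f_Y'(y):=\hat F_Y''(y)$ is its derivative. Indeed, differentiating the identity $\hat U(y)=F_X^{-1}(\hat F_Y(y);\hat\theta)$ gives $\hat U'(y)=\hat f_Y(y)/f_X(\hat U(y);\hat\theta)$, and a further differentiation expresses $\hat U''(y)$ in terms of $\hat f_Y'(y)$, $\hat f_Y(y)$ and $\hat U(y)$. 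Substituting into (\ref{eq: mu_Y est})--(\ref{eq: sig_Y est}) writes $\hat\sigma_Y^2(y)$ as a function of $(\hat\theta,\hat F_Y(y),\hat f_Y(y))$ only, while $\hat\mu_Y(y)$ additionally depends on $\hat f_Y'(y)$ through $\hat U''$. This structural separation is precisely the source of the two different convergence rates.

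I would then linearize each functional around the truth by the (ordinary, since $y$ is fixed) delta method. The key is a rate ordering of the inputs: by the law of large numbers and central limit theorem for stationary $\beta$-mixing sequences (Assumption 4.2(i)), $\hat\theta-\theta_0=O_P(n^{-1/2})$ (Assumption 4.5) and $\hat F_Y(y)-F_Y(y)=O_P(n^{-1/2})$, whereas $\hat f_Y(y)-f_Y(y)=O_P((nh)^{-1/2})$ and $\hat f_Y'(y)-f_Y'(y)=O_P((nh^3)^{-1/2})$. Hence, after scaling $\hat\sigma_Y^2$ by $\sqrt{nh}$ and $\hat\mu_Y$ by $\sqrt{nh^3}$, the contributions of $\hat\theta$ and of the cdf level $\hat F_Y$ vanish (e.g. $\sqrt{nh^3}\,n^{-1/2}=h^{3/2}\to0$ and $\sqrt{nh}\,n^{-1/2}=h^{1/2}\to0$), and for $\hat\mu_Y$ the faster kernel term $\hat f_Y$ also drops out ($\sqrt{nh^3}(nh)^{-1/2}=h\to0$). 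The leading stochastic expansions reduce to $\hat\sigma_Y^2(y)-\sigma_Y^2(y)=-\tfrac{2\sigma_Y^2(y)}{f_Y(y)}\{\hat f_Y(y)-f_Y(y)\}+o_P((nh)^{-1/2})$ and $\hat\mu_Y(y)-\mu_Y(y)=-\tfrac{\sigma_Y^2(y)}{2f_Y(y)}\{\hat f_Y'(y)-f_Y'(y)\}+o_P((nh^3)^{-1/2})$, where the multipliers are obtained by computing the partial derivatives of the functionals and simplifying via (\ref{U1}), i.e. $U'(y)=f_Y(y)/f_X(U(y);\theta)$, together with $\sigma_Y^2(y)=\sigma_X^2(U(y);\theta)/U'(y)^2$.

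It then remains to invoke the standard bias/variance expansions for kernel estimators of $f_Y$ and its first derivative under mixing (Robinson, 1983): for $r=0,1$, $\hat f_Y^{(r)}(y)$ has bias $\tfrac12 h^2\kappa_2 f_Y^{(r+2)}(y)+o(h^2)$ and asymptotic variance $\tfrac{f_Y(y)}{nh^{2r+1}}\int_{\mathbb R}K^{(r)}(z)^2dz$, with asymptotic normality. Inserting $r=0$ into the $\hat\sigma_Y^2$ expansion yields the bias $h^2 B_{\sigma_Y^2}(y)$ with $B_{\sigma_Y^2}(y)=-\kappa_2\sigma_Y^2(y)f_Y''(y)/f_Y(y)$ and variance $V_{\sigma_Y^2}(y)=4\sigma_Y^4(y)f_Y(y)^{-1}\int K^2$; inserting $r=1$ into the $\hat\mu_Y$ expansion yields $B_{\mu_Y}(y)=-\kappa_2\sigma_Y^2(y)f_Y'''(y)/(4f_Y(y))$ and $V_{\mu_Y}(y)=\tfrac14\sigma_Y^4(y)f_Y(y)^{-1}\int (K')^2$, matching the statement. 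Assumption 4.4 (four derivatives of $V$, hence of $U$) guarantees that $f_Y$ is three times continuously differentiable, so that $f_Y'''$ and the bias expansions are well defined, while Assumption 4.6 supplies the kernel regularity needed for the central limit theorem of the density-derivative estimator.

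The main obstacle, and the step demanding the most care, is the bookkeeping confirming that the displayed bias and variance are genuinely the leading terms. Two points are delicate. First, one must verify that the linearization remainder (from the nonlinearity of $F_X^{-1}$ and of the maps in (\ref{eq: mu_Y est})--(\ref{eq: sig_Y est})) is of smaller order than the kernel-rate terms; this uses the uniform consistency of $\hat F_Y$, $\hat f_Y$, $\hat f_Y'$ together with the smoothness in Assumption 4.4. Second, one must check that the $O(h^2)$ smoothing bias carried by the cdf level $\hat F_Y$ (and, for $\hat\mu_Y$, by the faster term $\hat f_Y$) does not contaminate the stated bias after recentering and scaling; this is where the bandwidth conditions $h\to0$, $nh\to\infty$, $nh^3\to\infty$ enter and must be read together with the requirement that these lower-order biases be negligible at the relevant rate. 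Given Assumption 4.2(i), establishing the joint asymptotic normality under dependence is then routine, since a single kernel-smoothed quantity drives each limit.
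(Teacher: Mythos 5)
Your proposal is correct and takes essentially the same route as the paper: a delta-method linearization that reduces $\hat{\sigma}_{Y}^{2}$ and $\hat{\mu}_{Y}$ to the kernel estimators $\hat{f}_{Y}$ and $\hat{f}_{Y}^{\prime}$ respectively (the paper merely organizes this in two steps, first reducing to $\hat{U}^{\prime}$ and $\hat{U}^{\prime\prime}$ via Lemma \ref{Lem: U deriv} and using $\sqrt{n}$-consistency of $\hat{\theta}$ and $\hat{F}_{Y}$ to discard the parametric and cdf-level terms), followed by Robinson's (1983) CLT for kernel density derivatives under $\beta$-mixing and Slutsky's theorem. Your leading coefficients $-2\sigma_{Y}^{2}(y)/f_{Y}(y)$ and $-\sigma_{Y}^{2}(y)/\left(2f_{Y}(y)\right)$, and hence the bias and variance expressions, agree exactly with those produced by the paper's two-step reduction, and the delicate bookkeeping points you flag (negligibility of the linearization remainder and of the $O(h^{2})$ biases carried by $\hat{F}_{Y}$ and $\hat{f}_{Y}$) are precisely the steps the paper's proof absorbs into its $o_{p}(1)$ terms.
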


We see that both estimators suffer from smoothing biases, $B_{\mu
_{Y}}\left( y\right) $ and $B_{\sigma _{Y}^{2}}\left( y\right) $. If $%
h\rightarrow 0$ sufficiently fast, these biases will be negiglible. Also
note that the convergence rates of the drift estimator is slower compared to
the diffusion estimator. These features are similar to the asymptotic
properties of the semi-nonparametric drift and diffusion estimators
considered in Kristensen (2011).

\section{Monte\ Carlo Simulations\label{Sec_Simulation}}

In this section, we compare the finite sample performance of our
low-frequency semiparametric PMLE with that of a fully parametric PMLE
(described below) through Monte Carlo simulations.

\subsection{Data Generating Processes}

We consider the following normalized versions of the UPDs of Examples 1--2,%
\begin{eqnarray}
\text{OU} &:&dX_{t}=-\kappa X_{t}dt+\sqrt{2\kappa }dW_{t},\text{ \ \ }\theta
=\kappa ,  \label{Normalized_OU} \\
\text{CIR} &:&dX_{t}=\kappa \left( \alpha -X_{t}\right) dt+\sqrt{2\kappa
X_{t}}dW_{t},\text{ \ \ }\theta =\left( \kappa ,\alpha \right) .
\label{Normalized_CIR}
\end{eqnarray}%
The chosen normalizations have the advantage that the marginal distributions
of $X$ are invariant to the mean-reversion parameter $\kappa $. Hence, by
varying $\kappa $, we can change the persistence level of $X$ (and thus $Y$)
while keeping the marginal distributions fixed. In this way, we can examine
the impact of persistence on the performance of the proposed estimators of $%
\theta $, $\mu _{Y}$ and $\sigma _{Y}^{2}$.

Next, we specify the transformation of the DGP of $Y$. This is done by
choosing marginal cdf $F_{Y}\left( y;\phi \right) $, where $\phi $ is a
hyper parameter governing the shape of the cdf, which induces the
transformation $V\left( X_{t};\phi \right) =F_{Y}^{-1}\left( F_{X}\left(
X_{t};\theta \right) ;\phi \right) $. With $f_{Y}\left( y;\phi \right)
=F_{Y}^{\prime }\left( y;\phi \right) $, the transition density of the true
DGP of $Y$ then takes the form%
\begin{equation}
p_{Y}\left( y|y_{0};\theta ,\phi \right) =f_{Y}\left( y;\phi \right)
c_{X}\left( F_{Y}\left( y_{0};\phi \right) ,F_{Y}\left( y;\phi \right)
;\theta \right) .  \label{RDSKST_tpdf}
\end{equation}%
We choose $F_{Y}\left( y;\phi \right) $ as a flexible distribution to
reflect stylized features such as asymmetry and fat-tailedness of observed
financial data. Specifically, we use the Skewed Student-$t$\ (SKST)
Distribution of Hansen (1994) with density%
\begin{equation}
f_{Y}\left( y;\phi \right) =\left\{ 
\begin{array}{ccc}
\dfrac{bq}{v}\left( 1+\dfrac{1}{\tau -2}\left( \dfrac{\dfrac{b}{v}\left(
y-m\right) +a}{1-\lambda }\right) ^{2}\right) ^{-\left( \tau +1\right) /2} & 
\text{if} & y<m-av/b, \\ 
\dfrac{bq}{v}\left( 1+\dfrac{1}{\tau -2}\left( \dfrac{\dfrac{b}{v}\left(
y-m\right) +a}{1+\lambda }\right) ^{2}\right) ^{-\left( \tau +1\right) /2} & 
\text{if} & y\geq m-av/b,%
\end{array}%
\right.  \label{SKSTpdf}
\end{equation}%
where $v>0$, $2<\tau <\infty $, $-1<\lambda <1$, $a=4\lambda q\left( \dfrac{%
\tau -2}{\tau -1}\right) $, $b^{2}=1+3\lambda ^{2}-a^{2}$ and $q=\Gamma
\left( \left( \tau +1\right) /2\right) /\sqrt{\pi \left( \tau -2\right)
\Gamma ^{2}\left( \tau /2\right) }$. We collect the hyper parameters in $%
\phi =\left( m,v,\lambda ,\tau \right) $ which has to be chosen in order to
fully specify the DGP. While $m$ and $v$ are the unconditional mean and
standard deviation of the distribution, $\lambda $ controls the skewness and 
$\tau $ controls the degrees of freedom (hence the fat-tailedness) of the
distribution. The distribution reduces to the usual student-$t$ distribution
when $\lambda =0$. Due to its flexibility in modelling skewness and
kurtosis, the SKST distribution is often used in financial modelling. (c.f.
Patton, 2004; Jondeau and Rockinger, 2006; Bu, Fredj and Li, 2017).

The transformed diffusion $Y$ generated by the SKST marginal distribution
together with the normalized UPD in (\ref{Normalized_OU}) or (\ref%
{Normalized_CIR}) is referred to as the OU-SKST or the CIR-SKST model,
respectively. The true data-generating parameters $\phi $ and $\theta $ are
chosen as estimates obtained from fitting the parametric versions of the two
models to the 7-day Eurodollar interest rate time series\ used in A\"{\i}%
t-Sahalia (1996b). The estimation is based on a fully parametric two-stage
PMLE. In the first stage, the SKST distribution is fitted to the data (as if
they are i.i.d) to obtain $\hat{\phi}$. We then substitute $F_{Y}(y;\hat{\phi%
})$ and $f_{Y}(y;\hat{\phi})$\ into (\ref{RDSKST_tpdf}) which is then
maximized with respect to $\theta $ to obtain $\hat{\theta}$ for each of the
two UPD's. The calibrated parameter values of the marginal SKST distribution
are $(\hat{m},\hat{v},\hat{\lambda},\hat{\tau}%
)=(0.0835,0.0358,0.5193,25.3708)$, and those of the underlying OU and CIR\
diffusions are $\hat{\kappa}=1.1376$ and $\left( \hat{\kappa},\hat{\alpha}%
\right) =\left( 0.7653,1.1653\right) $, respectively.

We compare the fitted SKST and Normal distributions with a nonparametric
kernel estimate in Figure 1. We see that the SKST\ distribution does a
reasonable job at capturing the marginal distribution found in data while
the Normal one does not provide a very good fit.%
\begin{equation*}
\text{\lbrack Figure 1]}
\end{equation*}

Artificial samples of sizes $n=2202$ and $n=5505$, respectively, are then
generated using $\phi =\hat{\phi}$ and $\theta =\hat{\theta}$ as our true
data-generating parameters. For both OU-SKST and CIR-SKST, $\theta $
involves the mean-reversion parameter $\kappa $ which controls the level of
persistence. We create $3$ additional scenarios by multiplying $\kappa $ by
factors of 5, 10, and 20 while keeping everything else unchanged.
Collectively, we have a total of $8$ cases corresponding to $2$ sample sizes
and $4$ persistence levels. The maximum factor $20$ is chosen because the
implied 1st-order autocorrelation coefficient $\rho _{1}\approx 0.9$, which
is a reasonably high persistent level without being excessively close to the
unit root. Finally, 500 replications for each case are generated.

\subsection{Estimation Results}

We compare our low-frequency PMLE of $\theta $ with the corresponding fully
parametric PMLE (PPMLE) described above that we used for our calibration.
Note that the only difference between the two estimators is that the former
estimates the marginal distribution $F_{Y}$ parametrically, while the latter
estimates it nonparametrically.

The relative bias and RMSE (defined as the ratios of the actual bias and the
actual RMSE over the true parameter value, respectively) of the estimators
of the parametric components of the OU-SKST case are presented in Table 1.
Overall, the results from the two estimation methods are generally
comparable with the same magnitudes. The semiparametric PMLE tends to do
better in terms of bias while the parametric PMLE dominates in terms of
variance. However, as the level of persistence decreases, the two
estimators' performance is close to identical. 
\begin{equation*}
\text{\lbrack Table 1]}
\end{equation*}

The results for the CIR-SKST case are presented in Table 2 and 3 which are
qualitatively very similar to the ones for the OU-SKST. Overall, the
performance of the PMLE\ is comparable with that of the PPMLE with very
similar estimation errors. Moreover, the gap in the performance of the PMLE
relative to the PPMLE appears to narrow when the true DGP gets less
persistent.%
\begin{equation*}
\text{\lbrack Table 2 and 3]}
\end{equation*}

Next, we investigate the performance of the semiparametric estimators of $%
\mu _{Y}$ and $\sigma _{Y}^{2}$ in eqs. (\ref{eq: mu_Y est})-(\ref{eq: sig_Y
est}) relative to their fully parametric estimators. In Figure 2, we plot
their pointwise means and $95\%$ confidence bands from the 500 estimates
against the truth for the OU-SKST process with $\kappa =22.753$\ and sample
size 2202. First, it is worth noting that $\mu _{Y}$ and $\sigma _{Y}^{2}$
exhibit strong nonlinearities that closely resemble the nonlinearities
depicted in, for example, A\"{\i}t-Sahalia (1996b), Jiang and Knight (1997),
and Stanton (1997). Second, the mean estimates from both estimation methods
are fairly close to the truth, but the variability of the semiparametric
estimators is noticeably larger than the parametric ones, especially in the
right end of the range. This is not surprising: Firstly, as shown in Theorem %
\ref{AD_DD}, $\hat{\mu}_{Y}$ and $\hat{\sigma}_{Y}^{2}$ converges at slower
than $\sqrt{n}$-rate due to the use of kernel estimators of $f_{Y}$. From
Figure 1, we can see that $f_{Y}$ has a long right tail which is difficult
to estimate by the kernel estimator in small and moderate samples. Figure 3
presents the same estimators at sample size 5505. At this larger sample
size, the bias is even smaller for both methods and the variability of these
estimates are also reduced significantly. Overall, although the parametric
method obviously has the advantage due to its parametric structure, our
semiparametric method also provides fairly satisfactory estimation results.%
\begin{equation*}
\text{\lbrack Figure 2 and 3]}
\end{equation*}

The drift and diffusion estimators from the two methods where the true DGP
is the CIR-SKST process with $\kappa =15.307$\ and the two sample sizes are
presented in Figure 4 and 5, respectively. Almost identical qualitative
conclusions can be reached.%
\begin{equation*}
\text{\lbrack Figure 4 and 5]}
\end{equation*}

\section{Empirical Application\label{Sec_Application}}

\subsection{Data}

As an empirical illustration, we here model the time series dynamics of the
CBOE Volatility Index data using copula diffusion models. The data consists
of the daily VIX index from January 2, 1990 to July 19, 2019 ($7445$
observations). It is displayed and summarized in Figure 6 and Table 4,
respectively. The time series plot shows a clear pattern of mean reversion,
and Augmented Dickey-Fuller tests with reasonable lags all rejected the unit
root hypothesis at $5\%$ significance level, which justifies the use of
stationary diffusion models. The mean and the standard deviation is of VIX
is $19.21$ and $7.76$, respectively. Meanwhile, the skewness and the
kurtosis are $2.12$ and $10.85$, respectively, suggesting that the
stationary distribution deviates quite substantially from normality. This is
more formally confirmed by the highly significant Jarque-Bera test statistic
with a negligible $p$-value.%
\begin{equation*}
\text{\lbrack Figure 6 and Table 4]}
\end{equation*}

\subsection{Models}

We focus on whether two well known parametric transformed diffusion models
proposed for modelling VIX are supported by the data against their
semiparametric alternatives. The two parametric models are the
transformed-OU model of Detemple and Osakwe (2000) (DO) and the
transformed-CIR model of Eraker and Wang (2015) (EW). Specifically, the DO\
model is the exponential transform of the OU process, which can be written as%
\begin{equation*}
Y_{t}=\exp \left( X_{t}\right) ,\text{ \ \ \ \ \ }dX_{t}=\kappa \left(
\alpha -X_{t}\right) dt+\sigma dW_{t}
\end{equation*}%
and the EW model is a parameter-dependent transformation of the CIR process,
which is given by%
\begin{equation*}
Y_{t}=\frac{1}{X+\delta }+\varrho ,\text{ \ \ \ \ \ }dX_{t}=\kappa \left(
\alpha -X_{t}\right) dt+\sigma \sqrt{X_{t}}dW_{t}
\end{equation*}%
Meanwhile, the two semiparametric models we consider are the same two models
considered in our simulations, namely, the nonparametrically transformed OU
and CIR\ models, which we denote as NPTOU and NPTCIR, respectively. Their
associated normalized UPD processes are given in (\ref{Normalized_OU}) and (%
\ref{Normalized_CIR}). 

Importantly, we maintain the assumption that the VIX is a Markov diffusion
process. In particular, we rule out jumps and stochastic volatility (SV) in
the VIX which is inconsistent with the empirical findings of, e.g., Kaeck
and Alexander (2013). However, their models are fully parametric and so
impose much stronger functional form restrictions on the drift and diffusion
component compared to our semiparametric approach. Specifically, jumps and
SV components are often used to capture extremal events (fat tails). It is
possible that these components are needed in explaining the VIX dynamics due
to the restrictive drift and diffusion specifications they consider. Our
semiparametric approach allows for more flexibility in this respect and so
can be seen as a competing approach to capturing the same features in data.
An interesting research topic would be to develop tools that allow for
formal statistical comparison of our class of models against these
alternative ones.

\subsection{Results}

For each of the two UPDs, we examine whether the parametric specification of
the transformation is supported by the data. We do this by testing each of
the parametric models against the semiparametric alternative where the
transformation is left unspecified. We do so by computing a pseudo
Likelihood Ratio (pseudo-LR) test statistic defined as the difference
between the pseudo log-likelihood (pseudo-LL) of the semiparametric model
and the log-likelihood (LL) of the parametric model. Since the model under
the alternative is semiparametric and estimated by pseudo-ML, the pseudo-LL
test statistic will not follow a $\chi ^{2}$-distribution. We therefore
resort to a parametric bootstrap procedure:\ For each of the two pseudo-LR
test, we simulate $1000$ new time series from the parametric model using as
data-generating parameter values the MLEs obtained from the original sample.
For each of the $1000$ new data sets, of the same size as the original one,
we estimate both the parametric model and the semiparametric model and
compute the corresponding pseudo-LR statistic. Finally, we use the $95$th
and $99$th quantiles from the simulated distribution of the pseudo-LR
statistic as our $5\%$ and $1\%$ bootstrap critical values, respectively.

The pseudo-LL is computed using the log-likelihood given in (\ref{eq: L(f)
def}) with $U\left( y\right) $ and $\log U^{\prime }\left( y;\theta \right) $
replaced by $\tilde{U}^{\prime }\left( y;\theta \right) $ given in (\ref{eq:
U-tilde}) and $\log \tilde{U}^{\prime }\left( y;\theta \right) =\log \hat{f}%
_{Y}\left( y\right) -\log f_{X}\left( \tilde{U}\left( y\right) ;\theta
\right) $, respectively. Here, $\hat{f}_{Y}\left( y\right) $ is the kernel
density estimator which requires us choosing a bandwidth. There is a lack of
consensus on the right procedure for choosing bandwidths for kernel
estimators using dependent data. We therefore considered a sequence of
bandwidths constructed by multiplying the Silverman's rule of thumb
bandwidth, denoted as $h_{S}$, by a factor $k$ between $0.75$ and $1.75$ on
a small grid. Visual inspection of these density estimates revealed that
with $k$ is around $1.5$, the resulting density appears to be the most
satisfactory in terms of smoothness and the revelation of distributional
features of the data. For this reason, we report our inferential results
based on the relatively optimal bandwidth $1.5h_{S}=2.0730$ below. However,
our conclusions remain unchanged for any bandwidth within the aforementioned
range.

Our estimation and testing results are reported in Table 5. The upper panel
of the table presents the parameter estimates for the models together with
their standard errors in the parentheses underneath. For the two
semiparametric models, these were computed using the estimators proposed in
Chen and Fan (2006). Recall that due to normalization, only $\kappa $ is
estimated for the NPTOU model and only $\kappa $ and $\alpha $ for the
NPTCIR model.\ In addition, while $\kappa $ has the same interpretation
(i.e. rate of mean reversion) and scale in all four models, $\alpha $ has
different scales in the two transformed CIR models. For both the transformed
OU and the transformed CIR\ classes of models, we can see that the PMLEs of
the mean-reversion parameter $\hat{\kappa}$\ are slightly lower than their
corresponding MLE estimates. The same difference applies to their standard
errors. This shows that parametric (mis-)specification of the stationary
distribution does have a quite significant impact on the estimation of the
dynamic parameters.%
\begin{equation*}
\text{\lbrack Table 5]}
\end{equation*}

The lower panel presents the LL values and the our pseudo-LR test results.
We can see that the EW\ model has a much higher LL $(-1.1585)$ than the DO
model $\left( -1.1724\right) $, suggesting much better goodness of fit to
the data by the former. This is not entirely surprising because the EW model
is more flexible both in terms of the UPD and the transformation function
compared to the DO model. Meanwhile, the NPTCIR model has a higher pseudo-LL
than the NPTOU. Since they have identical stationary distributions, such a
difference is solely due to the additional flexibility of the UPD of the
former. Most importantly, we see that when the underlying diffusions are the
same, models with nonparametric transformation have much higher LLs than
those with parametric transformations. More specifically, the resulting
pseudo-LR between the NPTOU model and the DO model is $290.7263$, and that
between the NPTCIR model and the EW model is $40.8606$. This proves that the
exponential transformation of the DO model is too restrictive, and that
while the transformation function of the EW model is more flexible, it is
still rather restrictive relatively to our nonparametric alternative.

To formally assess the significance of the observed differences, we present
the empirical $5\%$ and $1\%$ critical values and the corresponding $p$%
-values of our pseudo-LR tests, obtained from our bootstrap procedure
described above. For both tests, we observe that those critical values are
all negative and the $p$-values are both exactly zero. This means that the
original pseudo-LRs of $290.7263$ and $40.8606$\ are not only far greater
than their corresponding empirical critical values but also greater than any
of the bootstrap pseudo-LRs when the parametric model under the null
hypothesis is true. This suggests that when either the DO model or the EW
model is the true model, the corresponding NPTOU model or the NPTCIR\ model
is unlikely to produce a higher LL value than the parametric model itself.
This is fairly strong evidence that the parametric assumptions made by the
DO and the EW models are not supported by our data and our nonparametrically
transformed models are strongly favored.

The reason for the rejection of the two parametric models can be found in
the implied stationary densities of the two models which we plot in Figure 7
together with the kernel density estimator. As can be seen from this figure,
the parametric specifications are unable to capture the middle range of the
empirical distribution of VIX; in contrast, the two semiparametric
alternatives are constructed so that they match the empirical distribution
exactly.

\begin{equation*}
\text{\lbrack Figure 7]}
\end{equation*}

\section{Conclusion\label{Sec_Conclusion}}

We propose a novel semiparametric approach for modelling stationary
nonlinear univariate diffusions. The class of models can be thought of as
Markov copula models where the copula is implied by the UPD model. Primitive
conditions for the identification of the UPD parameters together with the
unknown transformations from discrete samples are provided. We derive the
asymptotic properties for our semiparametric likelihood-based estimators of
the UPD parameters and kernel-based drift and diffusion estimators. Our
simulation results suggest that our semiparametric method performs well in
finite sample compared to the fully parametric method, and our relatively
simple application shows that the parametric assumptions on the
transformation function of the well known DO model and EW model are rejected
by the data against our nonparametric alternatives. Potential future work
under this framework may include extensions to multivariate diffusions and
jump-diffusions.

\newpage

\newpage
%

\pagebreak

\appendix

\section{Proofs}

\begin{proof}[Proof of Theorem \protect\ref{Th: Master ID}]
From eqs. (\ref{eq: Y = V(tilde X)})-(\ref{eq: sig tilde X}), it is obvious
that (\ref{eq: equiv 1})-(\ref{eq: equiv 2}) imply $\mathcal{S}\sim \mathcal{%
\tilde{S}}$. Now, suppose that $\mathcal{S}\sim \mathcal{\tilde{S}}$; this
implies that $\mu _{Y}\left( y;\mathcal{S}\right) =\mu _{Y}\left( y;\mathcal{%
\tilde{S}}\right) $ and $\sigma _{Y}^{2}\left( y;\mathcal{S}\right) =\sigma
_{Y}^{2}\left( y;\mathcal{\tilde{S}}\right) $, where $\mu _{Y}$ and $\sigma
_{Y}^{2}$ are given in eqs. (\ref{RDdrift})-(\ref{RDdiffusion}). That is,
for all $y\in \mathcal{Y}$,%
\begin{eqnarray*}
\frac{\mu _{X}\left( U\left( y\right) ;\theta \right) }{U^{\prime }\left(
y\right) }-\frac{1}{2}\sigma _{X}^{2}\left( U\left( y\right) ;\theta \right) 
\frac{U^{\prime \prime }\left( y\right) }{U^{\prime }\left( y\right) ^{3}}
&=&\frac{\mu _{X}\left( \tilde{U}\left( y\right) ;\tilde{\theta}\right) }{%
\tilde{U}^{\prime }\left( y\right) }-\frac{1}{2}\sigma _{X}^{2}\left( \tilde{%
U}\left( y\right) ;\tilde{\theta}\right) \frac{\tilde{U}^{\prime \prime
}\left( y\right) }{\tilde{U}^{\prime }\left( y\right) ^{3}}, \\
\frac{\sigma _{X}\left( U\left( y\right) ;\theta \right) }{U^{\prime }\left(
y\right) } &=&\frac{\sigma _{X}\left( \tilde{U}\left( y\right) ;\tilde{\theta%
}\right) }{\tilde{U}^{\prime }\left( y\right) }.
\end{eqnarray*}%
Since $V$ is one-to-one we can set $y=V\left( x\right) $ in the above to
obtain the following for all $x\in \mathcal{X}$,%
\begin{eqnarray}
&&\frac{\mu _{X}\left( U\left( V\left( x\right) \right) ;\theta \right) }{%
U^{\prime }\left( V\left( x\right) \right) }-\frac{1}{2}\sigma
_{X}^{2}\left( U\left( V\left( x\right) \right) ;\theta \right) \frac{%
U^{\prime \prime }\left( V\left( x\right) \right) }{U^{\prime }\left(
V\left( x\right) \right) ^{3}}  \label{eq: equiv drift} \\
&=&\frac{\mu _{X}\left( \tilde{U}\left( V\left( x\right) \right) ;\tilde{%
\theta}\right) }{\tilde{U}^{\prime }\left( V\left( x\right) \right) }-\frac{1%
}{2}\sigma _{X}^{2}\left( \tilde{U}\left( V\left( x\right) \right) ;\tilde{%
\theta}\right) \frac{\tilde{U}^{\prime \prime }\left( V\left( x\right)
\right) }{\tilde{U}^{\prime }\left( V\left( x\right) \right) ^{3}},  \notag
\\
\frac{\sigma _{X}\left( U\left( V\left( x\right) \right) ;\theta \right) }{%
U^{\prime }\left( V\left( x\right) \right) } &=&\frac{\sigma _{X}\left( 
\tilde{U}\left( V\left( x\right) \right) ;\tilde{\theta}\right) }{\tilde{U}%
^{\prime }\left( V\left( x\right) \right) }.  \label{eq: equiv diff}
\end{eqnarray}%
Define $T\left( x\right) =\tilde{U}\left( V\left( x\right) \right)
\Leftrightarrow T^{-1}\left( x\right) =U\left( \tilde{V}\left( x\right)
\right) $, and observe that%
\begin{equation*}
U\left( V\left( x\right) \right) =x,\ \ U^{\prime }\left( V\left( x\right)
\right) V^{\prime }\left( x\right) =1,\text{ \ \ }\frac{\partial T\left(
x\right) }{\partial x}=\tilde{U}^{\prime }\left( V\left( x\right) \right)
V^{\prime }\left( x\right) .
\end{equation*}%
Eq. (\ref{eq: equiv diff}) combined with the above implies (\ref{eq: equiv 2}%
)(ii),%
\begin{equation}
\sigma _{X}\left( x;\theta \right) =\frac{\sigma _{X}\left( U\left( V\left(
x\right) \right) ;\theta \right) }{U^{\prime }\left( V\left( x\right)
\right) V^{\prime }\left( x\right) }=\frac{\sigma _{X}\left( \tilde{U}\left(
V\left( x\right) \right) ;\tilde{\theta}\right) }{\tilde{U}^{\prime }\left(
V\left( x\right) \right) V^{\prime }\left( x\right) }=\frac{\sigma
_{X}\left( T\left( x\right) ;\tilde{\theta}\right) }{\partial T\left(
x\right) /\left( \partial x\right) }=\sigma _{T^{-1}\left( X\right) }\left(
x;\tilde{\theta}\right) .  \label{eq: sig_X = sig_T}
\end{equation}

Next, divide through with $V^{\prime }\left( x\right) $ in (\ref{eq: equiv
drift}) and rearrange to obtain%
\begin{eqnarray*}
\mu _{X}\left( x;\theta \right)  &=&\frac{\mu _{X}\left( T\left( x\right) ;%
\tilde{\theta}\right) }{\partial T\left( x\right) /\left( \partial x\right) }%
+\frac{1}{2}\left\{ \sigma _{X}^{2}\left( x;\theta \right) \frac{U^{\prime
\prime }\left( V\left( x\right) \right) }{U^{\prime }\left( V\left( x\right)
\right) ^{3}V^{\prime }\left( x\right) }-\sigma _{X}^{2}\left( T^{-1}\left(
x\right) ;\tilde{\theta}\right) \frac{\tilde{U}^{\prime \prime }\left(
V\left( x\right) \right) }{\tilde{U}^{\prime }\left( V\left( x\right)
\right) ^{3}V^{\prime }\left( x\right) }\right\}  \\
&=&\frac{\mu _{X}\left( T\left( x\right) ;\tilde{\theta}\right) }{\partial
T\left( x\right) /\left( \partial x\right) }+\frac{1}{2}\sigma
_{X}^{2}\left( T\left( x\right) ;\tilde{\theta}\right) \left\{ \frac{1}{%
\tilde{U}^{\prime }\left( V\left( x\right) \right) ^{2}V^{\prime }\left(
x\right) ^{3}}\frac{U^{\prime \prime }\left( V\left( x\right) \right) }{%
U^{\prime }\left( V\left( x\right) \right) ^{3}}-\frac{\tilde{U}^{\prime
\prime }\left( V\left( x\right) \right) }{\tilde{U}^{\prime }\left( V\left(
x\right) \right) ^{3}V^{\prime }\left( x\right) }\right\} 
\end{eqnarray*}%
where the second equality uses (\ref{eq: sig_X = sig_T}). Eq. (\ref{eq:
equiv 2})(i) now follows since%
\begin{eqnarray*}
&&\frac{1}{\tilde{U}^{\prime }\left( V\left( x\right) \right) ^{2}V^{\prime
}\left( x\right) ^{3}}\frac{U^{\prime \prime }\left( V\left( x\right)
\right) }{U^{\prime }\left( V\left( x\right) \right) ^{3}}-\frac{\tilde{U}%
^{\prime \prime }\left( V\left( x\right) \right) }{\tilde{U}^{\prime }\left(
V\left( x\right) \right) ^{3}V^{\prime }\left( x\right) } \\
&=&\frac{1}{\tilde{U}^{\prime }\left( V\left( x\right) \right) ^{3}V^{\prime
}\left( x\right) ^{3}}\left[ \frac{\tilde{U}^{\prime }\left( V\left(
x\right) \right) U^{\prime \prime }\left( V\left( x\right) \right) }{%
U^{\prime }\left( V\left( x\right) \right) ^{3}}-\tilde{U}^{\prime \prime
}\left( V\left( x\right) \right) V^{\prime }\left( x\right) ^{2}\right]  \\
&=&\frac{-1}{\tilde{U}^{\prime }\left( V\left( x\right) \right)
^{3}V^{\prime }\left( x\right) ^{3}}\left[ \tilde{U}^{\prime }\left( V\left(
x\right) \right) V^{\prime \prime }\left( x\right) +\tilde{U}^{\prime \prime
}\left( V\left( x\right) \right) V^{\prime }\left( x\right) ^{2}\right]  \\
&=&-\frac{\partial ^{2}T\left( x\right) /\left( \partial x^{2}\right) }{%
\partial T\left( x\right) /\left( \partial x\right) ^{3}}.
\end{eqnarray*}
\end{proof}

\bigskip

\begin{proof}[Proof of Theorem \protect\ref{theorem_AD_PMLE}]
We first note that the PMLE takes the same form as the one analyzed in Chen
and Fan (2006) with the general copula considered in their work satisfying
eq. (\ref{dic}). The desired result will follow if we can verify that the
conditions stated in their proof are satisfied by our assumptions: First, by
Assumptions 2.1, the discrete sample $\left\{ X_{i\Delta }:i=0,1,\ldots
,n\right\} $ generated by the UPD$\ X$ is first-order Markovian and with
marginal density $f_{X}\left( x;\theta \right) $\ and transition density $%
p_{X}\left( x|x_{0};\theta \right) $. Hence, the copula density $c_{X}\left(
u_{0},u;\theta \right) $\ in (\ref{dic}) implied by $X$ is absolutely
continuous with respect to the Lebesgue measure on $\left[ 0,1\right] ^{2}$
due to its continuity in $F_{X}\left( x;\theta \right) $, $f_{X}\left(
x;\theta \right) $\ and $p_{X}\left( x|x_{0};\theta \right) $. Moreover, the
implied copula is neither the Fr\'{e}chet-Hoeffding upper or lower bound due
to Assumption 2.1, i.e., $\sigma _{X}^{2}\left( x;\theta \right) >0$ for all 
$x\in \mathcal{X}$. Thus, Chen and Fan (2006, Assumption 1) is satisfied.
Second, our Assumption 4.2(i) ensures that $X$ is $\beta $-mixing with
polynomial\ decay rate. Third, by Theorem \ref{Th: Y prop}, $Y$ is mixing
with the same mixing properties as $X$ and so satisfies Chen and Fan (2006,
Assumption 1). The remaining conditions are met by Assumption 4.3(i).

For the analysis of the proposed sieve MLE, we note that it takes the same
form as the one analyzed in Chen, Wu and Yi (2009) and so their results
carry over to our setting. Their Assumption M and assumption of $\beta $%
-mixing property are satisfied by $Y$ under our Assumptions 2.1, 2.2, and
4.2(ii) together with our Theorem \ref{Th: Y prop}. The remaining conditions
are met by Assumption 4.3(ii).
\end{proof}

\bigskip

\begin{proof}[Proof of Theorem \protect\ref{AD_DD}]
Similar to the proof strategy employed in Lemma \ref{Lem: U deriv}, we define%
\begin{equation*}
\tilde{\mu}_{Y}\left( y\right) =\frac{\mu _{X}\left( U\left( y\right)
;\theta \right) }{U^{\prime }\left( y\right) }-\frac{1}{2}\sigma
_{X}^{2}(U\left( y\right) ;\theta )\frac{\hat{U}^{\prime \prime }\left(
y\right) }{U^{\prime }\left( y\right) ^{3}},\text{ \ \ }\tilde{\sigma}%
_{Y}^{2}\left( y\right) =\frac{\sigma _{X}^{2}\left( U\left( y\right)
;\theta \right) }{\hat{U}^{\prime }\left( y\right) ^{2}},
\end{equation*}%
and, with $f_{Y}^{\left( i\right) }$ denoting the $i$th derivative of $f_{Y}$
and similar for other functions, arrive at%
\begin{eqnarray*}
&&\sqrt{nh^{3}}\left\{ \hat{\mu}_{Y}\left( y\right) -\mu _{Y}\left( y\right)
-\frac{1}{2}h^{2}\kappa _{2}\frac{f_{Y}^{\left( 3\right) }\left( y\right) }{%
f_{X}\left( U\left( y\right) ;\theta \right) }\left[ -\frac{\sigma
_{X}^{2}\left( U\left( y\right) ;\theta \right) }{2U^{\prime }\left(
y\right) ^{3}}\right] \right\}  \\
&=&\sqrt{nh^{3}}\left\{ \tilde{\mu}_{Y}\left( y\right) -\mu _{Y}\left(
y\right) -\frac{1}{2}h^{2}\kappa _{2}\frac{f_{Y}^{\left( 3\right) }\left(
y\right) }{f_{X}\left( U\left( y\right) ;\theta \right) }\left[ -\frac{%
\sigma _{X}^{2}\left( U\left( y\right) ;\theta \right) }{2U^{\prime }\left(
y\right) ^{3}}\right] \right\} +o_{p}\left( 1\right)  \\
&=&-\frac{\sigma _{X}^{2}\left( U\left( y\right) ;\theta \right) }{%
2U^{\prime }\left( y\right) ^{3}}\sqrt{nh^{3}}\left\{ \hat{U}^{\left(
2\right) }\left( y\right) -U^{\left( 2\right) }\left( y\right) -\frac{1}{2}%
h^{2}\kappa _{2}\frac{f_{Y}^{\left( 3\right) }\left( y\right) }{f_{X}\left(
U\left( y\right) ;\theta \right) }\right\} +o_{p}\left( 1\right) ,
\end{eqnarray*}%
and 
\begin{eqnarray*}
&&\sqrt{nh}\left\{ \hat{\sigma}_{Y}^{2}\left( y\right) -\sigma
_{Y}^{2}\left( y\right) -\frac{1}{2}h^{2}\kappa _{2}\frac{f_{Y}^{\left(
2\right) }\left( y\right) }{f_{X}\left( U\left( y\right) ;\theta \right) }%
\left[ -\frac{2\sigma _{X}^{2}\left( U\left( y\right) ;\theta \right) }{%
U^{\prime }\left( y\right) ^{3}}\right] \right\}  \\
&=&\sqrt{nh}\left\{ \tilde{\sigma}_{Y}^{2}\left( y\right) -\sigma
_{Y}^{2}\left( y\right) -\frac{1}{2}h^{2}\kappa _{2}\frac{f_{Y}^{\left(
2\right) }\left( y\right) }{f_{X}\left( U\left( y\right) ;\theta \right) }%
\left[ -\frac{2\sigma _{X}^{2}\left( U\left( y\right) ;\theta \right) }{%
U^{\prime }\left( y\right) ^{3}}\right] \right\} +o_{p}\left( 1\right)  \\
&=&-\frac{2\sigma _{X}^{2}\left( U\left( y\right) ;\theta \right) }{%
U^{\prime }\left( y\right) ^{3}}\sqrt{nh}\left\{ \hat{U}^{\prime }\left(
y\right) -U^{\prime }\left( y\right) -\frac{1}{2}h^{2}\kappa _{2}\frac{%
f_{Y}^{\left( 2\right) }\left( y\right) }{f_{X}\left( U\left( y\right)
;\theta \right) }\right\} +o_{p}\left( 1\right) .
\end{eqnarray*}%
These together with (\ref{AD_U1}) and (\ref{AD_U2}) of Lemma \ref{Lem: U
deriv} and Slutsky's Theorem complete the proof.
\end{proof}

\section{Verification of conditions for OU\ and CIR\ model\label{Sec:
verification}}

We here verify the technical conditions of Chen and Fan (2006) for the
normalized versions of the OU and CIR\ model given in eqs. (\ref%
{Normalized_OU}) and (\ref{Normalized_CIR}), respectively. For both
examples, we will require that $U\left( y;\theta \right) $, as defined in
eq. (\ref{U}), and its first and second-order derivatives w.r.t $\theta $
are polynomially bounded in $y$. This imposes growth restrictions on the
transformation function and is used to easily verify various moment
conditions in the following. Also note that the criterion $l\left(
U_{i-1},U_{i};\theta \right) $ in Chen and Fan (2006) takes the form $%
l\left( U_{i-1},U_{i};\theta \right) :=\log p_{X}\left( U\left( Y_{i\Delta
};\theta \right) ;U\left( Y_{\left( i-1\right) \Delta };\theta \right)
;\theta \right) -\log f_{X}\left( U\left( Y_{i\Delta };\theta \right)
;\theta \right) $, where $U_{i}=F_{Y}\left( Y_{i\Delta }\right) $, in our
notation.

\subsection{OU model}

\noindent \textbf{Assumption 4.2:} It is easily seen that $\left\{ \frac{\mu
_{X}\left( x;\theta _{0}\right) }{\sigma _{X}\left( x;\theta _{0}\right) }-%
\frac{1}{2}\frac{\partial \sigma _{X}\left( x;\theta _{0}\right) }{\partial x%
}\right\} =-\sqrt{\frac{\kappa }{2}}x$ and $\frac{s\left( x;\theta
_{0}\right) \sigma _{X}\left( x;\theta _{0}\right) }{S\left( x;\theta
_{0}\right) }=\exp \left( \frac{x^{2}}{2}\right) /\int_{x^{\ast }}^{x}\exp
\left( \frac{z^{2}}{2}\right) dz$. Assumption 4.2 is verified by taking the
relevant limits.\smallskip 

\noindent \textbf{Assumption 4.3: }The implied copula of the normalized OU
process is Gaussian, for which Assumption 4.3(i) and 4.3(ii) are satisfied
as discussed in Chen and Fan (2006) and Chen, Wu, and Yi (2009),
respectively.

\subsection{CIR\ model}

\noindent \textbf{Assumption 4.2: }We obtain $\left\{ \frac{\mu _{X}\left(
x;\theta _{0}\right) }{\sigma _{X}\left( x;\theta _{0}\right) }-\frac{1}{2}%
\frac{\partial \sigma _{X}\left( x;\theta _{0}\right) }{\partial x}\right\} =%
\frac{\left( 2\alpha -1\right) }{2}\sqrt{\frac{\kappa }{2x}}-\sqrt{\frac{%
\kappa }{4x}}$ and $\frac{s\left( x;\theta _{0}\right) \sigma _{X}\left(
x;\theta _{0}\right) }{S\left( x;\theta _{0}\right) }=\frac{\exp \left\{
x\right\} }{x^{\alpha }}\sqrt{2\kappa }\sqrt{x}/\int_{x^{\ast }}^{x}\frac{%
\exp \left\{ z\right\} }{z^{\alpha }}dz$ and the assumption is verified by
taking relevant limits.\smallskip 

\noindent \textbf{Assumption 4.3. }First observe that%
\begin{equation*}
p_{X}\left( x|x_{0};\theta \right) =\exp \left[ c_{0}\left( \theta \right)
-c\left( \theta \right) \left( x+e^{-\kappa \Delta }x_{0}\right) \right] 
\frac{x_{0}}{x}I_{\alpha -1}\left( 2c^{2}\left( \theta \right) \sqrt{xx_{0}}%
\right) ,
\end{equation*}%
where $I_{q}\left( \cdot \right) $ is the so-called modified Bessel function
of the first kind and of order $q$ and $c_{0}\left( \theta ,\Delta \right) >0
$ and $c\left( \theta ,\Delta \right) >0$ are analytic functions. Moreover, $%
f_{X}$ is here the density of a gamma distribution and so all polynomial
moments of $X$ exist. Since $U$ is assumed to be polynomially bounded, this
implies that all polynomial moments of $Y$ also exist. All smoothness
conditions imposed in Chen and Fan (2006) are trivially satisfied since $%
p_{X}\left( x|x_{0};\theta \right) $ and $U\left( y;\theta \right) $ are
twice continuously differentiable w.r.t their arguments and so will not be
discussed any further. Similarly, we have already shown that $Y$ is
geometrically mixing. It remains to verify the moment conditions and the
identifying restrictions imposed in C1-C.5 in Proposition 4.2 and A2-A6 in
Chen and Fan (2006).\smallskip 

\noindent \textbf{C1} is satisfied if we restrict $\theta =\left( \alpha
,\kappa \right) $ to be situated in a compact set on $\mathbb{R}_{+}^{2}$
that contains the true value. Observe that%
\begin{equation*}
\log p_{X}\left( x|x_{0};\theta \right) =c_{0}\left( \theta \right) -c\left(
\theta \right) \left( x+e^{-\kappa \Delta }x_{0}\right) +\log \left( \frac{%
x_{0}}{x}\right) +\log I_{\alpha -1}\left( 2c^{2}\left( \theta \right) \sqrt{%
xx_{0}}\right) .
\end{equation*}%
Thus,%
\begin{eqnarray*}
s_{\theta }\left( x|x_{0};\theta \right)  &:&=\frac{\partial \log
p_{X}\left( x;x_{0};\theta \right) }{\partial \theta } \\
&=&\dot{c}_{0}\left( \theta \right) -\dot{c}\left( \theta \right) \left(
x+e^{-\kappa \Delta }x_{0}\right) +c\left( \theta \right) \Delta e^{-\kappa
\Delta }x_{0}+\frac{I_{\alpha -1}^{\prime }\left( 2c^{2}\left( \theta
\right) \sqrt{xx_{0}}\right) 4c\left( \theta \right) \sqrt{xx_{0}}\dot{c}%
\left( \theta \right) }{I_{\alpha -1}\left( 2c^{2}\left( \theta \right) 
\sqrt{xx_{0}}\right) } \\
&&+\left[ 
\begin{array}{c}
\frac{\dot{I}_{\alpha -1}\left( 2c^{2}\left( \theta \right) \sqrt{xx_{0}}%
\right) }{I_{\alpha -1}\left( 2c^{2}\left( \theta \right) \sqrt{xx_{0}}%
\right) } \\ 
0%
\end{array}%
\right] ,
\end{eqnarray*}%
where $\dot{c}_{0}\left( \theta \right) =\partial c_{0}\left( \theta \right)
/\left( \partial \theta \right) $ and similar for other functions, $%
I_{\alpha -1}^{\prime }\left( x\right) =\partial I_{\alpha -1}\left(
x\right) /\left( \partial x\right) $, and $\dot{I}_{\alpha -1}\left(
x\right) =\partial I_{\alpha -1}\left( x\right) /\left( \partial \alpha
\right) $. It is easily verified that $\left\vert I_{\alpha -1}^{\prime
}\left( x\right) /I_{\alpha -1}\left( x\right) \right\vert $ and $\left\vert
\left\vert I_{\alpha -1}^{\prime }\left( x\right) /I_{\alpha -1}\left(
x\right) \right\vert \right\vert $ are both bounded by a polynomial in $x$.
Thus, $\left\Vert s_{X}\left( x|x_{0};\theta \right) \right\Vert $ is
bounded by a polynomial uniformly in $\theta \in \Theta $. The expressions
of $s_{x}\left( x|x_{0};\theta \right) :=\partial \log p_{X}\left(
x;x_{0};\theta \right) /\left( \partial x\right) $ and $s_{x_{0}}\left(
x|x_{0};\theta \right) :=\partial \log p_{X}\left( x;x_{0};\theta \right)
/\left( \partial x_{0}\right) $ are on a similar form and also polynomially
bounded. Now, observe that%
\begin{eqnarray*}
l_{\theta }\left( U_{i-1},U_{i};\theta \right)  &:&=\frac{\partial l\left(
U_{i-1},U_{i};\theta \right) }{\partial \theta } \\
&=&s_{\theta }\left( U\left( Y_{i\Delta };\theta \right) |U\left( Y_{\left(
i-1\right) \Delta };\theta \right) ;\theta \right)  \\
&&+s_{x}\left( U\left( Y_{i\Delta };\theta \right) |U\left( Y_{\left(
i-1\right) \Delta };\theta \right) ;\theta \right) \dot{U}\left( Y_{i\Delta
};\theta \right)  \\
&&+s_{x_{0}}\left( U\left( Y_{i\Delta };\theta \right) |U\left( Y_{\left(
i-1\right) \Delta };\theta \right) ;\theta \right) \dot{U}\left( Y_{\left(
i-1\right) \Delta };\theta \right)  \\
&&-\frac{\partial \log f_{X}\left( U\left( Y_{i\Delta };\theta \right)
;\theta \right) }{\partial \theta }.
\end{eqnarray*}%
Given that the model is correctly specified and identified, it follows by
standard arguments for MLE that $E\left[ l_{\theta }\left(
U_{i},U_{i-1};\theta \right) \right] =0$ if and only if $\theta $ equals the
true value.\smallskip 

\noindent \textbf{C4}. From the above expression of\ $l_{\theta }\left(
U_{i},U_{i-1};\theta \right) $ together with our assumption on $U\left(
y;\theta \right) $, it is easily checked that it is bounded by a polynomial
in\ $\left( Y_{i\Delta },Y_{\left( i-1\right) \Delta }\right) $ uniformly in 
$\theta \in \Theta $. It now follows that $E\left[ \sup_{\theta }\left\Vert
l_{\theta }\left( U_{i},U_{i-1};\theta \right) \right\Vert ^{p}\right]
<\infty $ for any $p\geq 1$.\smallskip 

\noindent \textbf{C5.} 
\begin{equation*}
l_{\theta ,1}\left( U_{i-1},U_{i};\theta \right) =\frac{\partial l_{\theta
}\left( U_{i-1},U_{i};\theta \right) }{\partial U_{i-1}},\text{ \ \ }%
l_{\theta ,2}\left( U_{i-1},U_{i};\theta \right) =\frac{\partial l_{\theta
}\left( U_{i-1},U_{i};\theta \right) }{\partial U_{i}}
\end{equation*}%
are again bounded by polynomials in\ $\left( Y_{i\Delta },Y_{\left(
i-1\right) \Delta }\right) $ and so have all relevant moments.\smallskip 

\noindent \textbf{A1(ii)-(iii).} With $W_{1,i}$ and $W_{2,i}$ defined in
(4.2)-(4.3) in Chen and Fan (2006) and 
\begin{equation*}
l_{\theta ,\theta }\left( U_{i-1},U_{i};\theta \right) =\frac{\partial
^{2}l\left( U_{i-1},U_{i};\theta \right) }{\partial \theta \partial \theta
^{\prime }},
\end{equation*}%
\begin{equation*}
\lim_{n\rightarrow \infty }\text{Var}\left( \frac{1}{\sqrt{n}}%
\sum_{i=1}^{n}\left\{ l_{\theta }\left( U_{i-1},U_{i};\theta \right)
+W_{1,i}+W_{2,i}\right\} \right) ,
\end{equation*}%
and $E\left[ l_{\theta ,\theta }\left( U_{i-1},U_{i};\theta \right) \right] $
to have full rank. We have been unable to verify these two conditions due to
the complex form of the score and hessian of the CIR\ model.\smallskip 

\noindent \textbf{A4.} Observe that $\left\vert W_{1,i}\right\vert \leq E%
\left[ \left\vert U_{i-1}\right\vert \left\Vert l_{\theta ,1}\left(
U_{i-1},U_{i};\theta \right) \right\Vert \right] <\infty $ and similar for $%
W_{2,i}$. Thus, both have all relevant moments.\smallskip 

\noindent \textbf{A5-A6} have already been verified above.

\section{Lemma}

\begin{lemma}
\label{Lem: U deriv}\textit{Under Assumptions 2.1-2.2, 4.2(i), and 4.4-4.6,
we have as }$n\rightarrow \infty $\textit{, }$h\rightarrow 0$\textit{, }$%
nh\rightarrow \infty $\textit{,}%
\begin{equation}
\sqrt{nh}\left\{ \hat{U}^{\prime }\left( y\right) -U^{\prime }\left(
y\right) -\frac{1}{2}h^{2}\kappa _{2}\frac{f_{Y}^{\prime \prime }\left(
y\right) }{f_{X}\left( U\left( y\right) ;\theta _{0}\right) }\right\}
\rightarrow ^{d}N\left( 0,\frac{U^{\prime }\left( y\right) ^{2}}{f_{Y}\left(
y\right) }\int_{\mathbb{R}}K\left( z\right) ^{2}dz\right) ,  \label{AD_U1}
\end{equation}%
\textit{and as }$n\rightarrow \infty $\textit{, }$h\rightarrow 0$\textit{, }$%
nh^{3}\rightarrow \infty $\textit{,}%
\begin{equation}
\sqrt{nh^{3}}\left\{ \hat{U}^{\prime \prime }\left( y\right) -U^{\prime
\prime }\left( y\right) -\frac{1}{2}h^{2}\kappa _{2}\frac{f_{Y}^{\prime
\prime \prime }\left( y\right) }{f_{X}\left( U\left( y\right) ;\theta
_{0}\right) }\right\} \rightarrow ^{d}N\left( 0,\frac{U^{\prime }\left(
y\right) ^{2}}{f_{Y}\left( y\right) }\int_{\mathbb{R}}K^{\prime }\left(
z\right) ^{2}dz\right) .  \label{AD_U2}
\end{equation}
\end{lemma}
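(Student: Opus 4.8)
The plan is to reduce the asymptotics of $\hat U'$ and $\hat U''$ to those of the kernel density estimator $\hat f_Y:=\hat F_Y'$ and its derivative $\hat f_Y'$, and then to invoke the classical pointwise limit theory for kernel estimators under $\beta$-mixing. First I would differentiate the defining identity $\hat U(y)=F_X^{-1}(\hat F_Y(y);\hat\theta)$. Since $\partial_u F_X^{-1}(u;\theta)=1/f_X(F_X^{-1}(u;\theta);\theta)$, the chain rule gives
\[
\hat U'(y)=\frac{\hat f_Y(y)}{f_X(\hat U(y);\hat\theta)},\qquad
\hat U''(y)=\frac{\hat f_Y'(y)}{f_X(\hat U(y);\hat\theta)}-\frac{\hat f_Y(y)\,f_X'(\hat U(y);\hat\theta)\,\hat U'(y)}{f_X(\hat U(y);\hat\theta)^2},
\]
with the analogous population identities for $U'$ and $U''$ following from (\ref{U1}). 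Assumption 4.4 ensures that $U$, and hence $f_Y$, is smooth enough for $f_Y''$ and $f_Y'''$ in (\ref{AD_U1})--(\ref{AD_U2}) to exist.

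Next I would introduce the oracle versions $\tilde U'$ and $\tilde U''$ obtained by replacing the inner arguments $\hat U(y)$, $\hat\theta$ (and the factor $\hat U'$) inside $f_X,f_X'$ by their limits $U(y)$, $\theta_0$, $U'(y)$, leaving only $\hat f_Y$ and $\hat f_Y'$ stochastic. Because the smoothed empirical cdf in (\ref{eq: kernel cdf}) satisfies $\sup_y|\hat F_Y(y)-F_Y(y)|=O_P(n^{-1/2})+O(h^2)$ and $\hat\theta-\theta_0=O_P(n^{-1/2})$ by Assumption 4.5, one has $\hat U(y)-U(y)=O_P(n^{-1/2})+O(h^2)$, which vanishes faster than the $(nh)^{-1/2}$ and $(nh^3)^{-1/2}$ rates attached to $\hat f_Y-f_Y$ and $\hat f_Y'-f_Y'$. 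A first-order Taylor expansion of $f_X(\hat U;\hat\theta)$ about $f_X(U;\theta_0)$ then yields $\sqrt{nh}(\hat U'-\tilde U')=o_P(1)$ and $\sqrt{nh^3}(\hat U''-\tilde U'')=o_P(1)$, so it suffices to treat the oracle quantities. Subtracting the population identities leaves
\[
\tilde U'(y)-U'(y)=\frac{\hat f_Y(y)-f_Y(y)}{f_X(U(y);\theta_0)},\qquad
\tilde U''(y)-U''(y)=\frac{\hat f_Y'(y)-f_Y'(y)}{f_X(U(y);\theta_0)}+R_n(y),
\]
where $R_n(y)=-(\hat f_Y(y)-f_Y(y))f_X'(U(y);\theta_0)U'(y)/f_X(U(y);\theta_0)^2$ has stochastic order $O_P((nh)^{-1/2})=o_P((nh^3)^{-1/2})$ and a deterministic $O(h^2)$ part that is dominated by the retained leading bias provided $h\to0$ fast enough, as noted in the remark following Theorem \ref{AD_DD}.

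Finally I would invoke the standard pointwise results for the kernel density estimator and its first derivative (Robinson, 1983), which apply because $Y$ is stationary and $\beta$-mixing by Assumption 4.2(i) together with Theorem \ref{Th: Y prop}(2), and because the kernel satisfies Assumption 4.6. These give $\sqrt{nh}\{\hat f_Y(y)-f_Y(y)-\tfrac12 h^2\kappa_2 f_Y''(y)\}\to^{d}N(0,f_Y(y)\int_{\mathbb{R}}K(z)^2dz)$ and the analogue for $\hat f_Y'$ with bias $\tfrac12 h^2\kappa_2 f_Y'''(y)$, variance $f_Y(y)\int_{\mathbb{R}}K'(z)^2dz$, and rate $\sqrt{nh^3}$. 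Dividing through by $f_X(U(y);\theta_0)$ and substituting $f_X(U(y);\theta_0)=f_Y(y)/U'(y)$ from (\ref{U1}) converts the bias constants into $f_Y''(y)/f_X(U(y);\theta_0)$ and $f_Y'''(y)/f_X(U(y);\theta_0)$ and both variance factors into $U'(y)^2/f_Y(y)$ times $\int_{\mathbb{R}}K^2$ and $\int_{\mathbb{R}}(K')^2$ respectively, exactly matching (\ref{AD_U1})--(\ref{AD_U2}); Slutsky's theorem then closes the argument. I expect the delicate step to be the second one: one must verify through careful Taylor bookkeeping — especially of the several quotient-rule terms in $\hat U''$ — that every sub-leading piece, in particular those coupling $\hat f_Y-f_Y$ with $\hat U-U$, is genuinely negligible after multiplication by $\sqrt{nh^3}$, which is precisely what pins down the admissible interplay between the bandwidth decay and the $\sqrt{n}$-rate of the first-step estimates.
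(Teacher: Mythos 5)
Your proposal is correct and follows essentially the same route as the paper's proof: differentiate $\hat{U}(y)=F_X^{-1}(\hat{F}_Y(y);\hat{\theta})$ to express $\hat{U}'$ and $\hat{U}''$ through $\hat{f}_Y$ and $\hat{f}_Y'$, replace the inner arguments $(\hat{U},\hat{\theta})$ by $(U,\theta_0)$ at a cost that is $o_P$ of the leading rate, invoke Robinson's (1983) kernel CLT for $\beta$-mixing data (justified via Assumption 4.2(i) and Theorem \ref{Th: Y prop}), and use $U'(y)=f_Y(y)/f_X(U(y);\theta_0)$ to convert the bias and variance constants. The only difference is bookkeeping for the second derivative: the paper's oracle $\hat{U}_0''$ also replaces $\hat{f}_Y(y)^2$ by $f_Y(y)^2$ so that its second term is exactly the population one, whereas you keep $\hat{f}_Y$ there and bound the resulting remainder $R_n(y)$ explicitly --- these are the same terms the paper absorbs into its "arguments similar to before," and your version has the minor merit of making visible the extra $O(h^2)$ smoothing-bias contribution that both treatments require to be asymptotically negligible.
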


\begin{proof}
With $\hat{F}_{Y}\left( y\right) $ given in (\ref{eq: kernel cdf}), let $%
\hat{f}_{Y}^{\left( i\right) }\left( y\right) =\hat{F}_{Y}^{\left(
i+1\right) }\left( y\right) $, for\ $i=1,2$, be the $i$th derivative of the
kernel marginal density estimator. Using standard methods for kernel
estimators (c.f. Robinson, 1983), we obtain under the assumptions of the
lemma that, as $n\rightarrow \infty ,h\rightarrow 0$, and $%
nh^{1+2i}\rightarrow \infty $, 
\begin{equation}
\sqrt{nh^{1+2i}}\left\{ \hat{f}_{Y}^{\left( i\right) }\left( y\right)
-f_{Y}^{\left( i\right) }\left( y\right) -\frac{1}{2}h^{2}\kappa
_{2}f_{Y}^{\left( i+2\right) }\left( y\right) \right\} \rightarrow
^{d}N\left( 0,V_{i}\left( y\right) \right)  \label{R1983}
\end{equation}%
where $V_{i}\left( y\right) =f_{Y}\left( y\right) \int_{\mathbb{R}}K^{\left(
i\right) }\left( z\right) ^{2}dz$. Assumptions 2.1 and 4.4 ensure that $%
f_{Y}\left( y\right) $ is sufficiently smooth so that $f_{Y}^{\left(
2\right) }\left( y\right) $\ and $f_{Y}^{\left( 3\right) }\left( y\right) $\
exist. Assumption 4.2(i) and 4.6 regulate the mixing property of $Y$ and the
kernel function, respectively, as required by Robinson (1983).

From (\ref{U1}) we have $\hat{U}^{\prime }\left( y\right) =\hat{f}_{Y}\left(
y\right) /f_{X}(\hat{U}\left( y\right) ;\hat{\theta})$. Now define $\hat{U}%
_{0}^{\prime }\left( y\right) =\hat{f}_{Y}\left( y\right) /f_{X}(U\left(
y\right) ;\theta _{0})$ and note that Assumption 4.4 and 4.5 together with
the delta-method\ imply $\hat{U}^{\prime }\left( y\right) -\hat{U}%
_{0}^{\prime }\left( y\right) =O_{P}\left( 1/\sqrt{n}\right) =o_{P}(1/\sqrt{%
nh})$. It then follows that%
\begin{eqnarray*}
&&\sqrt{nh}\left\{ \hat{U}^{\prime }\left( y\right) -U^{\prime }\left(
y\right) -\frac{1}{2}h^{2}\kappa _{2}f_{Y}^{\left( 2\right) }\left( y\right) 
\frac{1}{f_{X}\left( U\left( y\right) ;\theta _{0}\right) }\right\}  \\
&=&\sqrt{nh}\left\{ o_{P}\left( 1/\sqrt{nh}\right) +\hat{U}_{0}^{\prime
}\left( y\right) -U^{\prime }\left( y\right) -\frac{1}{2}h^{2}\kappa
_{2}f_{Y}^{\left( 2\right) }\left( y\right) \frac{1}{f_{X}\left( U\left(
y\right) ;\theta _{0}\right) }\right\}  \\
&=&\frac{1}{f_{X}\left( U\left( y\right) ;\theta _{0}\right) }\sqrt{nh}%
\left\{ \hat{f}_{Y}\left( y\right) -f_{Y}\left( y\right) -\frac{1}{2}%
h^{2}\kappa _{2}f_{Y}^{\left( 2\right) }\left( y\right) \right\}
+o_{P}\left( 1\right) .
\end{eqnarray*}%
Using (\ref{R1983}) and the same arguments as in Kristensen (2011, Proof of
Theorem 1), we arrive at (\ref{AD_U1}).

Next, observe that $U^{\prime \prime }\left( y\right) =\frac{f_{Y}^{\prime
}\left( y\right) }{f_{X}\left( U\left( y\right) ;\theta \right) }-\frac{%
f_{X}^{\prime }\left( U\left( y\right) ;\theta \right) f_{Y}\left( y\right)
^{2}}{f_{X}\left( U\left( y\right) ;\theta \right) ^{3}}$ where $%
f_{X}^{\prime }\left( x;\theta \right) $ and $f_{Y}^{\prime }\left( y\right) 
$ are the first derivatives of $f_{X}\left( x;\theta \right) $ and $%
f_{Y}\left( y\right) $, respectively. Similarly, it is easily checked that $%
\hat{U}^{\prime \prime }\left( y\right) =\frac{\hat{f}_{Y}^{\prime }\left(
y\right) }{f_{X}(\hat{U}\left( y\right) ;\hat{\theta})}-\frac{f_{X}^{\prime
}(\hat{U}\left( y\right) ;\hat{\theta})\hat{f}_{Y}\left( y\right) ^{2}}{%
f_{X}(\hat{U}\left( y\right) ;\hat{\theta})^{3}}$. Define $\hat{U}%
_{0}^{\prime \prime }\left( y\right) =\frac{\hat{f}_{Y}^{\prime }\left(
y\right) }{f_{X}\left( U\left( y\right) ;\theta _{0}\right) }-\frac{%
f_{X}^{\prime }\left( U\left( y\right) ;\theta _{0}\right) f_{Y}\left(
y\right) ^{2}}{f_{X}\left( U\left( y\right) ;\theta _{0}\right) ^{3}}$ and
apply arguments similar to before to obtain%
\begin{eqnarray*}
&&\sqrt{nh^{3}}\left\{ \hat{U}^{\prime \prime }\left( y\right) -U^{\prime
\prime }\left( y\right) -\frac{1}{2}h^{2}\kappa _{2}f_{Y}^{\left( 3\right)
}\left( y\right) \frac{1}{f_{X}\left( U\left( y\right) ;\theta _{0}\right) }%
\right\}  \\
&=&\frac{1}{f_{X}\left( U\left( y\right) ;\theta _{0}\right) }\sqrt{nh^{3}}%
\left\{ f_{Y}^{\prime }\left( y\right) -f_{Y}^{\prime }\left( y\right) -%
\frac{1}{2}h^{2}\kappa _{2}f_{Y}^{\left( 3\right) }\left( y\right) \right\}
+o_{p}\left( 1\right) 
\end{eqnarray*}%
which together with (\ref{R1983}) yield (\ref{AD_U2}).
\end{proof}

\pagebreak

\section{Tables and Figures}

\begin{center}
\textbf{Table 1: Bias and RMSE of }$\kappa $\textbf{\ in the OU-SKST Model}

\begin{tabular}{llcccc}
\hline\hline
&  & \multicolumn{4}{c}{Bias/$\kappa $} \\ \cline{3-6}
Sample Size &  & \multicolumn{2}{c}{2202} & \multicolumn{2}{c}{5505} \\ 
\cline{3-6}
True Parameter Value & $\rho _{1}$ & PPMLE & PMLE & PPMLE & PMLE \\ 
$\kappa =1.1376$ & $0.9944$ & \multicolumn{1}{l}{$0.6121$} & 
\multicolumn{1}{l}{$1.1379$} & \multicolumn{1}{l}{$0.2690$} & 
\multicolumn{1}{l}{$0.5054$} \\ 
$\kappa =5.6882$ & $0.9758$ & \multicolumn{1}{l}{$0.1230$} & 
\multicolumn{1}{l}{$0.1987$} & \multicolumn{1}{l}{$0.0652$} & 
\multicolumn{1}{l}{$0.0939$} \\ 
$\kappa =11.377$ & $0.9531$ & \multicolumn{1}{l}{$0.0656$} & 
\multicolumn{1}{l}{$0.0888$} & \multicolumn{1}{l}{$0.0400$} & 
\multicolumn{1}{l}{$0.0441$} \\ 
$\kappa =22.753$ & $0.9093$ & \multicolumn{1}{l}{$0.0385$} & 
\multicolumn{1}{l}{$0.0383$} & \multicolumn{1}{l}{$0.0270$} & 
\multicolumn{1}{l}{$0.0210$} \\ \hline
&  & \multicolumn{4}{c}{RMSE/$\kappa $} \\ \cline{3-6}
Sample Size &  & \multicolumn{2}{c}{2202} & \multicolumn{2}{c}{5505} \\ 
\cline{3-6}
True Parameter Value & $\rho _{1}$ & PPMLE & PMLE & PPMLE & PMLE \\ 
$\kappa =1.1376$ & $0.9944$ & \multicolumn{1}{l}{$0.8603$} & 
\multicolumn{1}{l}{$1.2932$} & \multicolumn{1}{l}{$0.4476$} & 
\multicolumn{1}{l}{$0.6224$} \\ 
$\kappa =5.6882$ & $0.9758$ & \multicolumn{1}{l}{$0.2420$} & 
\multicolumn{1}{l}{$0.2930$} & \multicolumn{1}{l}{$0.1454$} & 
\multicolumn{1}{l}{$0.1655$} \\ 
$\kappa =11.377$ & $0.9531$ & \multicolumn{1}{l}{$0.1574$} & 
\multicolumn{1}{l}{$0.1730$} & \multicolumn{1}{l}{$0.0974$} & 
\multicolumn{1}{l}{$0.1044$} \\ 
$\kappa =22.753$ & $0.9093$ & \multicolumn{1}{l}{$0.1059$} & 
\multicolumn{1}{l}{$0.1133$} & \multicolumn{1}{l}{$0.0668$} & 
\multicolumn{1}{l}{$0.0711$} \\ \hline
\end{tabular}

\pagebreak

\textbf{Table 2: Bias and RMSE of }$\kappa $\textbf{\ in the CIR-SKST Model}

\begin{tabular}{lllcccc}
\hline\hline
&  &  & \multicolumn{4}{c}{Bias/$\kappa $} \\ \cline{4-7}
Sample Size &  &  & \multicolumn{2}{c}{2202} & \multicolumn{2}{c}{5505} \\ 
\cline{4-7}
\multicolumn{2}{l}{True Parameter Values} & $\rho _{1}$ & PPMLE & PMLE & 
PPMLE & PMLE \\ 
\multicolumn{2}{l}{$\left( \kappa ,\alpha \right) =\left(
0.7653,1.1653\right) $} & $0.9921$ & \multicolumn{1}{l}{$0.9023$} & 
\multicolumn{1}{l}{$1.5269$} & \multicolumn{1}{l}{$0.4576$} & 
\multicolumn{1}{l}{$0.7717$} \\ 
\multicolumn{2}{l}{$\left( \kappa ,\alpha \right) =\left(
3.8267,1.1653\right) $} & $0.9675$ & \multicolumn{1}{l}{$0.2358$} & 
\multicolumn{1}{l}{$0.3347$} & \multicolumn{1}{l}{$0.1194$} & 
\multicolumn{1}{l}{$0.1754$} \\ 
\multicolumn{2}{l}{$\left( \kappa ,\alpha \right) =\left(
7.6533,1.1653\right) $} & $0.9399$ & \multicolumn{1}{l}{$0.1328$} & 
\multicolumn{1}{l}{$0.1816$} & \multicolumn{1}{l}{$0.0646$} & 
\multicolumn{1}{l}{$0.0853$} \\ 
\multicolumn{2}{l}{$\left( \kappa ,\alpha \right) =\left(
15.307,1.1653\right) $} & $0.8917$ & \multicolumn{1}{l}{$0.0768$} & 
\multicolumn{1}{l}{$0.0928$} & \multicolumn{1}{l}{$0.0349$} & 
\multicolumn{1}{l}{$0.0398$} \\ \hline
&  &  & \multicolumn{4}{c}{RMSE/$\kappa $} \\ \cline{4-7}
Sample Size &  &  & \multicolumn{2}{c}{2202} & \multicolumn{2}{c}{5505} \\ 
\cline{4-7}
\multicolumn{2}{l}{True Parameter Values} & $\rho _{1}$ & PPMLE & PMLE & 
PPMLE & PMLE \\ 
\multicolumn{2}{l}{$\left( \kappa ,\alpha \right) =\left(
0.7653,1.1653\right) $} & $0.9921$ & \multicolumn{1}{l}{$1.2424$} & 
\multicolumn{1}{l}{$1.7509$} & \multicolumn{1}{l}{$0.6692$} & 
\multicolumn{1}{l}{$0.9231$} \\ 
\multicolumn{2}{l}{$\left( \kappa ,\alpha \right) =\left(
3.8267,1.1653\right) $} & $0.9675$ & \multicolumn{1}{l}{$0.3881$} & 
\multicolumn{1}{l}{$0.4511$} & \multicolumn{1}{l}{$0.2363$} & 
\multicolumn{1}{l}{$0.2746$} \\ 
\multicolumn{2}{l}{$\left( \kappa ,\alpha \right) =\left(
7.6533,1.1653\right) $} & $0.9399$ & \multicolumn{1}{l}{$0.2431$} & 
\multicolumn{1}{l}{$0.2771$} & \multicolumn{1}{l}{$0.1498$} & 
\multicolumn{1}{l}{$0.1672$} \\ 
\multicolumn{2}{l}{$\left( \kappa ,\alpha \right) =\left(
15.307,1.1653\right) $} & $0.8917$ & \multicolumn{1}{l}{$0.1712$} & 
\multicolumn{1}{l}{$0.1847$} & \multicolumn{1}{l}{$0.1003$} & 
\multicolumn{1}{l}{$0.1068$} \\ \hline
\end{tabular}

\bigskip

\bigskip

\textbf{Table 3: Bias and RMSE of }$\alpha $\textbf{\ in the CIR-SKST Model}

\begin{tabular}{lllcccc}
\hline\hline
&  &  & \multicolumn{4}{c}{Bias/$\alpha $} \\ \cline{4-7}
Sample Size &  &  & \multicolumn{2}{c}{2202} & \multicolumn{2}{c}{5505} \\ 
\cline{4-7}
\multicolumn{2}{l}{True Parameter Values} & $\rho _{1}$ & PPMLE & PMLE & 
PPMLE & PMLE \\ 
\multicolumn{2}{l}{$\left( \kappa ,\alpha \right) =\left(
0.7653,1.1653\right) $} & $0.9921$ & \multicolumn{1}{l}{$0.9458$} & 
\multicolumn{1}{l}{$1.0299$} & \multicolumn{1}{l}{$0.6192$} & 
\multicolumn{1}{l}{$0.8720$} \\ 
\multicolumn{2}{l}{$\left( \kappa ,\alpha \right) =\left(
3.8267,1.1653\right) $} & $0.9675$ & \multicolumn{1}{l}{$0.4353$} & 
\multicolumn{1}{l}{$0.5171$} & \multicolumn{1}{l}{$0.1899$} & 
\multicolumn{1}{l}{$0.2554$} \\ 
\multicolumn{2}{l}{$\left( \kappa ,\alpha \right) =\left(
7.6533,1.1653\right) $} & $0.9399$ & \multicolumn{1}{l}{$0.2633$} & 
\multicolumn{1}{l}{$0.3152$} & \multicolumn{1}{l}{$0.1033$} & 
\multicolumn{1}{l}{$0.1279$} \\ 
\multicolumn{2}{l}{$\left( \kappa ,\alpha \right) =\left(
15.307,1.1653\right) $} & $0.8917$ & \multicolumn{1}{l}{$0.1302$} & 
\multicolumn{1}{l}{$0.1646$} & \multicolumn{1}{l}{$0.0663$} & 
\multicolumn{1}{l}{$0.0780$} \\ \hline
&  &  & \multicolumn{4}{c}{RMSE/$\alpha $} \\ \cline{4-7}
Sample Size &  &  & \multicolumn{2}{c}{2202} & \multicolumn{2}{c}{5505} \\ 
\cline{4-7}
\multicolumn{2}{l}{True Parameter Values} & $\rho _{1}$ & PPMLE & PMLE & 
PPMLE & PMLE \\ 
\multicolumn{2}{l}{$\left( \kappa ,\alpha \right) =\left(
0.7653,1.1653\right) $} & $0.9921$ & \multicolumn{1}{l}{$1.5614$} & 
\multicolumn{1}{l}{$1.5784$} & \multicolumn{1}{l}{$1.1222$} & 
\multicolumn{1}{l}{$1.4309$} \\ 
\multicolumn{2}{l}{$\left( \kappa ,\alpha \right) =\left(
3.8267,1.1653\right) $} & $0.9675$ & \multicolumn{1}{l}{$0.8443$} & 
\multicolumn{1}{l}{$0.9197$} & \multicolumn{1}{l}{$0.3867$} & 
\multicolumn{1}{l}{$0.4462$} \\ 
\multicolumn{2}{l}{$\left( \kappa ,\alpha \right) =\left(
7.6533,1.1653\right) $} & $0.9399$ & \multicolumn{1}{l}{$0.5473$} & 
\multicolumn{1}{l}{$0.5695$} & \multicolumn{1}{l}{$0.2298$} & 
\multicolumn{1}{l}{$0.2558$} \\ 
\multicolumn{2}{l}{$\left( \kappa ,\alpha \right) =\left(
15.307,1.1653\right) $} & $0.8917$ & \multicolumn{1}{l}{$0.2802$} & 
\multicolumn{1}{l}{$0.3139$} & \multicolumn{1}{l}{$0.1453$} & 
\multicolumn{1}{l}{$0.1684$} \\ \hline
\end{tabular}

\pagebreak

\textbf{Table 4: Descriptive Statistics of Daily VIX}

\begin{tabular}{lc}
\hline\hline
Sample Period & January 2, 1990 - July 19, 2019 \\ 
Sample Size & 7445 \\ 
Mean & 19.21 \\ 
Median & 17.31 \\ 
Std Dev. & 7.76 \\ 
Skewness & 2.12 \\ 
Kurtosis & 10.85 \\ 
Jarque-Bera Statistic & 24669.26 \\ \hline
\end{tabular}

\bigskip

\bigskip

\textbf{Table 5: Model Estimation and Pseudo-LR Test Results}

\begin{tabular}{lcccc}
\hline\hline
& \multicolumn{2}{c}{Transformed OU} & \multicolumn{2}{c}{Transformed CIR}
\\ \hline
& DO & NPTOU & EW & NPTCIR \\ 
$\hat{\kappa}$ & 4.4888 & 3.8191 & 4.0741 & 3.7541 \\ 
& (0.5795) & (0.4525) & (0.5597) & (0.4257) \\ 
$\hat{\alpha}$ & 2.8890 &  & 0.0524 & 14.6916 \\ 
& (0.0423) &  & (0.0032) & (8.8484) \\ 
$\hat{\sigma}^{2}$ & 1.0818 &  & 0.0695 &  \\ 
& (0.0179) &  & (0.0097) &  \\ 
$\hat{\varrho}$ &  &  & 0.1916 &  \\ 
&  &  & (0.4827) &  \\ 
$\hat{\delta}$ &  &  & 0.0072 &  \\ 
&  &  & (0.0029) &  \\ \hline
$LL\left( 10^{4}\right) $ & -1.1724 & -1.1579 & -1.1585 & -1.1565 \\ 
$LR$ &  & 290.7263 &  & 40.8606 \\ 
$CV_{0.05}$ &  & -52.1521 &  & -23.6766 \\ 
$CV_{0.01}$ &  & -30.5511 &  & -10.9027 \\ 
$p$-value &  & 0.0000 &  & 0.0000 \\ \hline
\end{tabular}

\begin{figure}
	\centering
	\includegraphics[width=0.7\linewidth]{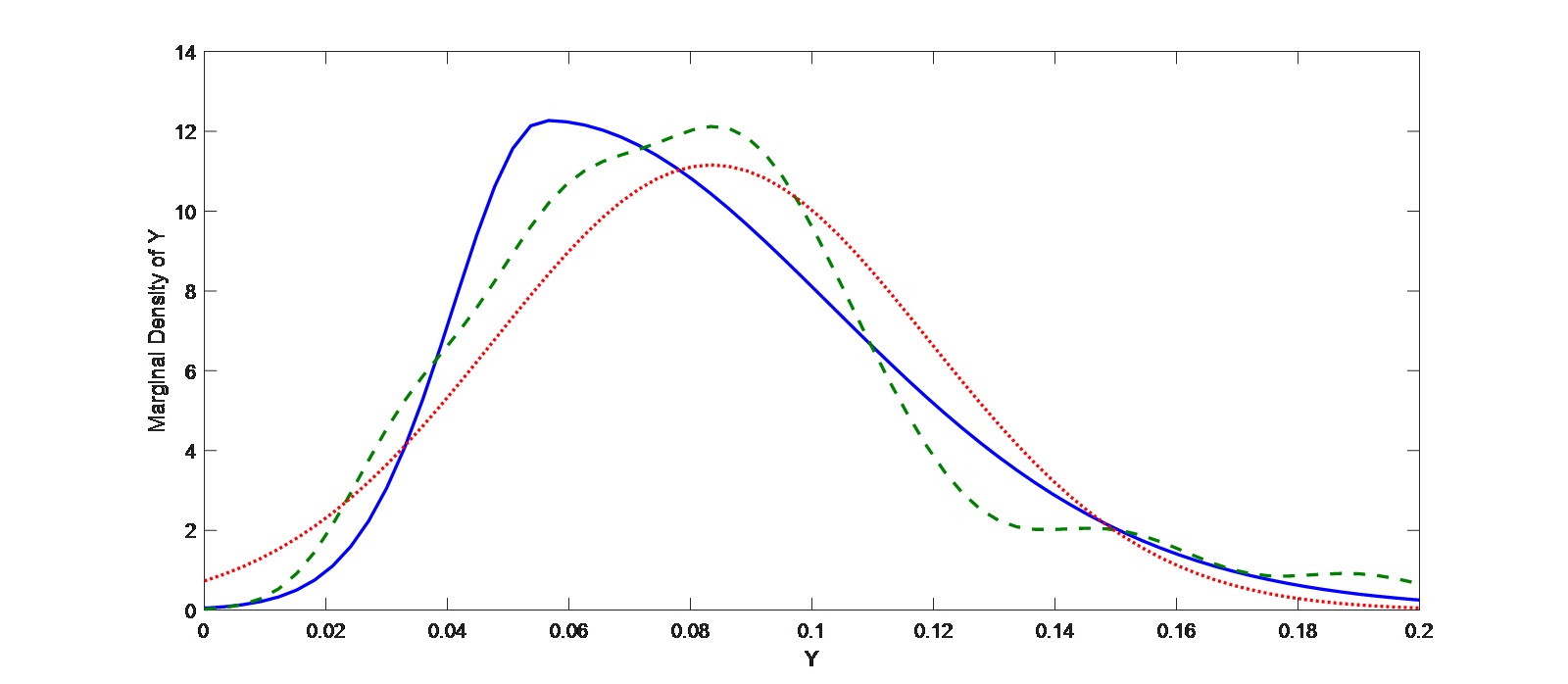}
	\caption[]{Marginal Densities of the Eurodollar Rates. \newline {\small Solid = SKST Density, Dashed = Kernel Density, Dotted = Normal Density}}
\label{fig:marginaldensities}
\end{figure}

\begin{figure}
	\centering
	\includegraphics[width=0.7\linewidth]{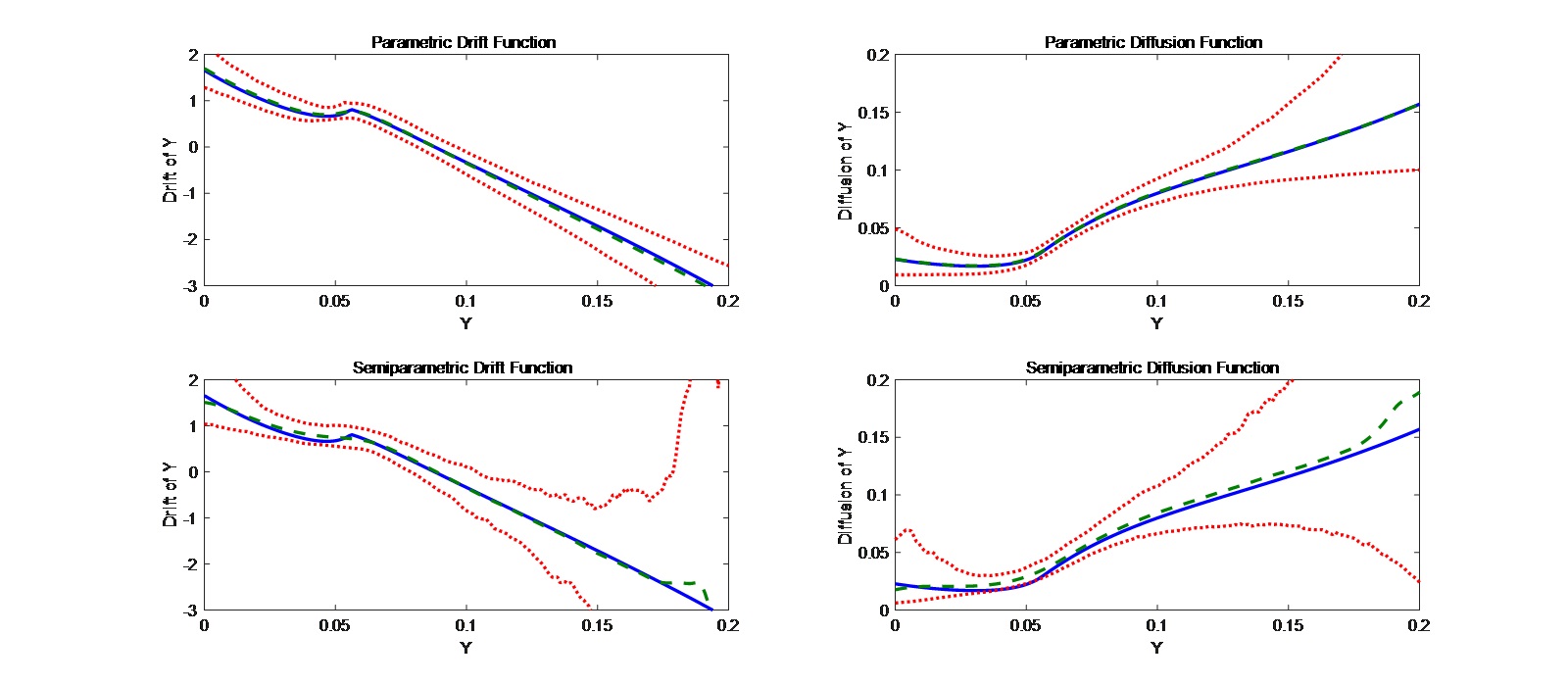}
	\caption[]{Estimated Drift and Diffusion for the OU-SKST Model ($T=2202$) . \newline {\small Solid = True Function, Dashed = Mean of Estimates, Dotted = 95\% Confidence Bands}}
	\label{fig:ouskst2202}
\end{figure}

\begin{figure}
	\centering
	\includegraphics[width=0.7\linewidth]{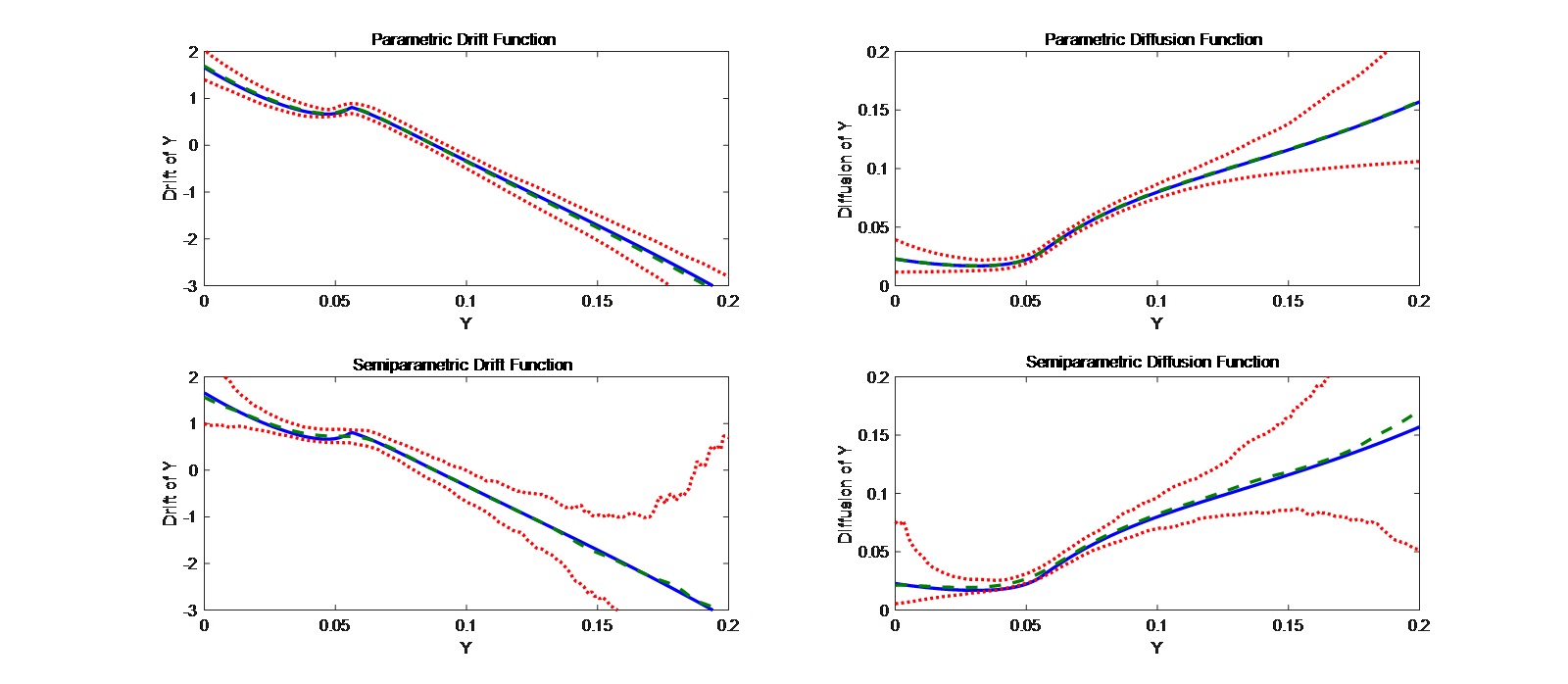}
	\caption[]{Estimated Drift and Diffusion for the OU-SKST Model ( $T=5505$). \newline {\small Solid = True Function, Dashed = Mean of Estimates, Dotted = 95\% Confidence Bands}}
	\label{fig:ouskst5505}
\end{figure}

\begin{figure}
	\centering
	\includegraphics[width=0.7\linewidth]{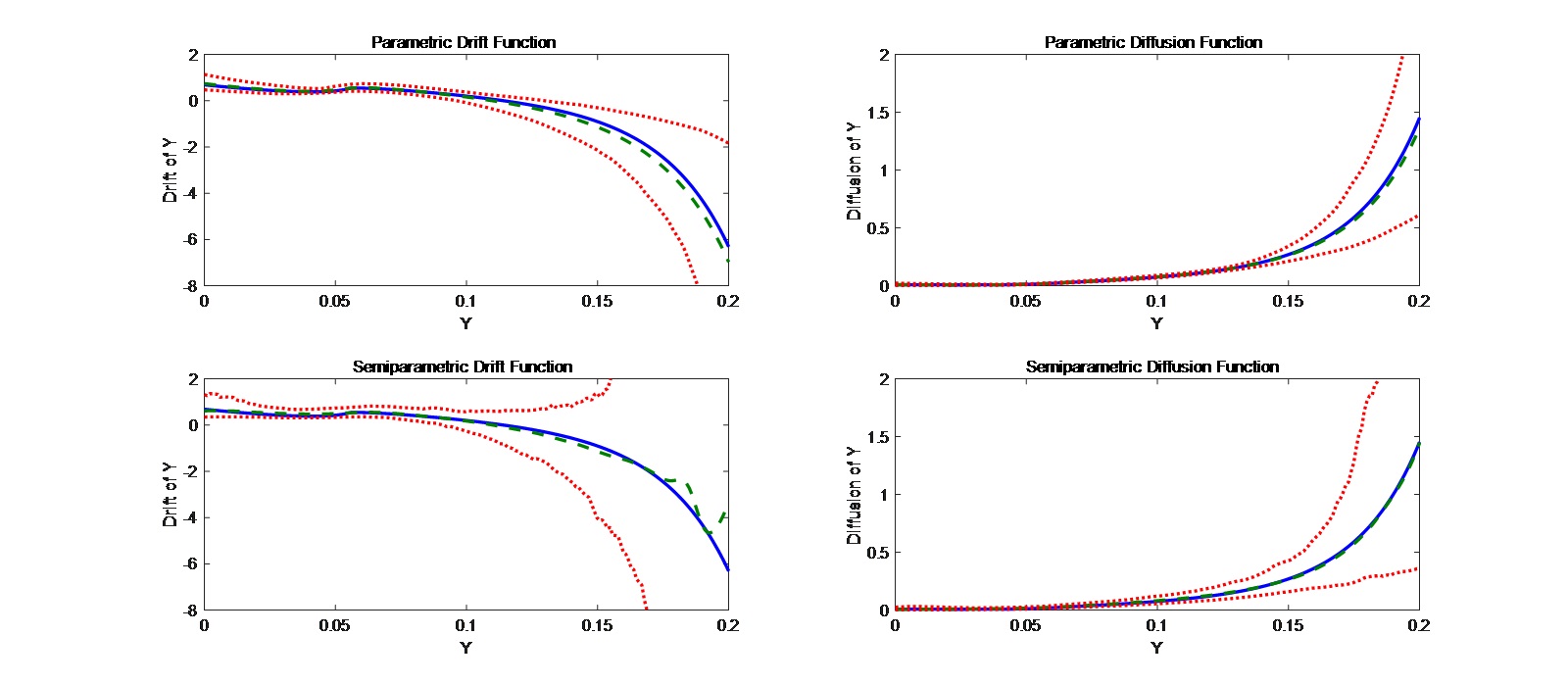}
	\caption[]{Estimated Drift and Diffusion for the CIR-SKST Model ($ T=2202$). \newline {\small Solid = True Function, Dashed = Mean of Estimates, Dotted = 95\% Confidence Bands}}
	\label{fig:cirskst2202}
\end{figure}

\begin{figure}
	\centering
	\includegraphics[width=0.7\linewidth]{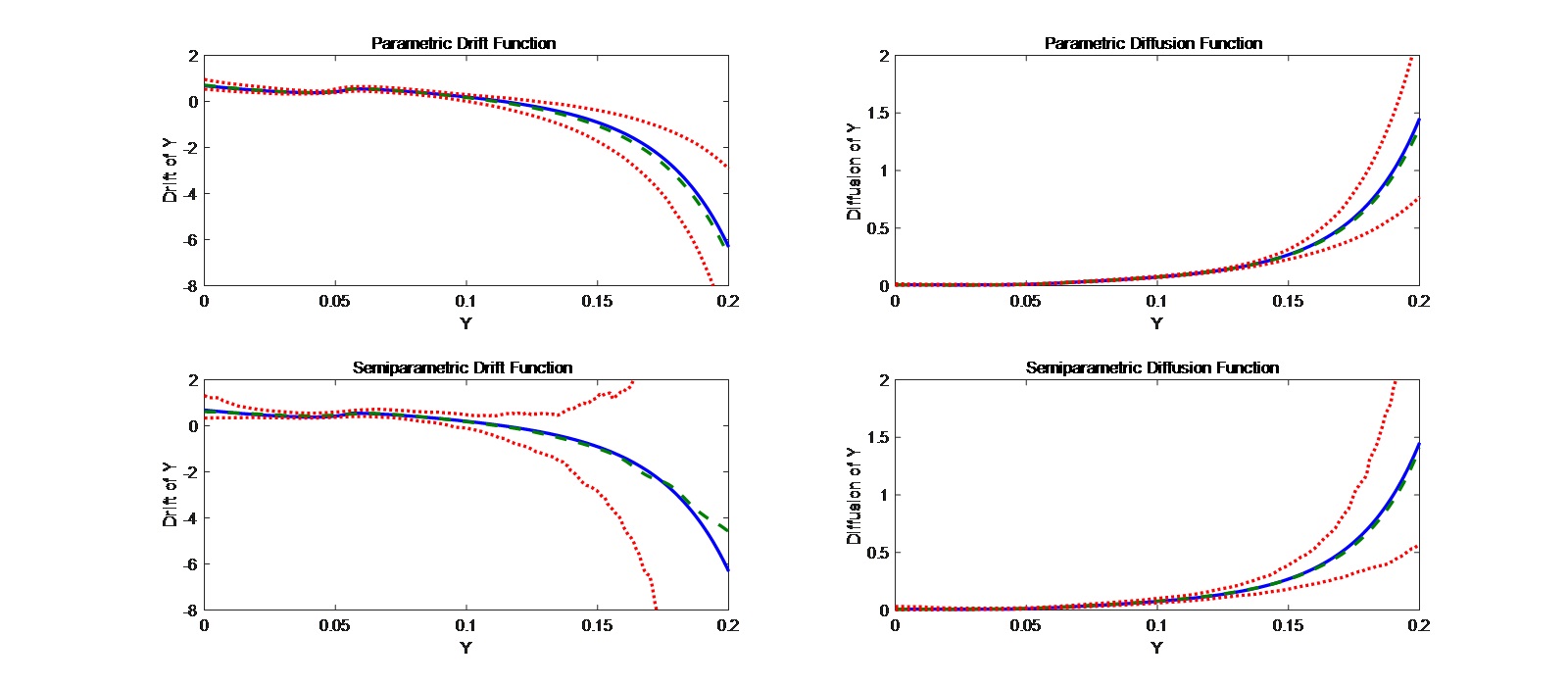}
	\caption[]{Estimated Drift and Diffusion for the CIR-SKST Model ($ T=5505$). \newline {\small Solid = True Function, Dashed = Mean of Estimates, Dotted = 95\% Confidence Bands}}
	\label{fig:cirskst5505}
\end{figure}

\begin{figure}
	\centering
	\includegraphics[width=0.7\linewidth]{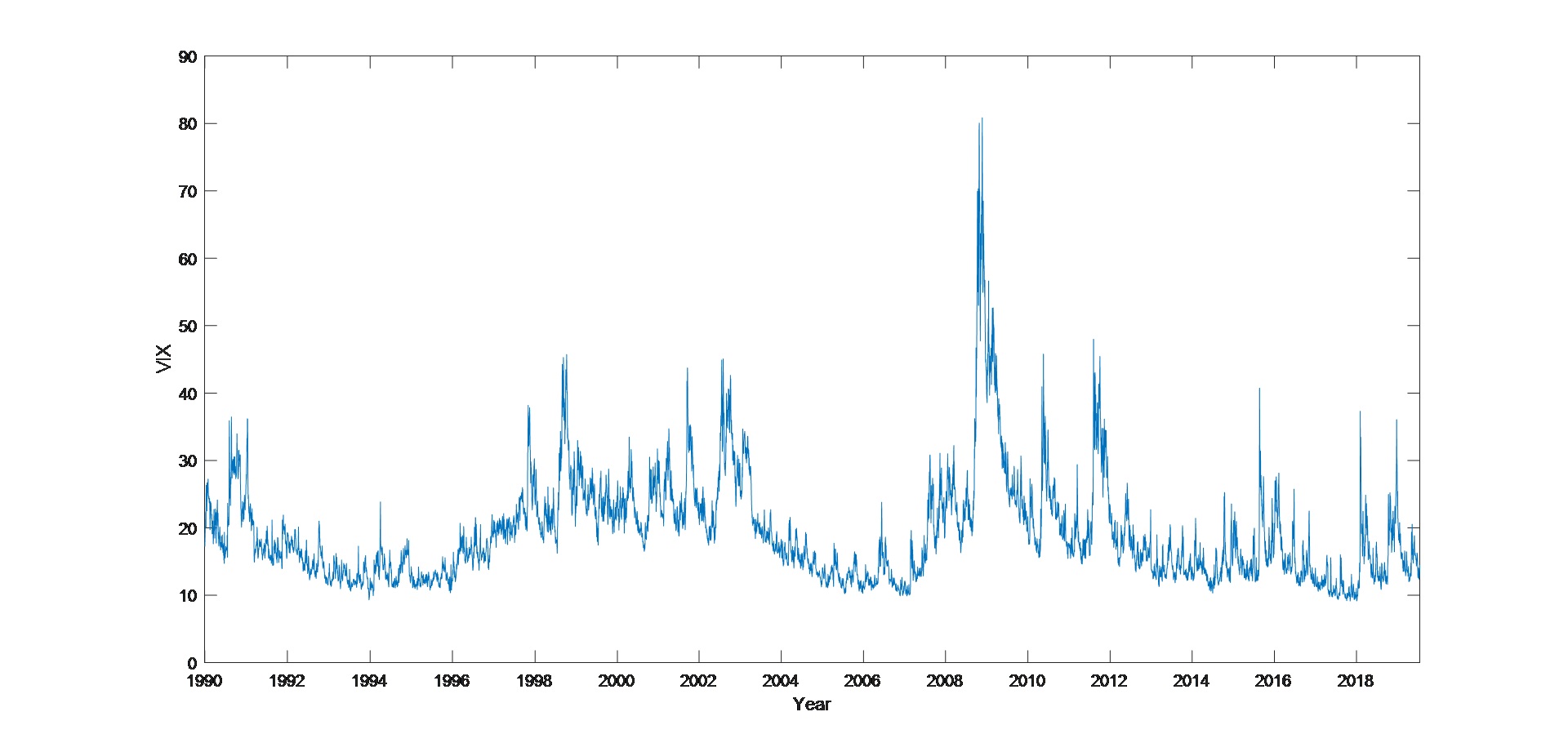}
	\caption[]{Time Series of Daily VIX}
	\label{fig:q9i8lf00}
\end{figure}

\begin{figure}
	\centering
	\includegraphics[width=0.7\linewidth]{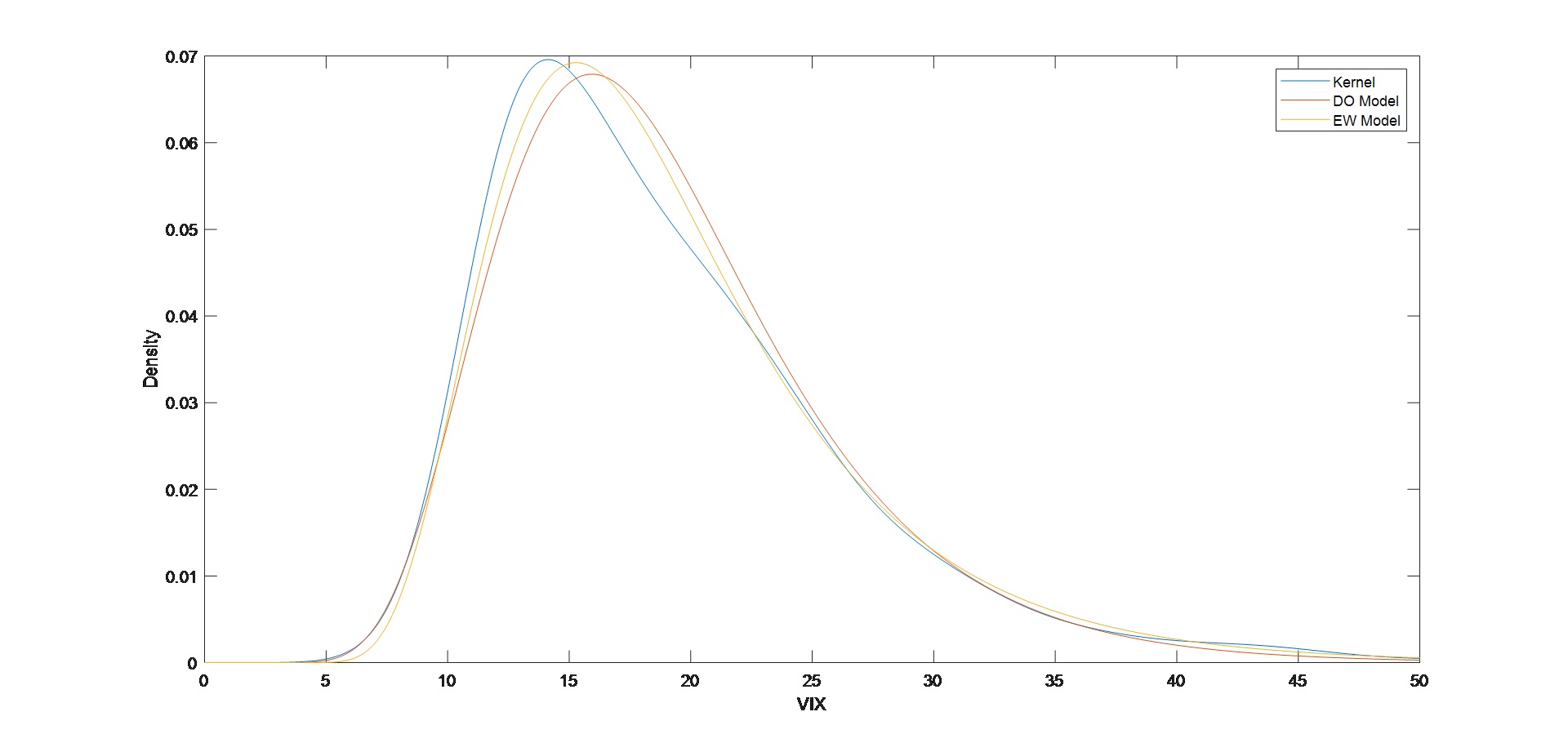}
	\caption[]{Estimated Marginal Densities of Daily VIX}
	\label{fig:q9i8lf01}
\end{figure}

\end{center}

\end{document}